
\documentclass[fleqn]{llncs}
\usepackage{nz}

\pagestyle{plain}
\raggedbottom

\title{On the Complexity of the Correctness Problem for Non-Zeroness 
       Test Instruction Sequences}
\author{J.A. Bergstra \and C.A. Middelburg}
\institute{Informatics Institute, Faculty of Science, University of
           Amsterdam \\
           Science Park~904, 1098~XH Amsterdam, the Netherlands \\
           \email{J.A.Bergstra@uva.nl,C.A.Middelburg@uva.nl}}

\begin{document}

\maketitle

\begin{abstract}
This paper concerns the question to what extent it can be efficiently 
determined whether an arbitrary program correctly solves a given 
problem.
This question is investigated with programs of a very simple form, 
namely instruction sequences, and a very simple problem, namely the 
non-zeroness test on natural numbers.
The instruction sequences concerned are of a kind by which, for each 
$n > 0$, each function from $\set{0,1}^n$ to $\set{0,1}$ can 
be computed.
The established results include the time complexities of the problem of 
determining whether an arbitrary instruction sequence correctly 
implements the restriction to $\set{0,1}^n$ of the function from 
$\set{0,1}^*$ to $\set{0,1}$ that models the non-zeroness test function, 
for $n > 0$, under several restrictions on the arbitrary instruction 
sequence.
\begin{keywords} 
non-zeroness test, instruction sequence, correctness problem, 
computational complexity.
\end{keywords}%
\begin{classcode}
D.2.4; F.1.1; F.1.3.
\end{classcode}
\end{abstract}

\section{Introduction}
\label{sect-intro}

For each $n > 0$, each function from $\set{0,1}^n$ to $\set{0,1}$ can be 
computed by a finite instruction sequence that contains only 
instructions to set and get the content of Boolean registers, forward 
jump instructions, and a termination instruction.
It is an intuitively evident fact that the correctness of an arbitrary 
instruction sequence of this kind as an implementation of the 
restriction to $\set{0,1}^n$ of a given function from $\set{0,1}^*$ to 
$\set{0,1}$, for $n > 0$, cannot be efficiently determined.
In this paper, we investigate under what restrictions on the arbitrary 
instruction sequence the correctness can be efficiently determined in 
the case that the given function is the function that models the 
non-zeroness test on natural numbers with respect to their binary 
representations.
To our knowledge, there are no previous investigations of this kind.

One of the main results of this work 
(Theorem~\ref{theorem-complexity-shortest-linear}) states, roughly, that 
the problem of determining the correctness of an arbitrary instruction 
sequence as an implementation of the restriction to $\set{0,1}^n$ of the 
function from $\set{0,1}^*$ to $\set{0,1}$ that models the non-zeroness 
test function, for $n > 0$, is co-NP-complete, even under the 
restriction on the arbitrary instruction sequence that its length 
depends linearly on $n$. 
Another of the main results of this work 
(Theorem~\ref{theorem-complexity-shortest-const-2}) states, roughly, 
that this problem can be decided in time polynomial in $n$ under the 
restriction on the arbitrary instruction sequence that its length is the 
length of the shortest possible correct implementations plus a constant 
amount and that it has a certain form.
We expect that similar results can be established for many other 
functions, but possibly at considerable effort.

The question to what extent it can be efficiently determined whether an 
arbitrary program correctly solves a given problem is of importance to 
programming.
We have chosen to investigate this question but, to our knowledge, there 
does not exist literature about it.
This made us decide to start our investigation with programs of a very 
simple form, namely instruction sequences, and a very simple problem, 
namely the non-zeroness problem.
Moreover, we decided to conduct our investigation as an application of 
program algebra, the algebraic theory of instruction sequences that we 
have developed (see below).

Instruction sequences are programs in their simplest form.
Therefore, it is to be expected that it is somehow easier to understand 
the concept of an instruction sequence than to understand the concept of 
a program. 
The first objective of our work on instruction sequences that started 
with~\cite{BL02a}, and of which an enumeration is available 
at~\cite{SiteIS}, is to understand the concept of a program.
The basis of all this work is an algebraic theory of instruction 
sequences, called program algebra, and an algebraic theory of 
mathematical objects that represent in a direct way the behaviours 
produced by instruction sequences under execution, called basic thread 
algebra.%
\footnote
{Both program algebra and basic thread algebra were first introduced 
 in~\cite{BL02a}, but in that paper the latter was introduced under the 
 name basic polarized process algebra.}
The body of theory developed through this work is such that its use as a
conceptual preparation for programming is practically feasible.

The notion of an instruction sequence appears in the work concerned as 
a mathematical abstraction for which the rationale is based on the 
objective mentioned above. 
In this capacity, instruction sequences constitute a primary field of 
investigation in programming comparable to propositions in logic and 
rational numbers in arithmetic. 
The structure of the mathematical abstraction at issue has been 
determined in advance with the hope of applying it in diverse 
circumstances where in each case the fit may be less than perfect.
Until now, this work has, among other things, yielded an approach to 
computational complexity where program size is used as complexity 
measure, a contribution to the conceptual analysis of the notion of an 
algorithm, and new insights into such diverse issues as the halting 
problem, garbage collection, program parallelization for the purpose of 
explicit multi-threading and virus detection.

Like in the work on computational complexity (see~\cite{BM13a,BM14e}) 
and the work on algorithmic equivalence of programs (see~\cite{BM14a}) 
referred to above, in the work presented in this paper, use is made of 
the fact that, for each $n > 0$, each function from $\set{0,1}^n$ to 
$\set{0,1}$ can be computed by a finite instruction sequence that 
contains only instructions to set and get the content of Boolean 
registers, forward jump instructions, and a termination instruction.
Program algebra is parameterized by a set of uninterpreted basic 
instructions.
In applications of program algebra, this set is instantiated by a set of 
interpreted basic instructions.
In a considerable part of the work belonging to our work on instruction 
sequences that started with~\cite{BL02a}, the interpreted basic 
instructions are instructions to set and get the content of Boolean 
registers.

This paper is organized as follows.
First, we survey program algebra and the particular fragment and 
instantiation of it that is used in this paper (Section~\ref{sect-PGA}).
Next, we present a simple non-zeroness test instruction sequence   
(Section~\ref{sect-ISNZ-natural}).
After that, as a preparation for establishing the main results, we first
present a non-zeroness test instruction sequence whose length is minimal 
(Section~\ref{sect-ISNZ-shortest}) and then introduce the set of all 
non-zeroness test instruction sequences of minimal length 
(Section~\ref{sect-ISNZ-shortest-all}).
Following this, we study the time complexity of several restrictions of 
the problem of deciding whether an arbitrary instruction sequence 
correctly implements the restriction to $\set{0,1}^n$ of the function 
from $\set{0,1}^*$ to $\set{0,1}$ that models the non-zeroness test 
function, for $n > 0$ (Section~\ref{sect-complexity}).
Finally, we make some concluding remarks (Section~\ref{sect-concl}).

As mentioned earlier, to our knowledge, there is no previous work that 
addresses a question similar to the question to what extent it can be 
efficiently determined whether an arbitrary program correctly solves a 
given problem.
For this reason, there is no mention of related work in this paper.

The following should be mentioned in advance.
The set $\Bool$ of Boolean values is a set with two elements whose 
intended interpretations are the truth values \emph{false} and 
\emph{true}.
As is common practice, we represent the elements of $\Bool$ by the bits 
$0$ and $1$ and we identify the elements of $\Bool$ with their 
representation where appropriate.

This paper draws somewhat from the preliminaries of earlier papers that 
built on program algebra.
The most recent one of those papers is~\cite{BM17a}.

\section{Preliminaries: instruction sequences and computation}
\label{sect-PGA}

In this section, we present a brief outline of \PGA\ (ProGram Algebra) 
and the particular fragment and instantiation of it that is used in this 
paper.
A mathematically precise treatment of this particular case can be found 
in~\cite{BM13a}.

The starting-point of \PGA\ is the simple and appealing perception
of a sequential program as a single-pass instruction sequence, i.e., a
finite or infinite sequence of instructions each of which is executed at 
most once and can be dropped after it has been executed or jumped over.

It is assumed that a fixed but arbitrary set $\BInstr$ of
\emph{basic instructions} has been given.
The intuition is that the execution of a basic instruction may modify a 
state and produces a reply at its completion.
The possible replies are $\False$ and $\True$.
The actual reply is generally state-dependent.
Therefore, successive executions of the same basic instruction may
produce different replies.
The set $\BInstr$ is the basis for the set of instructions of which the 
instruction sequences considered in \PGA\ are composed.
The elements of the latter set are called \emph{primitive instructions}.
There are five kinds of primitive instructions, which are listed below:
\begin{itemize}
\item
for each $a \in \BInstr$, a \emph{plain basic instruction} $a$;
\item
for each $a \in \BInstr$, a \emph{positive test instruction} $\ptst{a}$;
\item
for each $a \in \BInstr$, a \emph{negative test instruction} $\ntst{a}$;
\item
for each $l \in \Nat$, a \emph{forward jump instruction} $\fjmp{l}$;
\item
a \emph{termination instruction} $\halt$.
\end{itemize}
We write $\PInstr$ for the set of all primitive instructions.

On execution of an instruction sequence, these primitive instructions
have the following effects:
\begin{itemize}
\item
the effect of a positive test instruction $\ptst{a}$ is that basic
instruction $a$ is executed and execution proceeds with the next
primitive instruction if $\True$ is produced and otherwise the next
primitive instruction is skipped and execution proceeds with the
primitive instruction following the skipped one --- if there is no
primitive instruction to proceed with,
inaction occurs;
\item
the effect of a negative test instruction $\ntst{a}$ is the same as
the effect of $\ptst{a}$, but with the role of the value produced
reversed;
\item
the effect of a plain basic instruction $a$ is the same as the effect
of $\ptst{a}$, but execution always proceeds as if $\True$ is produced;
\item
the effect of a forward jump instruction $\fjmp{l}$ is that execution
proceeds with the $l$th next primitive instruction of the instruction
sequence concerned --- if $l$ equals $0$ or there is no primitive
instruction to proceed with, inaction occurs;
\item
the effect of the termination instruction $\halt$ is that execution
terminates.
\end{itemize}
Inaction occurs if no more basic instructions are executed, but 
execution does not terminate.

To build terms, \PGA\ has a constant for each primitive instruction and 
two operators. 
These operators are: the binary concatenation operator ${} \conc {}$ and 
the unary repetition operator ${}\rep$.
We use the notation $\Conc{i = k}{n} P_i$, 
where $k \leq n$ and $P_k,\ldots,P_n$ are \PGA\ terms, 
for the \PGA\ term $P_k \conc \ldots \conc P_n$.
We use the convention that $P \conc \Conc{i = k}{n} P_i$ and 
$\Conc{i = k}{n} P_i \conc P$ stand for $P$ if $k > n$.

The instruction sequences that concern us in the remainder of this paper 
are the finite ones, i.e., the ones that can be denoted by \PGA\ terms 
without variables in which the repetition operator does not occur. 
Moreover, the basic instructions that concern us are instructions to set 
and get the content of Boolean registers.
More precisely, we take the set
\begin{ldispl}
\set{\inbr{i}.\getbr \where i \in \Natpos} \union
\set{\outbr.\setbr{b} \where b \in \Bool}
\\ \;\; {} \union
\set{\auxbr{i}.\getbr \where i \in \Natpos} \union
\set{\auxbr{i}.\setbr{b} \where i \in \Natpos \Land b \in \Bool}
\end{ldispl}%
as the set $\BInstr$ of basic instructions.%
\footnote
{We write $\Natpos$ for the set $\set{n \in \Nat \where n \geq 1}$ of
positive natural numbers.}

Each basic instruction consists of two parts separated by a dot.
The part on the left-hand side of the dot plays the role of the name of 
a Boolean register and the part on the right-hand side of the dot plays 
the role of a command to be carried out on the named Boolean register.
The names are employed as follows:
\begin{itemize}
\item
for each $i \in \Natpos$,
$\inbr{i}$ serves as the name of the Boolean register that is used as 
$i$th input register in instruction sequences;
\item
$\outbr$ serves as the name of the Boolean register that is used as
output register in instruction sequences;
\item
for each $i \in \Natpos$,
$\auxbr{i}$ serves as the name of the Boolean register that is used as 
$i$th auxiliary register in instruction sequences.
\end{itemize}
On execution of a basic instruction, the commands have the following 
effects:
\begin{itemize}
\item
the effect of $\getbr$ is that nothing changes and the reply is the 
content of the named Boolean register;
\item
the effect of $\setbr{\False}$ is that the content of the named Boolean 
register becomes $\False$ and the reply is $\False$;
\item
the effect of $\setbr{\True}$ is that the content of the named Boolean 
register becomes $\True$ and the reply is $\True$.
\end{itemize}

We write $\ISbr$ for the set of all instruction sequences that can 
be denoted by a \PGA\ term without variables in which the repetition 
operator does not occur in the case that $\BInstr$ is taken as specified 
above.
$\ISbr$ is the set of all instruction sequences that matter in the 
remainder of this paper.

We write $\len(X)$, where $X \in \ISbr$, for the length of $X$.

Let $n \in \Nat$, let $\funct{f}{\Bool^n}{\Bool}$, and 
let $X \in \ISbr$.
Then $X$ \emph{computes} $f$ if there exists a $k \in \Nat$ such that, 
for all $b_1,\ldots,b_n \in \Bool$, on execution of $X$ in an 
environment with input registers $\inbr{1},\ldots,\inbr{n}$, output 
register $\outbr$, and auxiliary registers $\auxbr{1},\ldots,\auxbr{k}$,
if 
\begin{itemize} 
\item 
for each $i \in \set{1,\ldots,n}$, 
the content of register $\inbr{i}$ is $b_i$ when execution starts;
\item
the content of register $\outbr$ is $\False$ when execution starts; 
\item 
for each $i \in \set{1,\ldots,k}$, 
the content of register $\auxbr{i}$ is $\False$ when execution starts;
\end{itemize}
then execution terminates and the content of register $\outbr$ is 
$f(b_1,\ldots,b_n)$ when execution terminates.

We conclude these preliminaries with some terminology and notations that 
are used in the rest of this paper.

We refer to the content of a register when execution starts as the 
\emph{initial} content of the register and we refer to the content of 
a register when execution terminates as the \emph{final} content of 
the register.

The primitive instructions of the forms $\ptst{\inbr{i}.\getbr}$ and 
$\ntst{\inbr{i}.\getbr}$ are called \emph{read instructions}.
For a read instruction $u$, the input register whose name appears in $u$ 
is said to be the input register that is \emph{read by} $u$.
For an $X \in \ISbr$ and $m \in \Natpos$, an input register that is read 
by $m$ occurrences of a read instruction in $X$ is said to be an input 
register that is \emph{read $m$ times in} $X$.
For an $X \in \ISbr$, an occurrence of a read instruction in $X$ that is 
neither immediately preceded nor immediately followed by a read 
instruction is said to be an \emph{isolated read instruction} of $X$.
For an $X \in \ISbr$, an occurrence of two read instructions in a row in 
$X$ that is neither immediately preceded nor immediately followed by a 
read instruction is said to be a \emph{read instruction pair} of $X$.
We write $\iregs(X)$, where $X \in \ISbr$, for the set of all 
$i \in \Natpos$ such that $\inbr{i}$ is read by some occurrence of a 
read instruction in $X$.

Take an instruction sequence $X \in \ISbr$ and a function 
$\funct{f}{\Bool^n}{\Bool}$ ($n \in \Nat$) such that $X$ computes $f$.
Modify $X$ by replacing all occurrences of $\ptst{\inbr{i}.\getbr}$ by 
$\fjmp{1}$, all occurrences of $\ntst{\inbr{i}.\getbr}$ by $\fjmp{2}$, 
and, for each $j > i$, all occurrences of the register name $\inbr{j}$ 
by $\inbr{j{-}1}$.
Then the resulting instruction sequence computes the function 
$\funct{f'}{\Bool^{n-1}}{\Bool}$ defined by 
$f'(x_1,\ldots,x_{n-1}) =
 f(x_1,\ldots,x_{i-1},1,x_{i+1},\ldots,x_{n-1})$.
If the occurrences of $\ptst{\inbr{i}.\getbr}$ are replaced by 
$\fjmp{2}$ instead of $\fjmp{1}$ and the occurrences of 
$\ntst{\inbr{i}.\getbr}$ is replaced by $\fjmp{1}$ instead of 
$\fjmp{2}$, then the resulting instruction sequence computes the 
function $\funct{f''}{\Bool^{n-1}}{\Bool}$ defined by 
$f''(x_1,\ldots,x_{n-1}) =
 f(x_1,\ldots,x_{i-1},0,x_{i+1},\ldots,x_{n-1})$.
Such register elimination and its generalization from one register to 
multiple registers are used a number of times in this paper.
A notation for register elimination is introduced in the next paragraph.

For an $X \in \ISbr$ and a function $\alpha$ from a finite subset of
$\Natpos$ to $\Bool$ such that
$\iregs(X) = \set{i \in \Natpos \where i \leq n}$ for some 
$n \in \Natpos$ and $\dom(\alpha)$ is a proper subset of $\iregs(X)$,  
we write $X_\alpha$ for the instruction sequence obtained from $X$ 
by replacing, for each $i \in \dom(\alpha)$,  
all occurrences of $\ptst{\inbr{i}.\getbr}$ by $\fjmp{1}$ if 
$\alpha(i) = 1$ and by $\fjmp{2}$ if $\alpha(i) = 0$,
all occurrences of $\ntst{\inbr{i}.\getbr}$ by $\fjmp{2}$ if 
$\alpha(i) = 1$ and by $\fjmp{1}$ if $\alpha(i) = 0$, and, 
for each $j \in \iregs(X) \diff \dom(\alpha)$, 
all occurrences of the register name $\inbr{j}$ by $\inbr{\beta(j)}$,
where $\beta$ is the unique bijection from 
$\iregs(X) \diff \dom(\alpha)$ to 
$\set{i \in \Natpos \where 
 i \leq n - \card(\dom(\alpha))}$ such that, 
for all $i,j \in \iregs(X) \diff \dom(\alpha)$ with $i \leq j$, 
$\beta(i) \leq \beta(j)$.
For an $X \in \ISbr$ and an $i \in \Natpos$ such that
$\iregs(X) = \linebreak[2] \set{i \in \Natpos \where i \leq n}$ for some 
$n \in \Natpos$ and $i \in \iregs(X)$,  
we write $X[\inbr{i} = b]$ for $X_\alpha$, where $\alpha$ is the 
function from $\set{i}$ to $\Bool$ defined by $\alpha(i) = b$.

Register elimination is reminiscent of gate elimination as used in
much work on circuit lower bounds (see e.g.\ Chapter~16 of~\cite{Weg05a}).

\section{Simple non-zeroness test instruction sequences}
\label{sect-ISNZ-natural}

The remainder of the paper goes into programming by means of instruction 
sequences of the kind introduced in Section~\ref{sect-PGA}. 
We consider the programming of a function from $\Bool^*$ to $\Bool$ that 
models a particular function from $\Nat$ to $\Bool$ with respect to 
the binary representations of the natural numbers by elements from 
$\Bool^*$.
The particular function is the \emph{non-zeroness test} function 
$\tstnza$ defined by the equations $\tstnza(0) = 0$ and 
$\tstnza(k + 1) = 1$.
In this section, we present a simple instruction sequence computing the 
restriction to $\Bool^n$ of the function from $\Bool^*$ to $\Bool$ that 
models this function, for $n > 0$.

$\tstnzc{n}$, the restriction to $\Bool^n$ of the function from 
$\Bool^*$ to $\Bool$ that models $\tstnza$, is defined by
\begin{ldispl}
\tstnzc{n}(b_1,\ldots,b_n) = 1 \;\;\mathrm{iff}\;\;
b_1 = 1 \;\mathrm{or}\; \ldots \;\mathrm{or}\; b_n = 1\;.
\end{ldispl}%

We define an instruction sequence $\TSTNZ{n}$ which is intended to 
compute $\tstnzc{n}$ as follows:
\begin{ldispl}
\TSTNZ{n} \deq 
\Conc{i = 1}{n} (\ptst{\inbr{i}.\getbr} \conc \outbr.\setbr{\True})
 \conc \halt\;.
\end{ldispl}%

The following proposition states that the instruction sequence 
$\TSTNZ{n}$ correctly implements $\tstnzc{n}$.
\begin{proposition}
\label{prop-NZTIS-correct}
For each $n \in \Natpos$, $\TSTNZ{n}$ computes $\tstnzc{n}$.
\end{proposition}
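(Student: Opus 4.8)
The plan is to argue directly by executing $\TSTNZ{n}$ on an arbitrary input and tracking the content of $\outbr$. Fix $n \in \Natpos$ and take the witnessing number of auxiliary registers to be $k = 0$; note that $\TSTNZ{n}$ contains no auxiliary-register instructions, no negative test instructions, and no jump instructions, so the only state component that can change during execution is the content of $\outbr$, and it can only ever be set to $\True$. Consider inputs $b_1,\ldots,b_n \in \Bool$ with $\inbr{i}$ initially holding $b_i$ and $\outbr$ initially holding $\False$. Execution proceeds through the blocks $\ptst{\inbr{i}.\getbr} \conc \outbr.\setbr{\True}$ for $i = 1, \ldots, n$ in turn: at block $i$, the test instruction reads $b_i$; if $b_i = \True$ the following $\outbr.\setbr{\True}$ is executed (setting $\outbr$ to $\True$) and execution continues with block $i+1$; if $b_i = \False$ that set instruction is skipped and execution continues with block $i+1$ as well. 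After block $n$, execution reaches $\halt$ and terminates.

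The key observation is then a simple invariant: after the $i$th block has been executed, the content of $\outbr$ equals $1$ if and only if $b_j = 1$ for some $j \leq i$. This is established by induction on $i$: the base case $i = 0$ holds because $\outbr$ starts at $\False$ and the empty disjunction is false; for the induction step, block $i$ sets $\outbr$ to $\True$ exactly when $b_i = 1$ and otherwise leaves it unchanged, so after block $i$ the content is $1$ iff it was already $1$ (i.e.\ some $b_j = 1$ with $j \leq i-1$) or $b_i = 1$, which is equivalent to $b_j = 1$ for some $j \leq i$. Taking $i = n$, execution terminates with $\outbr$ holding $1$ iff $b_1 = 1$ or $\ldots$ or $b_n = 1$, which by definition is exactly $\tstnzc{n}(b_1,\ldots,b_n)$. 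Hence $\TSTNZ{n}$ computes $\tstnzc{n}$.

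There is no real obstacle here; the statement is essentially a sanity check on the construction. The only thing that needs a little care is making the informal "execution proceeds through the blocks in turn" precise enough to justify the invariant — in particular, confirming that a skipped $\outbr.\setbr{\True}$ hands control to the first instruction of the next block rather than somewhere unexpected, which is immediate from the stated effect of a positive test instruction since each block has exactly the two instructions $\ptst{\inbr{i}.\getbr}$ and $\outbr.\setbr{\True}$ — and noting that termination always occurs because control necessarily falls through to the final $\halt$.
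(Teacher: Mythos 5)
Your proof is correct and follows essentially the same route as the paper's: both establish, by induction, that after the first $i$ blocks the content of $\outbr$ equals $1$ iff some $b_j = 1$ with $j \leq i$, and both note that control falls through to $\halt$. The only cosmetic difference is that the paper phrases the induction over $n$ (comparing $\TSTNZ{n+1}$ with $\TSTNZ{n}$) whereas you phrase it as an invariant over block prefixes of a fixed $\TSTNZ{n}$.
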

\begin{proof}
We prove this proposition by induction on $n$.
The basis step consists of proving that $\TSTNZ{1}$ computes 
$\tstnzc{1}$. 
This follows easily by a case distinction on the content of $\inbr{1}$. 
The inductive step is proved in the following way.
It follows directly from the induction hypothesis that on execution of 
$\TSTNZ{n+1}$, after 
$\Conc{i = 1}{n} (\ptst{\inbr{i}.\getbr} \conc \outbr.\setbr{\True})$ 
has been executed, 
(a)~the content of $\outbr$ equals $1$ iff the content of at least one 
of the input registers $\inbr{1},...,\inbr{n}$ equals $1$ and 
(b)~execution proceeds with the next instruction.
From this, it follows easily by a case distinction on the content of 
$\inbr{n{+}1}$ that $\TSTNZ{n+1}$ computes $\tstnzc{n+1}$. 
\qed
\end{proof}
 
The length of the instruction sequence $\TSTNZ{n}$ defined above is as 
follows:
\begin{ldispl}
\len(\TSTNZ{n}) = 2 \mul n + 1\;.
\end{ldispl}%

$\TSTNZ{n}$ is a simple instruction sequence to compute $\tstnzc{n}$.
It computes $\tstnzc{n}$ by checking all input registers.
This is rather inefficient because, once an input register is 
encountered whose content is $1$, checking of the remaining input 
registers can be skipped.
$\TSTNZ{n}$ does, moreover, not belong to the shortest instruction 
sequences computing $\tstnzc{n}$.
The shortest instruction sequences computing $\tstnzc{n}$ are the 
subject of Sections~\ref{sect-ISNZ-shortest} 
and~\ref{sect-ISNZ-shortest-all}.

\section{Shortest non-zeroness test instruction sequences}
\label{sect-ISNZ-shortest}

For $i \in \Natpos$, we have that execution of the instruction sequences 
denoted by 
$\ptst{\inbr{i}.\getbr} \conc \outbr.\setbr{\True} \conc 
 \ptst{\inbr{i{+}1}.\getbr} \conc \outbr.\setbr{\True} \conc \halt$ 
and 
$\ntst{\inbr{i}.\getbr} \conc
 \ptst{\inbr{i{+}1}.\getbr} \conc \outbr.\setbr{\True} \conc \halt$ 
yield the same final content of $\outbr$ for all initial contents of 
$\inbr{i}$ and $\inbr{i{+}1}$.
In this section, we present an instruction sequence $\TSTNZp{n}$ which 
can be considered an adaptation of $\TSTNZ{n}$ based on this fact.
There are no instruction sequences shorter than $\TSTNZp{n}$ that 
compute $\tstnzc{n}$.
Section~\ref{sect-ISNZ-shortest-all} is concerned with the set of all
instruction sequences of the same length as $\TSTNZp{n}$ that compute 
$\tstnzc{n}$.

We define an instruction sequence $\TSTNZp{n}$ which is intended to 
compute $\tstnzc{n}$ as follows:
\begin{ldispl}
\TSTNZp{n} \deq {}
\\ \quad
\left\{
\begin{array}{l@{}}
\Conc{i = 1}{n/2} 
 (\ntst{\inbr{2{\mul}i{-}1}.\getbr} \conc
  \ptst{\inbr{2{\mul}i}.\getbr} \conc \outbr.\setbr{\True}) \conc 
\halt \\
\mathrm{if} \; n \mathrm{\;is\;even},
\eqnsep
\ptst{\inbr{1}.\getbr} \conc \outbr.\setbr{\True} \conc
\Conc{i = 1}{(n-1)/2} 
 (\ntst{\inbr{2{\mul}i}.\getbr} \conc
  \ptst{\inbr{2{\mul}i{+}1}.\getbr} \conc \outbr.\setbr{\True}) \conc 
\halt  \\
\mathrm{if} \; n \mathrm{\;is\;odd}.
\end{array}
\right.
\end{ldispl}%

The following proposition states that the instruction sequence 
$\TSTNZp{n}$ correctly implements $\tstnzc{n}$.
\begin{proposition}
\label{prop-NZTISp-correct}
For each $n \in \Natpos$, $\TSTNZp{n}$ computes $\tstnzc{n}$.
\end{proposition}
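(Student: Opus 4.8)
The plan is to prove the two cases ($n$ even and $n$ odd) more or less in parallel, in each case by induction on the number of ``blocks'' of the form $\ntst{\inbr{j}.\getbr} \conc \ptst{\inbr{j+1}.\getbr} \conc \outbr.\setbr{\True}$ that precede the $\halt$. The key observation, already stated in the text preceding the proposition, is the equivalence of execution behaviour between $\ptst{\inbr{i}.\getbr} \conc \outbr.\setbr{\True} \conc \ptst{\inbr{i+1}.\getbr} \conc \outbr.\setbr{\True}$ and $\ntst{\inbr{i}.\getbr} \conc \ptst{\inbr{i+1}.\getbr} \conc \outbr.\setbr{\True}$ with respect to the final content of $\outbr$, for all initial contents of $\inbr{i}$ and $\inbr{i+1}$. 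So the real content of the proof is a careful case analysis of how one such block behaves, after which the induction is routine bookkeeping.

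First I would establish the single-block lemma: on execution of $\ntst{\inbr{j}.\getbr} \conc \ptst{\inbr{j+1}.\getbr} \conc \outbr.\setbr{\True}$ starting with $\outbr$ containing some value $c$, execution proceeds, after the block, with the instruction immediately following the block, and the final content of $\outbr$ equals $1$ iff $c = 1$ or the content of $\inbr{j}$ equals $1$ or the content of $\inbr{j+1}$ equals $1$. This is a three-way case distinction on the contents of $\inbr{j}$ and $\inbr{j+1}$: if $\inbr{j}$ is $\False$ the negative test succeeds, so we fall through to $\ptst{\inbr{j+1}.\getbr}$, which either sets $\outbr$ to $\True$ (when $\inbr{j+1}$ is $\True$) or skips the set instruction; if $\inbr{j}$ is $\True$ the negative test fails, so $\ptst{\inbr{j+1}.\getbr}$ is skipped and execution jumps directly to $\outbr.\setbr{\True}$, setting $\outbr$ to $\True$; in all cases one verifies the exit point is the instruction after $\outbr.\setbr{\True}$.

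With the single-block lemma in hand, the inductive step is immediate in both parity cases: after executing the first $m$ blocks (plus the leading $\ptst{\inbr{1}.\getbr} \conc \outbr.\setbr{\True}$ in the odd case, whose effect is likewise a trivial case distinction on $\inbr{1}$), the content of $\outbr$ is $1$ iff at least one of $\inbr{1},\ldots,\inbr{m'}$ has content $1$, where $m' = 2m$ in the even case and $m' = 2m+1$ in the odd case; appending one more block extends this to $m'+2$ by the lemma, and reaching $\halt$ terminates execution with $\outbr$ holding exactly $\tstnzc{n}(b_1,\ldots,b_n)$. Here the required auxiliary-register count $k$ is $0$, since $\TSTNZp{n}$ uses no auxiliary registers.

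I do not anticipate a genuine obstacle; the only point that needs care is the exit-point claim in the single-block lemma (that after a failed negative test, the forward jump over the skipped $\ptst{}$ lands exactly on $\outbr.\setbr{\True}$ and not past it), which is just a matter of counting primitive instructions correctly against the skipping semantics of test instructions. The proof can also legitimately be shortened by simply invoking the behavioural equivalence quoted before the proposition together with Proposition~\ref{prop-NZTIS-correct}: replacing each adjacent pair of blocks of $\TSTNZ{n}$ of the form $\ptst{\inbr{i}.\getbr} \conc \outbr.\setbr{\True} \conc \ptst{\inbr{i+1}.\getbr} \conc \outbr.\setbr{\True}$ by the corresponding $\ntst{}$/$\ptst{}$ block preserves the computed function, and iterating this transformation turns $\TSTNZ{n}$ into $\TSTNZp{n}$.
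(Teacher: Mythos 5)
Your proposal is correct and follows essentially the same route as the paper's proof: an induction whose step adds one block $\ntst{\inbr{j}.\getbr} \conc \ptst{\inbr{j{+}1}.\getbr} \conc \outbr.\setbr{\True}$ and is discharged by a case distinction on the contents of the two registers read in that block, with the invariant that $\outbr$ holds $1$ iff some register read so far holds $1$. Packaging that case distinction as a ``single-block lemma'' (and noting the exit point and $k = 0$) is only a presentational refinement of what the paper does, so no further comparison is needed.
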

\begin{proof}
We split the proof of this proposition into a proof for even $n$ and a 
proof for odd $n$.
The proof for even $n$ goes by induction on $n$.
The basis step consists of proving that $\TSTNZp{2}$ computes 
$\tstnzc{2}$. 
This follows easily by a case distinction on the contents of $\inbr{1}$ 
and $\inbr{2}$. 
The inductive step is proved in the following way.
It follows directly from the induction hypothesis that on execution of 
$\TSTNZp{n+2}$, after 
$\Conc{i = 1}{n/2} 
 (\ntst{\inbr{2{\mul}i{-}1}.\getbr} \conc
  \ptst{\inbr{2{\mul}i}.\getbr} \conc \outbr.\setbr{\True})$ 
has been executed, 
(a)~the content of $\outbr$ equals $1$ iff the content of at least one 
of the input registers $\inbr{1},...,\inbr{n}$ equals $1$ and 
(b)~execution proceeds with the next instruction.
From this, it follows easily by a case distinction on the contents of 
$\inbr{n{+}1}$ and $\inbr{n{+}2}$ that $\TSTNZp{n+2}$ computes 
$\tstnzc{n+2}$. 
The proof for odd $n$ is similar.
\qed
\end{proof}

The length of the instruction sequence $\TSTNZp{n}$ defined above is as 
follows:
\begin{ldispl}
\len(\TSTNZp{n}) = 
\left\{
\begin{array}{l@{\quad}l}
3 \mul \displaystyle\frac{n}{2} + 1 & 
\mathrm{if} \; n \mathrm{\;is\;even},
\eqnsep
3 \mul \displaystyle\frac{n+1}{2}   & 
\mathrm{if} \; n \mathrm{\;is\;odd}. 
\end{array}
\right.
\end{ldispl}%

\begin{proposition}
\label{prop-NZTISp-length}
For each $n \in \Natpos$, we have
$\len(\TSTNZp{1}) = 3$,
$\len(\TSTNZp{n+1}) = \len(\TSTNZp{n}) + 2$ if $n$ is even, and
$\len(\TSTNZp{n+1}) = \len(\TSTNZp{n}) + 1$ if $n$ is odd.
\end{proposition}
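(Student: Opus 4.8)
The plan is to derive all three equalities directly from the closed-form expression for $\len(\TSTNZp{n})$ displayed immediately above the proposition, so no further reasoning about instruction sequences is needed. For the first claim, $1$ is odd, so the odd-case formula gives $\len(\TSTNZp{1}) = 3 \mul (1+1)/2 = 3$. For the recurrences, I would split on the parity of $n$ and, crucially, note that the parity of $n+1$ is the opposite of the parity of $n$, so that the two summands in each equation come from different cases of the closed form.

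More concretely, suppose $n$ is even. Then $\len(\TSTNZp{n}) = 3 \mul n/2 + 1$, and $n+1$ is odd, so $\len(\TSTNZp{n+1}) = 3 \mul ((n+1)+1)/2 = 3 \mul (n+2)/2 = 3 \mul n/2 + 3 = \len(\TSTNZp{n}) + 2$. Now suppose $n$ is odd. Then $\len(\TSTNZp{n}) = 3 \mul (n+1)/2$, and $n+1$ is even, so $\len(\TSTNZp{n+1}) = 3 \mul (n+1)/2 + 1 = \len(\TSTNZp{n}) + 1$. This covers exactly the two recurrence claims.

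I expect there to be no real obstacle here: the whole proof is a pair of one-line arithmetic substitutions together with the observation about alternating parity. The only point worth stating explicitly is that the closed form for $\len(\TSTNZp{n})$ has already been established (it is recorded just before the proposition, and rests on the definition of $\TSTNZp{n}$ together with Proposition~\ref{prop-NZTISp-correct}), so the proof may simply cite it. I would write the proof as a short computation with the even and odd cases treated in turn.

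\begin{proof}
This follows by a straightforward computation from the closed-form
expression for $\len(\TSTNZp{n})$ given above.
Since $1$ is odd, $\len(\TSTNZp{1}) = 3 \mul \frac{1+1}{2} = 3$.
If $n$ is even, then $n+1$ is odd, and hence
$\len(\TSTNZp{n+1}) = 3 \mul \frac{(n+1)+1}{2} = 3 \mul \frac{n}{2} + 3
 = \len(\TSTNZp{n}) + 2$.
If $n$ is odd, then $n+1$ is even, and hence
$\len(\TSTNZp{n+1}) = 3 \mul \frac{n+1}{2} + 1 = \len(\TSTNZp{n}) + 1$.
\qed
\end{proof}
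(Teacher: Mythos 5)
Your proof is correct and takes essentially the same route as the paper: the paper's own proof simply cites the closed-form expression for $\len(\TSTNZp{n})$ (the even and odd cases) and declares the result immediate, while you spell out the same case-by-parity substitutions explicitly. The arithmetic in all three cases checks out.
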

\begin{proof}
This follows immediately from the fact that
$\len(\TSTNZp{n}) = 3 \mul n / 2  + 1$ if $n$ is even
and
$\len(\TSTNZp{n}) = 3 \mul (n+1) / 2$ if $n$ is odd. 
\qed
\end{proof}
Proposition~\ref{prop-NZTISp-length} and the following corollary of this
proposition are used in several proofs to come.
\begin{corollary}
\label{corollary-NZTISp-length}
We have
$\len(\TSTNZp{n}) + 1 \leq \len(\TSTNZp{n+1}) \leq \len(\TSTNZp{n}) + 2$
and $\len(\TSTNZp{n+2}) = \len(\TSTNZp{n}) + 3$.
\end{corollary}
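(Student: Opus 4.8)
The plan is to derive both assertions directly from Proposition~\ref{prop-NZTISp-length} by splitting on the parity of $n$. For the inequality $\len(\TSTNZp{n}) + 1 \leq \len(\TSTNZp{n+1}) \leq \len(\TSTNZp{n}) + 2$, I would note that Proposition~\ref{prop-NZTISp-length} tells us that $\len(\TSTNZp{n+1}) - \len(\TSTNZp{n})$ equals $2$ when $n$ is even and equals $1$ when $n$ is odd. Since both $1$ and $2$ lie in the closed interval $[1,2]$, the stated inequality holds in either case, and these are the only two cases.

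For the equation $\len(\TSTNZp{n+2}) = \len(\TSTNZp{n}) + 3$, I would observe that $n$ and $n+1$ have opposite parities, so applying Proposition~\ref{prop-NZTISp-length} twice — once to the step from $n$ to $n+1$ and once to the step from $n+1$ to $n+2$ — yields increments that are $1$ and $2$ in some order, hence sum to $3$. Concretely: if $n$ is even, then $\len(\TSTNZp{n+1}) = \len(\TSTNZp{n}) + 2$ and, $n+1$ being odd, $\len(\TSTNZp{n+2}) = \len(\TSTNZp{n+1}) + 1$, so $\len(\TSTNZp{n+2}) = \len(\TSTNZp{n}) + 3$; if $n$ is odd, the two increments are $1$ then $2$, giving the same total. (As an alternative route, both claims also follow by simply substituting into the closed-form expression for $\len(\TSTNZp{n})$ displayed just before Proposition~\ref{prop-NZTISp-length} and simplifying in the two parity cases.)

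There is essentially no real obstacle here; the argument is pure bookkeeping. The only point requiring a moment of care is that Proposition~\ref{prop-NZTISp-length} is stated for $n \in \Natpos$, so one should note that the indices to which it is applied, namely $n$ and $n+1$, are indeed positive, which holds for every $n \in \Natpos$. I would therefore keep the proof to one or two sentences, flagging the parity case split and citing Proposition~\ref{prop-NZTISp-length} as the sole ingredient.
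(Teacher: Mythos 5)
Your proposal is correct and matches the paper's intent exactly: the corollary is stated without a separate proof precisely because it follows immediately from Proposition~\ref{prop-NZTISp-length} (or equivalently from the closed-form expression for $\len(\TSTNZp{n})$) by the parity case split you describe. Nothing is missing.
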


We also have $\len(\TSTNZp{1}) = \len(\TSTNZ{1})$ and
$\len(\TSTNZp{n}) < \len(\TSTNZ{n})$ for each $n > 1$.
In fact, $\TSTNZp{n}$ belongs to the shortest instruction sequences 
computing $\tstnzc{n}$.
This is stated by the following theorem.
\begin{theorem}
\label{theorem-NZTISp-shortest}
For all $n \in \Natpos$, for all $X \in \ISbr$, $X$ computes 
$\tstnzc{n}$ only if $\len(X) \geq \len(\TSTNZp{n})$.
\end{theorem}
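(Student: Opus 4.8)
The plan is to establish a lower bound on $\len(X)$ for any $X \in \ISbr$ computing $\tstnzc{n}$ by combining two ingredients: a counting argument showing that every input register must be ``touched'' in a nontrivial way, and a structural argument showing that consecutive read instructions on the same pair of registers can be handled by a block of only three instructions whereas isolated reads cost at least two instructions each plus overhead. Concretely, I would first argue that $\iregs(X) = \set{1,\ldots,n}$: if some $\inbr{i}$ were never read, then $X[\inbr{i}=0]$ and $X[\inbr{i}=1]$ would be the same instruction sequence, yet $\tstnzc{n}$ restricted by $\inbr{i}=1$ is the constant $1$ while restricted by $\inbr{i}=0$ it is $\tstnzc{n-1}$, which is not constant for $n \geq 2$; this contradiction (using the register-elimination machinery set up in Section~\ref{sect-PGA}) forces every input register to be read at least once.

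Next I would set up an induction on $n$, using Proposition~\ref{prop-NZTISp-length} / Corollary~\ref{corollary-NZTISp-length} to control how much $\len(\TSTNZp{n})$ grows. For the base cases $n = 1$ and $n = 2$ one checks directly that $\len(\TSTNZp{1}) = 3$ and $\len(\TSTNZp{2}) = 4$ are minimal: one read instruction plus one $\outbr.\setbr{\True}$ plus a $\halt$ already costs $3$, and any correct $X$ for $n=1$ must contain a read of $\inbr{1}$, a write to $\outbr$ with value $1$ (reachable exactly when $\inbr{1}=1$), and must terminate, so $\len(X) \geq 3$; the $n=2$ case is similar with a slightly more careful case analysis. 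For the inductive step I would take a correct $X$ for $\tstnzc{n+2}$, locate an occurrence of a read instruction reading $\inbr{n{+}1}$ or $\inbr{n{+}2}$ (which exists by the first paragraph), and use register elimination on those two registers: for a suitable $\alpha$ with $\dom(\alpha) = \set{n{+}1, n{+}2}$, the instruction sequence $X_\alpha$ computes $\tstnzc{n}$, hence $\len(X_\alpha) \geq \len(\TSTNZp{n})$ by the induction hypothesis. Since $X_\alpha$ is obtained from $X$ only by replacing read instructions by jumps (no instructions deleted), $\len(X) = \len(X_\alpha) \geq \len(\TSTNZp{n})$; the task is then to recover the extra $3 = \len(\TSTNZp{n+2}) - \len(\TSTNZp{n})$ instructions by showing that the occurrences in $X$ that genuinely depend on $\inbr{n{+}1}$ and $\inbr{n{+}2}$, together with an $\outbr$-write they must be able to trigger, account for at least three instructions not already counted toward $\len(\TSTNZp{n})$.

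The main obstacle, as usual for program-size lower bounds, is the last step: an arbitrary $X$ need not have the clean block structure of $\TSTNZp{n}$, so ``the instructions that depend on $\inbr{n{+}1}$ and $\inbr{n{+}2}$'' must be pinned down carefully. I expect the key lemma to be a case analysis on whether, in $X$, the reads of $\inbr{n{+}1}$ and $\inbr{n{+}2}$ occur as an isolated read instruction or as part of a read instruction pair (the terminology introduced at the end of Section~\ref{sect-PGA} is clearly meant for exactly this): an isolated read forces an associated $\outbr.\setbr{\True}$ and a jump/termination escape, giving roughly two instructions per register that survive register elimination of the \emph{other} register, while a read pair on $\inbr{n{+}1},\inbr{n{+}2}$ costs three instructions as a unit (two reads plus one write). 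Either way the bookkeeping should yield the needed increment of $3$ over two steps, matching Corollary~\ref{corollary-NZTISp-length}. Care is needed because a single $\outbr.\setbr{\True}$ might be shared between the handling of $\inbr{n{+}1}$ and of $\inbr{n{+}2}$ (indeed $\TSTNZp{}$ shares it), so the accounting must be done for the two registers jointly rather than one at a time, which is precisely why the induction decreases $n$ by $2$ rather than by $1$ and why the odd and even cases are threaded through a single induction via the base cases $n \in \set{1,2}$.
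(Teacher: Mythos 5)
Your overall strategy (register elimination plus induction on $n$) is the right family of ideas, and your opening observation that every input register must be read is fine, but the central step of your plan has a genuine gap. After eliminating $\inbr{n{+}1}$ and $\inbr{n{+}2}$ you correctly note that $\len(X_\alpha) = \len(X)$, so the induction hypothesis only yields $\len(X) \geq \len(\TSTNZp{n}) = \len(\TSTNZp{n+2}) - 3$. The missing $3$ is the entire content of the inductive step, and the mechanism you propose for recovering it --- charging to $\inbr{n{+}1}$ and $\inbr{n{+}2}$ certain instructions that are ``not already counted toward $\len(\TSTNZp{n})$'' --- does not match what the induction hypothesis actually provides: it is a bare length bound on $X_\alpha$ and carries no information about which occurrences of instructions in $X_\alpha$ are ``needed'' for that bound and which are spare. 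In an arbitrary $X$ the two registers may be read several times each, their reads may sit anywhere and be adjacent to reads of other registers, and the $\outbr.\setbr{1}$ occurrences they must be able to trigger may be shared with arbitrarily many other registers, so there is no well-defined set of at least three instructions that can be charged to them. To close the gap you must exhibit an instruction sequence computing $\tstnzc{n}$ (or $\tstnzc{n-1}$) that is \emph{genuinely shorter} than $X$ by $2$ or $3$, not one of the same length.

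The paper achieves exactly this by arguing on a minimal counterexample: assuming $X$ is a shortest instruction sequence violating the bound, it shows by a long case analysis (each case producing a strictly shorter correct sequence, contradicting minimality) that the first instruction $u_1$ must be a read of some $\inbr{i}$ with $i \leq n+1$, and, in the $\ntst{\inbr{i}.\getbr}$ subcase, that $u_2$ must be a read of a second input register. Setting those registers to $0$ then lets one delete a prefix of length $2$ or $3$ outright, so register elimination really does shorten the sequence and the induction hypothesis applies with the needed slack. Two further ingredients of the paper's proof are absent from your plan and would also be needed: the statement must be strengthened to cover $\outbr.\setbr{1} \conc X$ as well as $X$ (otherwise a leading write of $1$ to $\outbr$ cannot be discharged in the minimality argument), and the induction proceeds in steps of $1$ (using Corollary~\ref{corollary-NZTISp-length} to relate $\len(\TSTNZp{n+1})$ to $\len(\TSTNZp{n})$ and $\len(\TSTNZp{n-1})$) rather than in steps of $2$; your step-$2$ variant would additionally have to justify treating the reads of $\inbr{n{+}1}$ and $\inbr{n{+}2}$ jointly when nothing forces them to be adjacent in $X$.
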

\begin{proof}
We prove the following stronger result:
\begin{quote}
\emph
{for all $n \in \Natpos$, for all $X \in \ISbr$, 
 $X$ or $\outbr.\setbr{1} \conc X$ computes $\tstnzc{n}$ \linebreak[2]
 only if $\len(X) \geq \len(\TSTNZp{n})$.}
\end{quote}

\sloppy
We use the following in the proof. 
Let $\chi_i(X)$, where $i \in \Natpos$, be obtained from $X$ by 
replacing all occurrences of 
$\ptst{\auxbr{i}.\getbr}$ and $\ntst{\auxbr{i}.\getbr}$ by 
$\ntst{\auxbr{i}.\getbr}$ and $\ptst{\auxbr{i}.\getbr}$, respectively,
and replacing, for each $b \in \Bool$, all occurrences of 
$\auxbr{i}.\setbr{b}$, $\ptst{\auxbr{i}.\setbr{b}}$, and 
$\ntst{\auxbr{i}.\setbr{b}}$ by 
$\auxbr{i}.\setbr{\Compl{b}}$, $\ntst{\auxbr{i}.\setbr{\Compl{b}}}$, and 
$\ptst{\auxbr{i}.\setbr{\Compl{b}}}$, respectively.%
\footnote{Here, we write $\Compl{b}$ for the complement of $b$.}
It follows directly from the proof of Theorem~8.1 from~\cite{BM14a} that 
$\chi_i(X)$ computes $\tstnzc{n}$ if $X$ computes $\tstnzc{n}$ and $X$ 
is of the form $u \conc Y$ or $\outbr.\setbr{1} \conc u \conc Y$, where 
$u$ is $\auxbr{i}.\setbr{1}$, $\ptst{\auxbr{i}.\setbr{1}}$ or 
$\ntst{\auxbr{i}.\setbr{1}}$.
Moreover, $\len(\chi_i(X)) = \len(X)$.

We prove the theorem by induction on $n$.

The basis step consists of proving that for all $X \in \ISbr$, $X$ or 
$\outbr.\setbr{1} \conc X$ computes $\tstnzc{1}$ only if 
$\len(X) \geq 3$.
The following observations can be made about all $X \in \ISbr$ such that 
$X$ or $\outbr.\setbr{1} \conc X$ computes $\tstnzc{1}$: 
(a)~there must be at least one occurrence of $\ptst{\inbr{1}.\getbr}$ or 
$\ntst{\inbr{1}.\getbr}$ in $X$
--- because otherwise the final content of $\outbr$ will not be 
dependent on the content of $\inbr{1}$;
(b)~there must be at least one occurrence of $\outbr.\setbr{1}$, 
$\ptst{\outbr.\setbr{1}}$ or $\ntst{\outbr.\setbr{1}}$ in $X$ if 
$X$ computes $\tstnzc{1}$ and
    there must be at least one occurrence of $\outbr.\setbr{0}$, 
$\ptst{\outbr.\setbr{0}}$ or $\ntst{\outbr.\setbr{0}}$ in $X$ otherwise 
--- because otherwise the final content of $\outbr$ will always be the 
same;
(d)~there must be at least one occurrence of $\halt$ in $X$ --- because 
otherwise nothing will ever be computed.
It follows trivially from these observations that, for all 
$X \in \ISbr$, $X$ or $\outbr.\setbr{1} \conc X$ computes $\tstnzc{1}$ 
only if $\len(X) \geq 3$.

The inductive step is proved by contradiction.
Suppose that $X \in \ISbr$, $X$ or $\outbr.\setbr{1} \conc X$ computes 
$\tstnzc{n+1}$, and $\len(X) < \len(\TSTNZp{n+1})$.
Assume that there does not exist an $X'\in \ISbr$ such that $X'$ or 
$\outbr.\setbr{1} \conc X'$ computes $\tstnzc{n+1}$ and 
$\len(X') < \len(X)$.
Obviously, this assumption can be made without loss of generality.
From this assumption, it follows that $X = u_1 \conc \ldots \conc u_k$ 
where $k < \len(\TSTNZp{n+1})$, $u_1,\ldots,u_k \in \PInstr$, and 
$u_1$ is $\ptst{\inbr{i}.\getbr}$ or $\ntst{\inbr{i}.\getbr}$ for some
$i \in \Natpos$ such that $i \leq n + 1$.
This can be seen as follows:
\begin{plist}
\item
if $u_1$ is $\halt$ or $\fjmp{l}$ with $l = 0$ or $l \geq k$, then $X$ 
and $\outbr.\setbr{1} \conc X$ cannot compute $\tstnzc{n+1}$;
\item
if $u_1$ is $\fjmp{l}$ with $0 < l < k$, then there is an $X' \in \ISbr$ 
such that $X'$ or $\outbr.\setbr{1} \conc X'$ computes $\tstnzc{n+1}$ 
and $\len(X') < \len(X)$ --- which contradicts the assumption;
\item
if $u_1$ is $\outbr.\setbr{0}$, $\ptst{\outbr.\setbr{0}}$ or 
$\ntst{\outbr.\setbr{0}}$, then $u_1$ can be replaced by $\fjmp{1}$ or 
$\fjmp{2}$ in $X$ and so, there is an $X' \in \ISbr$ such that $X'$ 
computes $\tstnzc{n+1}$ and $\len(X') < \len(X)$ --- which contradicts 
the assumption;
\item
if $u_1$ is $\outbr.\setbr{1}$, $\ptst{\outbr.\setbr{1}}$ or 
$\ntst{\outbr.\setbr{1}}$, then $u_1$ can be replaced by $\fjmp{1}$ or 
$\fjmp{2}$ in $\outbr.\setbr{1} \conc X$ and so, there is an 
$X' \in \ISbr$ such that $\outbr.\setbr{1} \conc X'$ computes 
$\tstnzc{n+1}$ and $\len(X') < \len(X)$ --- which contradicts the 
assumption;
\item
if $u_1$ is $\auxbr{j}.\setbr{0}$, $\ptst{\auxbr{j}.\setbr{0}}$, 
$\ntst{\auxbr{j}.\setbr{0}}$, $\auxbr{j}.\getbr$, 
$\ptst{\auxbr{j}.\getbr}$ or $\ntst{\auxbr{j}.\getbr}$ for some 
$j \in \Natpos$, then $u_1$ can be replaced by $\fjmp{1}$ or $\fjmp{2}$ 
in $X$ and $\outbr.\setbr{1} \conc X$ and so there is an $X' \in \ISbr$ 
such that $X'$ or $\outbr.\setbr{1} \conc X'$ computes $\tstnzc{n+1}$ 
and $\len(X') < \len(X)$ --- which contradicts the assumption;
\item
if $u_1$ is $\auxbr{j}.\setbr{1}$, $\ptst{\auxbr{j}.\setbr{1}}$ or 
$\ntst{\auxbr{j}.\setbr{1}}$ for some $j \in \Natpos$, then 
$\chi_j(u_1)$ can be replaced by $\fjmp{1}$ or $\fjmp{2}$ in $\chi_j(X)$ 
and $\outbr.\setbr{1} \conc \chi_j(X)$ and so, because $\chi_j(X)$ or 
$\outbr.\setbr{1} \conc \chi_j(X)$ also computes $\tstnzc{n+1}$ and 
$\len(\chi_j(X)) = \len(X)$, there is an $X' \in \ISbr$ such that $X'$ 
or $\outbr.\setbr{1} \conc X'$ computes $\tstnzc{n+1}$ and 
$\len(X') < \len(X)$ --- which contradicts the assumption;
\item
if $u_1$ is $\inbr{j}.\getbr$ for some $j \in \Natpos$, then $u_1$ can 
be replaced by $\fjmp{1}$ in $X$ and $\outbr.\setbr{1} \conc X$ and so 
there is an $X' \in \ISbr$ such that $X'$ or $\outbr.\setbr{1} \conc X'$ 
computes $\tstnzc{n+1}$ and $\len(X') < \len(X)$ --- which contradicts 
the assumption;
\item
if $u_1$ is $\ptst{\inbr{j}.\getbr}$ or $\ntst{\inbr{j}.\getbr}$ for 
some $j \in \Natpos$ such that $j > n + 1$, then, because the final 
content of $\outbr$ is independent of the initial content of $\inbr{j}$, 
$u_1$ can be replaced by $\fjmp{1}$ and $\fjmp{2}$ in $X$ and so there 
is an $X' \in \ISbr$ such that $X'$ or $\outbr.\setbr{1} \conc X'$ 
computes $\tstnzc{n+1}$ and $\len(X') < \len(X)$ --- which contradicts 
the assumption.
\end{plist}
So, we distinguish between 
the case that $u_1$ is $\ptst{\inbr{i}.\getbr}$ and 
the case that $u_1$ is $\ntst{\inbr{i}.\getbr}$.

In the case that $u_1$ is $\ptst{\inbr{i}.\getbr}$, we consider the case 
that $\inbr{i}$ contains $0$.
In this case, after execution of $u_1$, execution proceeds with $u_3$.
Let $Y = (u_3 \conc \ldots \conc u_k)\linebreak[2][\inbr{i} = 0]$.
Then $Y$ or $\outbr.\setbr{1} \conc Y$ computes $\tstnzc{n}$.
Moreover, by Corollary~\ref{corollary-NZTISp-length}, we have that 
$\len(Y) = \len(X) - 2 < \len(\TSTNZp{n})$.
Hence, there exists a $Y' \in \ISbr$ such that $Y'$ or 
$\outbr.\setbr{1} \conc Y'$ computes $\tstnzc{n}$ and 
$\len(Y') < \len(\TSTNZp{n})$.
This contradicts the induction hypothesis.

In the case that $u_1$ is $\ntst{\inbr{i}.\getbr}$, we consider the case 
that $\inbr{i}$ contains $0$.
In this case, after execution of $u_1$, execution proceeds with $u_2$.
Let $Y = (u_2 \conc \ldots \conc u_k)\linebreak[2][\inbr{i} = 0]$.
Then $Y$ or $\outbr.\setbr{1} \conc Y$ computes $\tstnzc{n}$. 
From here, because $\len(Y) = \len(X) - 1$, we cannot derive a 
contradiction immediately as in the case that $u_1$ is 
$\ptst{\inbr{i}.\getbr}$.
A case distinction on $u_2$ is needed.
With the exception of the cases that $u_2$ is $\ptst{\inbr{j}.\getbr}$ 
or $\ntst{\inbr{j}.\getbr}$, for some $j \in \Natpos$ such that 
$i \neq j$ and $j \leq n + 1$, we still consider the case that 
$\inbr{i}$ contains $0$.
In the cases that are not excepted above, a contradiction is derived as 
follows:
\begin{plist}
\item
if $u_2$ is $\halt$ or $\fjmp{l}$ with $l = 0$ or $l \geq k - 1$, then 
$Y$ and $\outbr.\setbr{1} \conc Y$ cannot compute $\tstnzc{n}$;
\item
if $u_2$ is $\fjmp{l}$ with $0 < l < k - 1$, then there is a 
$Y' \in \ISbr$ such that $Y'$ or $\outbr.\setbr{1} \conc Y'$ computes 
$\tstnzc{n}$ and, by Corollary~\ref{corollary-NZTISp-length}, 
$\len(Y') < \len(\TSTNZp{n})$ --- which contradicts the induction 
hypothesis;
\item
if $u_2$ is $\outbr.\setbr{0}$, $\ptst{\outbr.\setbr{0}}$ or 
$\ntst{\outbr.\setbr{0}}$, then $u_2$ can be replaced by $\fjmp{1}$ or 
$\fjmp{2}$ in $Y$ and so, there is a $Y' \in \ISbr$ such that $Y'$ 
computes $\tstnzc{n}$ and, by Corollary~\ref{corollary-NZTISp-length}, 
$\len(Y') < \len(\TSTNZp{n})$ --- which contradicts the induction 
hypothesis;   
\item
if $u_2$ is $\outbr.\setbr{1}$, $\ptst{\outbr.\setbr{1}}$ or 
$\ntst{\outbr.\setbr{1}}$, then $u_2$ can be replaced by $\fjmp{1}$ or 
$\fjmp{2}$ in $\outbr.\setbr{1} \conc Y$ and so, there is a 
$Y' \in \ISbr$ such that $\outbr.\setbr{1} \conc Y'$ computes 
$\tstnzc{n}$ and, by Corollary~\ref{corollary-NZTISp-length}, 
$\len(Y') < \len(\TSTNZp{n})$ --- which contradicts the induction 
hypothesis; 
\item
if $u_2$ is $\auxbr{j'}.\setbr{0}$, $\ptst{\auxbr{j'}.\setbr{0}}$, 
$\ntst{\auxbr{j'}.\setbr{0}}$, $\auxbr{j'}.\getbr$, 
$\ptst{\auxbr{j'}.\getbr}$ or $\ntst{\auxbr{j'}.\getbr}$ for some 
$j' \in \Natpos$, then $u_2$ can be replaced by $\fjmp{1}$ or $\fjmp{2}$ 
in $Y$ and so, there is a $Y' \in \ISbr$ such that $Y'$ or 
$\outbr.\setbr{1} \conc Y'$ computes $\tstnzc{n}$ and, by 
Corollary~\ref{corollary-NZTISp-length}, $\len(Y') < \len(\TSTNZp{n})$ 
--- which contradicts the induction hypothesis; 
\item
if $u_2$ is $\auxbr{j'}.\setbr{1}$, $\ptst{\auxbr{j'}.\setbr{1}}$ or 
$\ntst{\auxbr{j'}.\setbr{1}}$ for some $j' \in \Natpos$, then 
$\chi_{j'}(u_2)$ can be replaced by $\fjmp{1}$ or $\fjmp{2}$ in 
$\chi_{j'}(Y)$ and so, because $\chi_{j'}(Y)$ or 
$\outbr.\setbr{1} \conc \chi_{j'}(Y)$ also computes $\tstnzc{n}$ and 
$\len(\chi_{j'}(Y)) = \len(Y)$, there is a $Y' \in \ISbr$ such that 
$Y'$ or $\outbr.\setbr{1} \conc Y'$ computes $\tstnzc{n}$ and, by 
Corollary~\ref{corollary-NZTISp-length}, $\len(Y') < \len(\TSTNZp{n})$ 
--- which contradicts the induction hypothesis;
\item
if $u_2$ is $\inbr{j'}.\getbr$ for some $j' \in \Natpos$, then $u_2$ can 
be replaced by $\fjmp{1}$ in $Y$ and $\outbr.\setbr{1} \conc Y$ and so 
there is an $Y' \in \ISbr$ such that $Y'$ or $\outbr.\setbr{1} \conc Y'$ 
computes $\tstnzc{n}$ and $\len(Y') < \len(\TSTNZp{n})$ --- which 
contradicts the induction hypothesis;
\item
if $u_2$ is $\ptst{\inbr{j'}.\getbr}$ or $\ntst{\inbr{j'}.\getbr}$ for 
some $j' \in \Natpos$ such that $j' > n + 1$, then, because the final 
content of $\outbr$ is independent of the initial content of 
$\inbr{j'}$, $u_2$ can be replaced by $\fjmp{1}$ and $\fjmp{2}$ in $Y$ 
and so there is an $Y' \in \ISbr$ such that $Y'$ or 
$\outbr.\setbr{1} \conc Y'$ computes $\tstnzc{n}$ and, by 
Corollary~\ref{corollary-NZTISp-length}, $\len(Y') < \len(\TSTNZp{n})$ 
--- which contradicts the induction hypothesis;
\item
if $u_2$ is $\ptst{\inbr{i}.\getbr}$ or $\ntst{\inbr{i}.\getbr}$, then 
$u_2$ has been replaced by $\fjmp{1}$ or $\fjmp{2}$ in $Y$ and so 
there is a $Y' \in \ISbr$ such that $Y'$ or $\outbr.\setbr{1} \conc Y'$ 
computes $\tstnzc{n}$ and, by Corollary~\ref{corollary-NZTISp-length}, 
$\len(Y') < \len(\TSTNZp{n})$ --- which contradicts the induction 
hypothesis.
\end{plist}
In the case that $u_2$ is $\ptst{\inbr{j}.\getbr}$, we consider the case 
that both $\inbr{i}$ and $\inbr{j}$ contain $0$.
Let $Z = (u_4 \conc \ldots \conc u_k)[\inbr{i} = 0][\inbr{j} = 0]$.
Then, $Z$ or $\outbr.\setbr{1} \conc Z$ computes $\tstnzc{n-1}$ and, 
by Corollary~\ref{corollary-NZTISp-length}, 
$\len(Z) < \len(\TSTNZp{n-1})$ --- which contradicts the induction 
hypothesis.
In the case that $u_2$ is $\ntst{\inbr{j}.\getbr}$, we consider the case 
that only $\inbr{j}$ contains $0$.
Let $Z' = (u_3 \conc \ldots \conc u_k)[\inbr{j} = 0]$.
Then, $Z'$ or $\outbr.\setbr{1} \conc Z'$ computes $\tstnzc{n}$ and,
by Corollary~\ref{corollary-NZTISp-length}, 
$\len(Z') < \len(\TSTNZp{n})$ --- which contradicts the induction 
hypothesis.
\qed
\end{proof}
Theorem~\ref{theorem-NZTISp-shortest} is a result similar to certain
results on circuit lower bounds (see e.g.\ Chapter~16 of~\cite{Weg05a}).
In the proof of this theorem use is made of register elimination, a 
technique similar to gate elimination as used in work on circuit lower 
bounds.

The following result is a corollary of the strengthening of 
Theorem~\ref{theorem-NZTISp-shortest} that is actually proved above.
\begin{corollary}
\label{corollary-NZTISp-shortest}
For all $n \in \Natpos$, for all $X \in \ISbr$ of the form
$\outbr.\setbr{1} \conc Y$ or $\ptst{\outbr.\setbr{1}} \conc Y$, 
$X$ computes $\tstnzc{n}$ only if $\len(X) > \len(\TSTNZp{n})$.
\end{corollary}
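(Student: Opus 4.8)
The plan is to obtain this as an immediate consequence of the strengthened statement that is actually established in the proof of Theorem~\ref{theorem-NZTISp-shortest}: for all $n \in \Natpos$ and all $Y \in \ISbr$, if $Y$ or $\outbr.\setbr{1} \conc Y$ computes $\tstnzc{n}$, then $\len(Y) \geq \len(\TSTNZp{n})$. In both cases of the corollary, $X$ has the form $w \conc Y$ where $w$ is a single primitive instruction and $Y \in \ISbr$ (deleting an initial primitive instruction from a finite repetition-free instruction sequence again yields one), so $\len(X) = \len(Y) + 1$. Hence it suffices to show, in each case, that $X$ computing $\tstnzc{n}$ implies that $\outbr.\setbr{1} \conc Y$ computes $\tstnzc{n}$; the strengthened statement then gives $\len(Y) \geq \len(\TSTNZp{n})$ and therefore $\len(X) = \len(Y) + 1 \geq \len(\TSTNZp{n}) + 1 > \len(\TSTNZp{n})$.

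For $X = \outbr.\setbr{1} \conc Y$ there is nothing to prove: the hypothesis is literally that $\outbr.\setbr{1} \conc Y$ computes $\tstnzc{n}$. For $X = \ptst{\outbr.\setbr{1}} \conc Y$, I would note that on execution $\ptst{\outbr.\setbr{1}}$ has exactly the same effect as the plain basic instruction $\outbr.\setbr{1}$: executing $\outbr.\setbr{1}$ sets the content of $\outbr$ to $\True$ and always produces the reply $\True$, so the positive test never causes the next primitive instruction to be skipped and execution always proceeds with the first instruction of $Y$. Since both $\ptst{\outbr.\setbr{1}}$ and $\outbr.\setbr{1}$ are single primitive instructions, replacing the former by the latter changes neither $\len(X)$ nor any jump offset inside $Y$. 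Consequently $\ptst{\outbr.\setbr{1}} \conc Y$ and $\outbr.\setbr{1} \conc Y$ compute the same function, so the hypothesis that $X$ computes $\tstnzc{n}$ yields that $\outbr.\setbr{1} \conc Y$ computes $\tstnzc{n}$, as needed.

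I do not expect a real obstacle here; the corollary is a genuine by-product of the strengthening, and the only point that deserves a word of care is the semantic equivalence of $\ptst{\outbr.\setbr{1}}$ and $\outbr.\setbr{1}$, which is immediate from the fact that a set-to-$\True$ command always replies $\True$. The one thing to double-check while writing is that the strengthened statement --- rather than the weaker form stated as Theorem~\ref{theorem-NZTISp-shortest} --- is the one being invoked, since the gain of $1$ in the length bound comes entirely from the extra leading instruction $w$.
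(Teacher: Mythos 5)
Your proposal is correct and is exactly the argument the paper intends: the paper derives the corollary directly from the strengthened statement proved inside Theorem~\ref{theorem-NZTISp-shortest}, applying it to the tail $Y$ and gaining the extra $+1$ from the leading instruction, with the case $\ptst{\outbr.\setbr{1}}$ reduced to $\outbr.\setbr{1}$ by their identical execution effect. No gaps.
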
 

\section{More shortest non-zeroness test instruction sequences}
\label{sect-ISNZ-shortest-all}

In this section, we study the remaining instruction sequences of the 
same length as $\TSTNZp{n}$ that compute $\tstnzc{n}$.
The final outcome of this study is important for the proof of 
Theorem~\ref{theorem-complexity-shortest} in 
Section~\ref{sect-complexity}.

The following proposition states that change of the order in which the 
read instructions occur in $\TSTNZp{n}$ yields again a correct 
implementation of $\tstnzc{n}$.
\begin{proposition}
\label{prop-NZTISp-perm-correct}
For each $n \in \Natpos$, 
for each bijection $\varrho$ on $\set{k \in \Natpos \where k \leq n}$, 
$\tstnzc{n}$ is also computed by the instruction sequence obtained from 
$\TSTNZp{n}$ by replacing, for each $j \in \Natpos$ with $j \leq n$, 
all occurrences of the register name $\inbr{j}$ in $\TSTNZp{n}$ by 
$\inbr{\varrho(j)}$. 
\end{proposition}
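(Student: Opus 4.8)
The plan is to observe that permuting the input-register names in $\TSTNZp{n}$ produces an instruction sequence whose execution behaviour, as a function of the initial contents of the input registers, is obtained from the behaviour of $\TSTNZp{n}$ by precomposition with the corresponding permutation of the argument tuple. Concretely, let $X_\varrho$ denote the instruction sequence obtained from $\TSTNZp{n}$ by replacing each occurrence of $\inbr{j}$ by $\inbr{\varrho(j)}$. I would first record the elementary fact that, for all $b_1,\ldots,b_n \in \Bool$, execution of $X_\varrho$ with $\inbr{i}$ initially containing $b_i$ follows exactly the same control-flow path, performs exactly the same assignments to $\outbr$, and terminates exactly when execution of $\TSTNZp{n}$ does with $\inbr{i}$ initially containing $b_{\varrho(i)}$; this is because the only place the input registers are consulted is inside the read instructions, and renaming $\inbr{j}$ to $\inbr{\varrho(j)}$ exactly mirrors the relabelling of the initial contents. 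Hence $X_\varrho$ computes the function $g$ defined by $g(b_1,\ldots,b_n) = \tstnzc{n}(b_{\varrho(1)},\ldots,b_{\varrho(n)})$, using Proposition~\ref{prop-NZTISp-correct}.

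It then remains to check that $g = \tstnzc{n}$, which is immediate from the defining equation
\begin{ldispl}
\tstnzc{n}(b_1,\ldots,b_n) = 1 \;\;\mathrm{iff}\;\;
b_1 = 1 \;\mathrm{or}\; \ldots \;\mathrm{or}\; b_n = 1 \;:
\end{ldispl}
the condition ``$b_{\varrho(1)} = 1$ or $\ldots$ or $b_{\varrho(n)} = 1$'' is logically equivalent to ``$b_1 = 1$ or $\ldots$ or $b_n = 1$'' since $\varrho$ is a bijection on $\set{k \in \Natpos \where k \leq n}$ and disjunction is commutative and idempotent. Therefore $X_\varrho$ computes $\tstnzc{n}$, which is the claim.

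I expect the only point requiring care — the ``main obstacle'', such as it is — to be making the behavioural invariance claim precise, i.e.\ justifying rigorously that renaming the input registers commutes with the relabelling of initial contents. This is intuitively obvious but, to be fully formal, would be argued by induction on the number of primitive instructions executed (equivalently, on the structure of the thread extracted from the instruction sequence), noting that input registers are never written to, that no two distinct input-register names are identified by a bijection, and that the commands occurring on input registers are all $\getbr$, whose reply depends only on the content of the named register. Since the paper treats such register-renaming arguments informally elsewhere (e.g.\ in the register-elimination discussion of Section~\ref{sect-PGA}), I would present this step at a comparable level of detail rather than developing the thread-algebra machinery in full.
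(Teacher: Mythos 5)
Your proposal is correct, but it takes a different route from the paper. The paper's proof is a one-line proof by analogy: it simply reruns the induction from the proof of Proposition~\ref{prop-NZTISp-correct} with every occurrence of $\inbr{j}$ in that proof replaced by $\inbr{\varrho(j)}$. You instead factor the statement into two independent observations: (1) a general renaming lemma saying that if $X_\varrho$ is obtained from an instruction sequence by the register substitution $\inbr{j} \mapsto \inbr{\varrho(j)}$, then $X_\varrho$ on initial contents $(b_1,\ldots,b_n)$ behaves exactly as the original on initial contents $(b_{\varrho(1)},\ldots,b_{\varrho(n)})$, so that $X_\varrho$ computes $\tstnzc{n} \circ \varrho^*$ where $\varrho^*$ permutes the argument tuple; and (2) the symmetry of $\tstnzc{n}$ under permutations of its arguments, which gives $\tstnzc{n} \circ \varrho^* = \tstnzc{n}$. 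The direction of the relabelling in your step (1) checks out: where $\TSTNZp{n}$ consults $\inbr{j}$ and sees $b_{\varrho(j)}$, the renamed sequence consults $\inbr{\varrho(j)}$ and sees the same value, so the control-flow paths coincide. Your decomposition is more modular and buys more: it shows at once that \emph{any} correct implementation of \emph{any} symmetric function remains correct under a bijective renaming of its input registers, whereas the paper's argument is tied to the specific shape of $\TSTNZp{n}$ and its correctness proof. The cost is that your renaming lemma, though elementary, is a claim about arbitrary executions and strictly needs the small induction on execution steps you sketch; the paper sidesteps this by staying inside the already-structured induction of Proposition~\ref{prop-NZTISp-correct}. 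The level of informality you propose for that step is consistent with how the paper itself treats register renaming and elimination, so no genuine gap remains.
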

\begin{proof}
The proof is like the proof of Proposition~\ref{prop-NZTISp-correct}, 
but with, for each $j \in \Natpos$ with $j \leq n$, all occurrences of 
the register name $\inbr{j}$ in the proof replaced by 
$\inbr{\varrho(j)}$.
\qed
\end{proof}
The proof of Proposition~\ref{prop-NZTISp-correct} can be seen as a 
special case of the proof of Proposition~\ref{prop-NZTISp-perm-correct}, 
namely the case 
where $\varrho$ is the identity function.

The following proposition states that, for instruction sequences $X$ as 
considered in Proposition~\ref{prop-NZTISp-perm-correct}, in the case 
that $n$ is odd, change of the position of the isolated read instruction
of $X$ yields again a correct implementation of $\tstnzc{n}$.
\begin{proposition}
\label{prop-NZTISp-perm-like-correct}
For each odd $n \in \Natpos$, 
for each bijection $\varrho$ on $\set{k \in \Natpos \where k \leq n}$ 
and $m \in \Nat$ with $m \leq (n-1)/2$,
$\tstnzc{n}$ is also computed by the instruction sequence 
$\Conc{i = 1}{m} 
  (\ntst{\inbr{\varrho(2{\mul}i{-}1)}.\getbr} \conc
   \ptst{\inbr{\varrho(2{\mul}i)}.\getbr} \conc
   \outbr.\setbr{1}) \conc 
\linebreak[2]
 \ptst{\inbr{\varrho(2{\mul}m{+}1)}.\getbr} \conc \outbr.\setbr{1} \conc 
\linebreak[2]
 \Conc{i = m+1}{(n-1)/2} 
  (\ntst{\inbr{\varrho(2{\mul}i)}.\getbr} \conc
   \ptst{\inbr{\varrho(2{\mul}i{+}1)}.\getbr} \conc
   \outbr.\setbr{1}) \conc 
 \halt$.
\end{proposition}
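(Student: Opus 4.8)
The plan is to write the instruction sequence in the statement --- call it $Y$ --- as a concatenation of \emph{blocks} $Y = B_1 \conc \cdots \conc B_{(n+1)/2} \conc \halt$ and to trace the content of $\outbr$ through them. Here, for $1 \leq r \leq m$, the block $B_r$ is the triple $\ntst{\inbr{\varrho(2r-1)}.\getbr} \conc \ptst{\inbr{\varrho(2r)}.\getbr} \conc \outbr.\setbr{1}$; the block $B_{m+1}$ is the pair $\ptst{\inbr{\varrho(2m+1)}.\getbr} \conc \outbr.\setbr{1}$; and, for $m+2 \leq r \leq (n+1)/2$, the block $B_r$ is the triple $\ntst{\inbr{\varrho(2r-2)}.\getbr} \conc \ptst{\inbr{\varrho(2r-1)}.\getbr} \conc \outbr.\setbr{1}$. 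First I would check that the input registers read across $B_1,\ldots,B_{(n+1)/2}$ are exactly $\inbr{\varrho(1)},\ldots,\inbr{\varrho(n)}$, each read once: the triples $B_1,\ldots,B_m$ read $\inbr{\varrho(1)},\ldots,\inbr{\varrho(2m)}$, the pair $B_{m+1}$ reads $\inbr{\varrho(2m+1)}$, and the triples $B_{m+2},\ldots,B_{(n+1)/2}$ read $\inbr{\varrho(2m+2)},\ldots,\inbr{\varrho(n)}$. Since $\varrho$ is a bijection on $\set{k \in \Natpos \where k \leq n}$, this set equals $\set{\inbr{1},\ldots,\inbr{n}}$. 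The degenerate cases $m = 0$ (no triples before the pair) and $m = (n-1)/2$ (no triples after the pair) are covered by the same reasoning.

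Next I would establish the behaviour of a single block by a finite case distinction on the contents of the registers it reads, using the semantics of the positive and negative test instructions. The claim is: if $B$ is one of the blocks above, occurring inside $Y$ and immediately followed by a primitive instruction $w$ (the first instruction of the next block, or the final $\halt$), then on execution starting at the first instruction of $B$: (a)~if at least one of the registers read in $B$ contains $1$, then the content of $\outbr$ becomes $1$; (b)~otherwise the content of $\outbr$ is unchanged; and (c)~in every case execution then proceeds with $w$. For a triple $\ntst{\inbr{a}.\getbr} \conc \ptst{\inbr{b}.\getbr} \conc \outbr.\setbr{1}$ this is immediate: a negative test on $\inbr{a}$ holding $1$ skips the positive test and executes $\outbr.\setbr{1}$; one on $\inbr{a}$ holding $0$ falls through to the positive test, which executes $\outbr.\setbr{1}$ if $\inbr{b}$ holds $1$ and skips it otherwise; in all three situations control next reaches $w$. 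The pair is the same argument without the leading negative test.

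Then, by part~(c) and induction on $r$, I would show that on execution of $Y$, after $B_1 \conc \cdots \conc B_r$ has been executed, (i)~the content of $\outbr$ equals $1$ iff at least one of the input registers read in $B_1,\ldots,B_r$ contained $1$ when execution started, and (ii)~execution proceeds with the first instruction of $B_{r+1}$, or with the final $\halt$ if $r = (n+1)/2$. Part~(i) combines (a), (b), the fact that the initial content of $\outbr$ is $\False$, and the observation that the blocks only ever write $1$ to $\outbr$, so once it holds $1$ it keeps it. Taking $r = (n+1)/2$ and invoking the first step, the content of $\outbr$ after all blocks have executed is $1$ iff $b_{\varrho(j)} = 1$ for some $j \leq n$, hence iff $b_j = 1$ for some $j \leq n$, which is $\tstnzc{n}(b_1,\ldots,b_n)$; execution then reaches $\halt$ and terminates. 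As $Y$ contains no occurrences of auxiliary register names, this shows that $Y$ computes $\tstnzc{n}$.

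I do not anticipate a genuine obstacle here: once the block decomposition is in place the argument is a routine trace. The two points needing care are the index bookkeeping in the first step --- verifying that the blocks together read $\inbr{1},\ldots,\inbr{n}$, with the degenerate values of $m$ handled correctly --- and getting the conventions governing which instruction is skipped by $\ptst{\cdot}$ and by $\ntst{\cdot}$ right in the single-block case distinction. One could alternatively first reduce to the case that $\varrho$ is the identity, exactly as the proof of Proposition~\ref{prop-NZTISp-perm-correct} reduces to that of Proposition~\ref{prop-NZTISp-correct}, since $\tstnzc{n}$ is a symmetric function, and then run the block argument for that case; but carrying $\varrho$ throughout costs nothing.
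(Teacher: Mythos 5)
Your proof is correct, but it takes a different route from the paper's. The paper proves this proposition by induction on $m$: the basis $m = 0$ is exactly Proposition~\ref{prop-NZTISp-perm-correct} (the permuted $\TSTNZp{n}$ for odd $n$), and the inductive step observes that the instruction sequence for $m+1$ is obtained from the one for $m$ by replacing a single five-instruction subsequence involving the registers $\inbr{\varrho(2m+1)}$, $\inbr{\varrho(2m+2)}$, $\inbr{\varrho(2m+3)}$ by a rearranged one, so that a case distinction on the contents of just those three registers shows the two sequences compute the same function. You instead give a self-contained trace argument: a block decomposition, a local lemma that each block sets $\outbr$ to $1$ exactly when one of its registers holds $1$ and always passes control to the next block, and an induction over the blocks. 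Your index bookkeeping and the single-block case analysis are both right, and the argument correctly handles the degenerate values of $m$. What the paper's approach buys is economy --- the only new verification is a three-register case distinction, everything else being inherited from the earlier propositions; what yours buys is uniformity --- it proves all values of $m$ at once and in fact subsumes Proposition~\ref{prop-NZTISp-perm-correct} for odd $n$ as the special case $m = 0$, at the cost of redoing the kind of trace that the paper only carries out once, in the proof of Proposition~\ref{prop-NZTISp-correct}.
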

\begin{proof}
\sloppy
Let $n \in \Natpos$ be odd, and
let $\varrho$ be a bijection on $\set{k \in \Natpos \where k \leq n}$.
For each $m \in \Nat$ with $m \leq (n-1)/2$, 
let 
$X_m =
 \Conc{i = 1}{m} 
  (\ntst{\inbr{\varrho(2{\mul}i{-}1)}.\getbr} \conc
   \ptst{\inbr{\varrho(2{\mul}i)}.\getbr} \conc \nolinebreak
   \outbr.\setbr{1}) \conc 
 \ptst{\inbr{\varrho(2{\mul}m{+}1)}.\getbr} \conc \outbr.\setbr{1} \conc 
 \Conc{i = m+1}{(n-1)/2} 
  (\ntst{\inbr{\varrho(2{\mul}i)}.\getbr} \conc
   \ptst{\inbr{\varrho(2{\mul}i{+}1)}.\getbr} \conc \nolinebreak
   \outbr.\setbr{1}) \conc \nolinebreak
 \halt$.
We prove that $X_m$ computes $\tstnzc{n}$ by induction on $m$. 
The basis step follows immediately from 
Proposition~\ref{prop-NZTISp-perm-correct}.
The inductive step goes as follows.
Let 
$Y =  
\ptst{\inbr{\varrho(2{\mul}m{+}1)}.\getbr} \conc \outbr.\setbr{1} \conc 
\ntst{\inbr{\varrho(2{\mul}m{+}2)}.\getbr} \conc
\ptst{\inbr{\varrho(2{\mul}m{+}3)}.\getbr} \conc \outbr.\setbr{1}$ and
$Y' =  
\ntst{\inbr{\varrho(2{\mul}m{+}1)}.\getbr} \conc  
\ptst{\inbr{\varrho(2{\mul}m{+}2)}.\getbr} \conc \outbr.\setbr{1} \conc
\ptst{\inbr{\varrho(2{\mul}m{+}3)}.\getbr} \conc \outbr.\setbr{1}$.
Then $X_{m+1}$ is $X_m$ with the subsequence $Y$ replaced by $Y'$.
From this, it follows easily by a case distinction on the contents of 
$\inbr{\varrho(2{\mul}m{+}1)}$, $\inbr{\varrho(2{\mul}m{+}2)}$, and 
$\inbr{\varrho(2{\mul}m{+}3)}$ that $X_m$ and $X_{m+1}$ compute the same 
function from $\Bool^n$ to $\Bool$.
From this and the induction hypothesis, it follows that $X_{m+1}$ 
computes $\tstnzc{n}$.
\qed
\end{proof}

Let $X$ be an instruction sequence as considered in 
Propositions~\ref{prop-NZTISp-perm-correct} 
or~\ref{prop-NZTISp-perm-like-correct}.
Then replacement of one or more occurrences of $\outbr.\setbr{1}$ in $X$ 
by $\ptst{\outbr.\setbr{1}}$ yields again a correct implementation of 
$\tstnzc{n}$ because the effects of these instructions on execution of 
$X$ are always the same.
Even replacement of one or more occurrences of $\outbr.\setbr{1}$ in 
$X$ by $\ntst{\outbr.\setbr{1}}$ yields again a correct implementation 
of $\tstnzc{n}$, unless its last occurrence is replaced, because 
checking of the first of the remaining input registers can be skipped 
once an input register is encountered whose content is $1$. 
Moreover, replacement of one or more occurrences of $\outbr.\setbr{1}$ 
in $X$ by a forward jump instruction that leads to another occurrence of 
$\outbr.\setbr{1}$ yields again a correct implementation of $\tstnzc{n}$ 
because, once an input register is encountered whose content is $1$, 
checking of the remaining input registers can be skipped.

In the preceding paragraph, the clarifying intuitions for the statements 
made do not sufficiently verify the statements.
Below, the statements are incorporated into 
Theorem~\ref{theorem-NZTISp-like-correct} and verified via the proof of 
that theorem.

We define a set $\TSTNZpc{n}$ of instruction sequences with 
Propositions~\ref{prop-NZTISp-perm-correct} 
and~\ref{prop-NZTISp-perm-like-correct} 
and the statements made above in mind. \\[2ex]
For even $n$, we define a subset $\TSTNZpc{n}$ of $\ISbr$ as follows: 
\\[2ex]
$X \in \TSTNZpc{n}$ iff 
\begin{ldispl}
X = 
\Conc{i = 1}{n/2} 
 (\ntst{\inbr{\varrho(2{\mul}i{-}1)}.\getbr} \conc
  \ptst{\inbr{\varrho(2{\mul}i)}.\getbr} \conc \varphi(i)) \conc 
\halt 
\end{ldispl}%
\hspace*{1.5em}for some function $\varphi$ from 
$\set{k \in \Natpos \where k \leq n/2}$ to $\PInstr$ such that 
\begin{ldispl}
\begin{array}[t]{@{}l@{\;}c@{\;}l}
\varphi(j) & \in & 
\set{\fjmp{3{\mul}k} \where k \in \Natpos \Land k \leq n/2 - j} 
\\ & \union &
\set{\outbr.\setbr{1},\ptst{\outbr.\setbr{1}},\ntst{\outbr.\setbr{1}}}\;, 
\\
\varphi(n/2) & \in & \set{\outbr.\setbr{1},\ptst{\outbr.\setbr{1}}}
\end{array}
\end{ldispl}%
\hspace*{1.5em}and some bijection $\varrho$ on 
$\set{k \in \Natpos \where k \leq n}$. \\[2ex]
For odd $n$, we define a subset $\TSTNZpc{n}$ of $\ISbr$ as follows: 
\\[2ex]
$X \in \TSTNZpc{n}$ iff 
\begin{ldispl}
\mathrm{there\; exists\; an\;} 
m \in \set{k \in \Nat \where k \leq (n-1)/2} \mathrm{\;such\; that} \\
\begin{array}{@{}l@{\;}c@{\;}l}
X & = &
\Conc{i = 1}{m} 
 (\ntst{\inbr{\varrho(2{\mul}i{-}1)}.\getbr} \conc
  \ptst{\inbr{\varrho(2{\mul}i)}.\getbr} \conc \varphi'_m(i)) \conc {} 
\\ & &
\ptst{\inbr{\varrho(2{\mul}m{+}1)}.\getbr} \conc 
\varphi'_m(m{+}1) \conc {} 
\\ & &
\Conc{i = m+1}{(n-1)/2} 
 (\ntst{\inbr{\varrho(2{\mul}i)}.\getbr} \conc
  \ptst{\inbr{\varrho(2{\mul}i{+}1)}.\getbr} \conc
  \varphi'_m(i{+}1)) \conc 
\halt 
\end{array}
\end{ldispl}%
\hspace*{1.5em}for some function $\varphi'_m$ from 
$\set{k \in \Natpos \where k \leq (n+1)/2}$ to $\PInstr$ such that 
\begin{ldispl}
\begin{array}{@{}l@{\;}c@{\;}l}
\varphi'_m(j) & \in &
\set{\fjmp{3{\mul}k{-}1} \where
     k \in \Natpos \Land k \leq (n+1)/2 - j \Land j \leq m < j + k} 
\\ & \union &
\set{\fjmp{3{\mul}k} \where
     k \in \Natpos \Land k \leq (n+1)/2 - j \Land 
                                            \Lnot j \leq m < j + k} 
\\ & \union &
\set{\outbr.\setbr{1},\ptst{\outbr.\setbr{1}},\ntst{\outbr.\setbr{1}}}\;, 
\\
\varphi'_m((n+1)/2) & \in & 
\set{\outbr.\setbr{1},\ptst{\outbr.\setbr{1}}}
\end{array}
\end{ldispl}%
\hspace*{1.5em}and some bijection $\varrho$ on 
$\set{k \in \Natpos \where k \leq n}$. 

Obviously, we have that $\TSTNZp{n} \in \TSTNZpc{n}$ and,
for each $X \in \TSTNZpc{n}$, $\len(X) = \len(\TSTNZp{n})$.

The following theorem states that each instruction sequence from 
$\TSTNZpc{n}$ correctly implements $\tstnzc{n}$.
\begin{theorem}
\label{theorem-NZTISp-like-correct}
For all $n \in \Natpos$, for all $X \in \ISbr$, 
$X \in \TSTNZpc{n}$ only if $X$ computes $\tstnzc{n}$.
\end{theorem}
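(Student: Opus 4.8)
The plan is to reduce the claim to the two correctness propositions already proved — Proposition~\ref{prop-NZTISp-perm-correct} and Proposition~\ref{prop-NZTISp-perm-like-correct} — and then to absorb, one instruction kind at a time, each of the admissible modifications that the definition of $\TSTNZpc{n}$ permits on top of those base instruction sequences. Concretely, fix $n$ and $X \in \TSTNZpc{n}$, and let $\varrho$ (and, for odd $n$, $m$) be the witnessing data. I would first identify the ``template'' instruction sequence $X_0$ obtained from $X$ by replacing every occurrence of $\varphi(i)$ (respectively $\varphi'_m(i)$) by $\outbr.\setbr{1}$: for even $n$ this $X_0$ is exactly the permuted $\TSTNZp{n}$ covered by Proposition~\ref{prop-NZTISp-perm-correct}, and for odd $n$ it is exactly the instruction sequence $X_m$ covered by Proposition~\ref{prop-NZTISp-perm-like-correct}. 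So $X_0$ computes $\tstnzc{n}$, and it remains to show that replacing the $\outbr.\setbr{1}$ occurrences by the permitted alternatives does not change the computed function.

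The core of the argument is a semantic invariant about executions of $X_0$ and, more generally, of any instruction sequence in $\TSTNZpc{n}$: on any execution, the output register $\outbr$ is only ever written with the value $1$, and execution reaches the first $\outbr.\setbr{1}$ position exactly when the input registers inspected so far are all $0$ while the next one inspected is $1$; from the first moment an input register with content $1$ is found, the computation is ``committed'' to delivering $1$, and from that point on it suffices that execution eventually reaches \emph{some} $\outbr.\setbr{1}$ before $\halt$. I would make this precise by a straightforward induction on the block structure of $X_0$ (the $(\ntst{\cdot}\conc\ptst{\cdot}\conc\,\cdot\,)$ blocks, plus the extra $\ptst{\cdot}\conc\,\cdot$ block in the odd case), tracking the content of $\outbr$ and the next primitive instruction to be executed. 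Given this invariant, each permitted substitution is handled as follows. Replacing $\outbr.\setbr{1}$ by $\ptst{\outbr.\setbr{1}}$ is harmless because, whenever that instruction is executed, $\outbr$ already (and still) gets value $1$ and the positive test therefore always succeeds, so the successor instruction is the same as before. Replacing a non-final $\outbr.\setbr{1}$ by $\ntst{\outbr.\setbr{1}}$ skips exactly one instruction — the first read instruction of the following block — but by the invariant, once this point is reached the answer is already $1$ and skipping the inspection of that one further input register cannot change the final content of $\outbr$ (here the side condition ``not the last occurrence'' and the careful offset bookkeeping in the odd case's $\varphi'_m$ are exactly what keeps the skip landing inside the remaining code and still flowing to a later $\outbr.\setbr{1}$ and then to $\halt$). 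Replacing $\outbr.\setbr{1}$ by a forward jump $\fjmp{3k}$ (even case) or $\fjmp{3k-1}$ (odd case) is harmless because the jump length is precisely engineered so that execution lands on another $\outbr.\setbr{1}$ occurrence further down (the $3$-spacing of the blocks in the even case, shifted by one in the odd case to account for the two-instruction $\ptst{\cdot}\conc\,\cdot$ block), whence again $\outbr$ ends up with $1$ and $\halt$ is reached — and since the jumped-over input registers are only checked ``for safety'' after a $1$ has already been seen, correctness is preserved.

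I would organize the write-up as: (i) split on parity of $n$; (ii) in each case, read off $X_0$ from $X$ and invoke the relevant correctness proposition; (iii) state and prove the execution invariant for $X_0$; (iv) argue that each of the three substitution types individually preserves the computed function, using the invariant; (v) conclude by noting the substitutions are independent and can be applied simultaneously. The main obstacle I anticipate is not any single step but the bookkeeping in the odd case: verifying that, for every admissible $\varphi'_m$, each forward jump $\fjmp{3k-1}$ or $\fjmp{3k}$ actually targets an $\outbr.\setbr{1}$ (and not the middle of a block or past $\halt$) requires carefully matching the index conditions $j \le m < j+k$ versus $\Lnot(j \le m < j+k)$ against the two-instruction-versus-three-instruction block layout around position $2m+1$. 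I would handle this by computing, once and for all, the primitive-instruction offset of the start of the block associated with index $i$ as a function of $i$ and $m$, and then checking the two jump formulas against that offset table; this is routine arithmetic but is where an error would most plausibly hide.
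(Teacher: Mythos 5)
Your proposal follows essentially the same route as the paper: both reduce to Propositions~\ref{prop-NZTISp-perm-correct} and~\ref{prop-NZTISp-perm-like-correct} via the all-$\outbr.\setbr{1}$ template and then argue that the permitted substitutions cannot change the computed function because, once an input register containing $1$ has been read, the result is committed to $1$ and the remaining code can only write $1$ to $\outbr$ before reaching $\halt$. The only organizational difference is that the paper undoes the substitutions one at a time by induction on their number (which cleanly handles the fact that a forward jump may target another substituted position, i.e., another jump or a test variant rather than a literal $\outbr.\setbr{1}$), whereas you treat the substitution types separately and appeal to their simultaneity; your execution invariant covers this, but the jump-chain point should be made explicit.
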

\begin{proof}
For convenience, forward jump instructions, $\outbr.\setbr{1}$, 
$\ptst{\outbr.\setbr{1}}$, and $\ntst{\outbr.\setbr{1}}$ are called
\emph{replaceable} instructions in this proof.

Let $n \in \Natpos$, and 
let $X \in \ISbr$ be such that $X \in \TSTNZpc{n}$.
Let $Y \in \ISbr$ be obtained from $X$ by replacing all occurrences of 
a replaceable instruction other than $\outbr.\setbr{1}$ in $X$ by 
$\outbr.\setbr{1}$.
It follows immediately from Propositions~\ref{prop-NZTISp-perm-correct} 
and~\ref{prop-NZTISp-perm-like-correct} that $Y$ computes $\tstnzc{n}$. 
Hence, it remains to be proved that $X$ and $Y$ compute the same 
function from $\Bool^n$ to $\Bool$.

The fact that $X$ and $Y$ compute the same function from $\Bool^n$ to 
$\Bool$ is proved by induction on the number of occurrences of 
replaceable instructions other than $\outbr.\setbr{1}$ in $X$.
The basis step is trivial.
The inductive step goes as follows. 
Let $X' \in \ISbr$ be obtained from $X$ by replacing the first 
occurrence of a replaceable instruction other than $\outbr.\setbr{1}$ in 
$X$ by $\outbr.\setbr{1}$.
From the induction hypothesis and the fact that $Y$ computes 
$\tstnzc{n}$, it follows that $X'$ computes $\tstnzc{n}$.
Clearly, execution of $X$ and $X'$ yield the same final content of 
$\outbr$ if the initial contents of $\inbr{1},\ldots,\inbr{n}$ are 
such that execution of $X'$ does not proceed at some point with the 
replacing occurrence of $\outbr.\setbr{1}$.
What remains to be shown is that execution of $X$ and $X'$ yield the 
same final content of $\outbr$ if the initial contents of 
$\inbr{1},\ldots,\inbr{n}$ are such that execution of $X'$ proceeds at 
some point with the replacing occurrence of $\outbr.\setbr{1}$.
Call this case the decisive case.
If the decisive case occurs, then the content of at least one of the 
input registers $\inbr{1},\ldots,\inbr{n}$ is $1$.
From this and the fact that $X'$ computes $\tstnzc{n}$, it follows that 
on execution of $X'$ the final content of $\outbr$ is $1$ in the 
decisive case. 
Execution of $X$ and execution of $X'$ have the same effects in the 
decisive case until the point where $X'$ proceeds with the replacing 
occurrence of $\outbr.\setbr{1}$.
At that point, execution of $X$ proceeds with the replaced occurrence 
of a replaceable instruction other than $\outbr.\setbr{1}$ instead.
From this, the fact that $X$ contains no instructions by which the 
content of $\outbr$ can become $0$, and the fact that $X$ contains only 
forward jump instructions that lead in one or more steps to a 
replaceable instruction other than a forward jump instruction, it 
follows that on execution of $X$ the final content of $\outbr$ is also 
$1$ in the decisive case.
Hence, $X$ and $Y$ compute the same function from $\Bool^n$ to $\Bool$.
\qed
\end{proof}

There are instruction sequences in $\TSTNZpc{n}$ in which there is only
one occurrence of $\outbr.\setbr{1}$ and no occurrences of
$\ptst{\outbr.\setbr{1}}$ or $\ntst{\outbr.\setbr{1}}$.
These instruction sequences compute $\tstnzc{n}$ much more efficiently 
than $\TSTNZp{n}$ because, once an input register is encountered whose 
content is $1$, checking of the remaining input registers is skipped.

Not all instruction sequences with the same length as $\TSTNZp{n}$ that 
correctly implement $\tstnzc{n}$ belong to $\TSTNZpc{n}$.
If $n$ is odd and $n > 1$, $\fjmp{2}$ may occur once in an instruction 
sequence that belongs to $\TSTNZpc{n}$.
Let $X \in \TSTNZpc{n}$ be such that $\fjmp{2}$ occurs in $X$, and
let $i \in \Natpos$ be such that $\ptst{\inbr{i}.\getbr}$ or 
$\ntst{\inbr{i}.\getbr}$ occurs before $\fjmp{2}$ in $X$.
If the occurrence of $\fjmp{2}$ will only be executed if the content of 
$\inbr{i}$ is $0$, then its replacement by an occurrence of 
$\ptst{\inbr{i}.\getbr}$ yields a correct implementation of $\tstnzc{n}$
that does not belong to $\TSTNZpc{n}$.
Because of this, we introduce an extension of $\TSTNZpc{n}$.

We define a subset $\TSTNZpce{n}$ of $\ISbr$ as follows: 
\begin{itemize}
\item
if no input register is read more than once in $X$, then 
$X \in \TSTNZpce{n}$ iff $X \in \TSTNZpc{n}$;
\item
if at least two input registers are read more than once in $X$ or at 
least one input register is read more than twice in $X$, then 
$X \notin \TSTNZpce{n}$;
\item
if exactly one input register is read more than once in $X$ and the 
input register concerned is read exactly twice in $X$, then 
$X \in \TSTNZpce{n}$ iff
\begin{itemize}
\item[(a)]
no chain of zero or more forward jump instructions in $X$ that begins 
at or before its $(k+2)$th primitive instruction leads to its $l$th 
primitive instruction of $X$,
\item[(b)]
the $l$th primitive instruction of $X$ is $\ptst{\inbr{i}.\getbr}$ for 
some $i \leq n$,
\item[(c)]
$X'\in \TSTNZpc{n}$, 
\end{itemize}
where $k$ and $l$, with $k \neq l$, are such that the same input 
register is read by the $k$th and $l$th primitive instruction of $X$, 
and where $X'$ is $X$ with the $l$th primitive instruction of $X$ 
replaced by $\fjmp{2}$.
\end{itemize}

Obviously, we have that $\TSTNZp{n} \in \TSTNZpce{n}$, and
for each $X \in \TSTNZpce{n}$, $\len(X) = \len(\TSTNZp{n})$.
Moreover, we have that $\TSTNZpc{n} = \TSTNZpce{n}$ if $n$ is even and
$\TSTNZpc{n} \neq \TSTNZpce{n}$ if $n$ is odd and $n > 3$.

The following theorem states that each instruction sequence from 
$\TSTNZpce{n}$ correctly implements $\tstnzc{n}$.
\begin{theorem}
\label{theorem-more-NZTISp-like-correct}
For all $n \in \Natpos$, for all $X \in \ISbr$, 
$X \in \TSTNZpce{n}$ only if $X$ computes $\tstnzc{n}$.
\end{theorem}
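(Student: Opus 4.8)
The plan is to argue by the same three-way case distinction that is used in the definition of $\TSTNZpce{n}$. When no input register is read more than once in $X$, membership of $X$ in $\TSTNZpce{n}$ is by definition the same as membership in $\TSTNZpc{n}$, so the statement follows at once from Theorem~\ref{theorem-NZTISp-like-correct}. When at least two input registers are read more than once, or some input register is read more than twice, $X \notin \TSTNZpce{n}$ and the implication holds vacuously. So everything hinges on the remaining case, in which exactly one input register, say $\inbr{i}$, is read in $X$ --- by exactly two occurrences of a read instruction, at the $k$th and $l$th primitive instruction --- and conditions (a), (b) and (c) all hold.

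Here the strategy is to reduce to Theorem~\ref{theorem-NZTISp-like-correct} via the instruction sequence $X'$ from the definition, i.e.\ $X$ with its $l$th primitive instruction replaced by $\fjmp{2}$. By condition (c), $X' \in \TSTNZpc{n}$, so $X'$ computes $\tstnzc{n}$ by Theorem~\ref{theorem-NZTISp-like-correct}. Since $X$ and $X'$ are identical except that the $l$th primitive instruction is $\ptst{\inbr{i}.\getbr}$ in $X$ (condition (b), with $i \leq n$) and $\fjmp{2}$ in $X'$, it remains to show that $X$ and $X'$ compute the same function from $\Bool^n$ to $\Bool$. First I would fix the initial contents of $\inbr{1},\ldots,\inbr{n}$ and follow the shared execution of $X$ and $X'$ until it first reaches the $l$th primitive instruction. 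If it never does, the two executions coincide and there is nothing more to do. If it does, I would split into the case in which an occurrence of $\outbr.\setbr{1}$, $\ptst{\outbr.\setbr{1}}$ or $\ntst{\outbr.\setbr{1}}$ has already been executed and the case in which none has. In the first case $\outbr$ already contains $1$ and, because some read-instruction pair must have ``fired'', $\tstnzc{n}$ of the input equals $1$; and since $X$ contains no instruction that can set $\outbr$ to $0$ and no forward jump that does not ultimately lead to one of those three replaceable instructions, execution of $X$ from the $l$th primitive instruction onwards terminates with $\outbr$ still holding $1$, which is correct. In the second case I claim that $\inbr{i}$ then contains $0$, so that $\ptst{\inbr{i}.\getbr}$ and $\fjmp{2}$ do exactly the same thing at the $l$th primitive instruction (skip the next primitive instruction, change nothing), whence the executions of $X$ and $X'$ also coincide from there on.

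Proving that claim is the heart of the argument. Reading off the fixed shape of members of $\TSTNZpc{n}$ --- and noting that $\fjmp{2}$ is admissible only in the odd-$n$ template, as the ``output slot'' immediately following the $m$th read-instruction pair --- one sees that the $l$th primitive instruction of $X$ occupies exactly that slot, and that condition (a), applied already to the trivial zero-jump ``chain'', forces $k + 2 < l$; hence $\inbr{i}$ is read, at the $k$th primitive instruction, inside one of the \emph{earlier} read-instruction pairs, say pair $j_0$. Now assume no instruction of the $\outbr.\setbr{1}$ kind has been executed when execution reaches the $l$th primitive instruction; then in particular no forward jump has been executed, since in $X$ forward jumps occur only at output slots and those are reached only when their pair fires or a forward jump lands on them. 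The first output slot visited is therefore visited because its own pair fires; if that pair were pair $j_0$ or an earlier one, the forward-jump chain starting at that slot would begin at or before the $(k{+}2)$th primitive instruction and, by condition (a), could not reach the $l$th primitive instruction, so it would run into an $\outbr.\setbr{1}$-kind instruction, contrary to assumption. Hence no pair up to and including pair $j_0$ fires, execution runs linearly through pairs $1,\ldots,j_0$, in particular reads $\inbr{i}$ at the $k$th primitive instruction, and reads it as $0$; since no instruction of $X$ alters an input register, $\inbr{i}$ still contains $0$ when execution reaches the $l$th primitive instruction. The main obstacle I expect is carrying out this control-flow analysis rigorously --- keeping exact track of where the forward jumps of a $\TSTNZpc{n}$ sequence can land within the pair-and-isolated-read layout and how that interacts with the bound $k+2$ of condition (a) --- whereas the remaining bookkeeping is routine.
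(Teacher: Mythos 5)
Your proof is correct and follows essentially the same route as the paper's: reduce to Theorem~\ref{theorem-NZTISp-like-correct} via the sequence $X'$ of condition~(c), and then use condition~(a) to show that the replaced read instruction is executed only when it acts exactly like $\fjmp{2}$ or when $\outbr$ has already been set to $1$. The only divergence is that the paper also treats the case $k > l$ explicitly (showing it is harmless there), whereas you rule it out via the zero-length-chain reading of condition~(a) --- a reading the paper itself relies on in its auxiliary condition~(2), so your shortcut is sound, and your control-flow analysis of why the claim follows from condition~(a) is in fact more detailed than the paper's, which essentially asserts it.
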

\begin{proof}
By Theorem~\ref{theorem-NZTISp-like-correct}, it suffices to show that 
$X$ computes $\tstnzc{n}$ if exactly one input register is read more 
than once in $X$, that input register is read exactly twice in $X$, and 
$X$ satisfies conditions~(a), (b), and~(c) from the definition of 
$\TSTNZpce{n}$.

We know from the definition of $\TSTNZpc{n}$, that $\fjmp{2}$ occurs at 
most once in an instruction sequence from $\TSTNZpc{n}$.
By conditions~(b) and~(c), there is a $Y \in \TSTNZpc{n}$ in which 
$\fjmp{2}$ occurs such that $X$ is $Y$ with the (unique) occurrence of
$\fjmp{2}$ replaced by $\ptst{\inbr{i}.\getbr}$ for some $i \leq n$.
Let $k$ and $l$, with $k \neq l$, be such that the same input register 
is read by the $k$th and $l$th primitive instruction of $X$
and the $l$th primitive instruction of $X$ is the replacing instruction.

If $k > l$ and the initial content of the register read by the replacing 
instruction is $0$, its execution has the same effect as the execution 
of $\fjmp{2}$ and consequently $X$ computes $\tstnzc{n}$.
If $k > l$ and the initial content of the register read by the replacing 
instruction is $1$, although its execution has the same effect as the 
execution of $\fjmp{1}$, the $k$th primitive instruction is eventually 
executed and consequently $X$ computes $\tstnzc{n}$.

If $k < l$, by Theorem~\ref{theorem-NZTISp-like-correct} and the 
definition of $\TSTNZpc{n}$, $X$ computes $\tstnzc{n}$ iff this 
replacing instruction is executed only in the case that its execution 
has the same effect as the execution of $\fjmp{2}$ or its execution is 
preceded by the execution of an occurrence of $\outbr.\setbr{1}$, 
$\ptst{\outbr.\setbr{1}}$ or $\ntst{\outbr.\setbr{1}}$.
So, what is left to be shown is the following claim: 
\begin{quote}
if $k < l$, the replacing 
instruction is  executed only in the case that the initial content of 
the input register involved is $0$ or its execution is preceded by the 
execution of an occurrence of $\outbr.\setbr{1}$, 
$\ptst{\outbr.\setbr{1}}$ or $\ntst{\outbr.\setbr{1}}$.
\end{quote}
By the definition of $\TSTNZpc{n}$, this claim is right if the following 
conditions are satisfied:
\begin{pplist}
\item[(1)] no chain of forward jump instructions that begins before the $k$th 
primitive instruction of $X$ leads to the $l$th primitive instruction of 
$X$;
\item[(2)] the chain of (zero or more) forward jump instructions that begins 
at the $(k+1)$th primitive instruction of $X$ if the $(k+1)$th primitive 
instruction of $X$ is not a read instruction and at the $(k+2)$th 
primitive in\-struc\-tion of $X$ otherwise does not lead to the $l$th 
primitive instruction of $X$.
\end{pplist}
We know from the definition of $\TSTNZpce{n}$ that the $(k+1)$th 
primitive instruction of $X$ is not a read instruction iff the $(k+2)$th 
primitive instruction of $X$ is a read instruction.
Therefore, conditions (1) and (2) are satisfied if condition~(a) from 
the definition  of $\TSTNZpce{n}$ is satisfied.
Hence, the claim is right, and the proof is complete.
\qed
\end{proof}

The following theorem states that each instruction sequence with the 
same length as $\TSTNZp{n}$ that correctly implements $\tstnzc{n}$ 
belongs to $\TSTNZpce{n}$.
\begin{theorem}
\label{theorem-correct-NZTISp-like}
For all $n \in \Natpos$, 
for all $X \in \ISbr$ with $\len(X) = \len(\TSTNZp{n})$, 
$X$ computes $\tstnzc{n}$ only if $X \in \TSTNZpce{n}$.
\end{theorem}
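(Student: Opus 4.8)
The plan is to prove the theorem by (strong) induction on $n$, refining the argument used for Theorem~\ref{theorem-NZTISp-shortest}: there the ``non-standard'' shapes of a minimal-length instruction sequence were merely shown to lead to a contradiction, whereas here we must show that a minimal-length instruction sequence computing $\tstnzc{n}$ can \emph{only} have the shape prescribed by $\TSTNZpce{n}$. The base cases, $n=1$ and $n=2$, are settled by direct inspection: from the observations made in the basis step of the proof of Theorem~\ref{theorem-NZTISp-shortest}, together with Corollary~\ref{corollary-NZTISp-shortest}, an $X\in\ISbr$ with $\len(X)=\len(\TSTNZp{n})$ computing $\tstnzc{n}$ must consist of exactly $n$ read instructions (one per input register), one $\halt$, and the remaining slot filled by an occurrence of $\outbr.\setbr{1}$, $\ptst{\outbr.\setbr{1}}$ or $\ntst{\outbr.\setbr{1}}$, and a short enumeration of the possible orderings and instruction choices shows every such $X$ lies in $\TSTNZpce{n}$.

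For the inductive step, let $X=u_1\conc\ldots\conc u_k$ with $k=\len(\TSTNZp{n+1})$ compute $\tstnzc{n+1}$, where $n+1\geq 3$. By Theorem~\ref{theorem-NZTISp-shortest}, $\len(X)$ is the minimal possible, so we may reuse the elimination arguments from the inductive step of the proof of Theorem~\ref{theorem-NZTISp-shortest} (including the $\chi_j$-trick for an auxiliary-register-set-to-$1$ first instruction), together with Corollary~\ref{corollary-NZTISp-shortest} to exclude a first instruction of the form $\outbr.\setbr{1}$, $\ptst{\outbr.\setbr{1}}$ or $\ntst{\outbr.\setbr{1}}$; this forces $u_1$ to be $\ptst{\inbr{i}.\getbr}$ or $\ntst{\inbr{i}.\getbr}$ for some $i\leq n+1$, and moreover, since no useless read can occur in a minimal-length $X$ while $\tstnzc{n+1}$ genuinely depends on all of $\inbr{1},\ldots,\inbr{n+1}$, we have $\iregs(X)=\set{l\in\Natpos \where l\leq n+1}$, so the register-elimination notation applies. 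The proof then splits on the form of $u_1$ and, because $\len(\TSTNZp{})$ grows differently for the two parities (Corollary~\ref{corollary-NZTISp-length}), on the parity of $n$.

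The core of the argument is register elimination combined with the induction hypothesis. When $u_1=\ntst{\inbr{i}.\getbr}$: using correctness of $X$ on inputs that fix $\inbr{i}$, or the register read by $u_2$, to $1$ and all other input registers to $0$ (so that a suffix of $X$ is forced to compute the constant function $1$), and using shortening arguments against Theorem~\ref{theorem-NZTISp-shortest}, one rules out all shapes of $u_2$ and $u_3$ other than: $u_2=\ptst{\inbr{j}.\getbr}$ for some $j\leq n+1$ with $j\neq i$, and $u_3$ a forward jump instruction or an instruction $\outbr.\setbr{1}$, $\ptst{\outbr.\setbr{1}}$ or $\ntst{\outbr.\setbr{1}}$. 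Then, fixing $\inbr{i}$ and $\inbr{j}$ to $0$, the instruction sequence $Z=(u_4\conc\ldots\conc u_k)[\inbr{i}=0][\inbr{j}=0]$ computes $\tstnzc{n-1}$ and, by Corollary~\ref{corollary-NZTISp-length}, has length $k-3=\len(\TSTNZp{n-1})$, hence $Z\in\TSTNZpce{n-1}$ by the induction hypothesis. When $u_1=\ptst{\inbr{i}.\getbr}$: a length count (using Theorem~\ref{theorem-NZTISp-shortest} and Corollary~\ref{corollary-NZTISp-length}) forces $n+1$ to be odd and $u_2$ to be a forward jump or $\outbr.\setbr{1}$-type instruction, and then $Y=(u_3\conc\ldots\conc u_k)[\inbr{i}=0]$ computes $\tstnzc{n}$ with length $k-2=\len(\TSTNZp{n})$, hence $Y\in\TSTNZpce{n}$ by the induction hypothesis. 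In both cases it remains to \emph{lift} the obtained structure back through the elimination. The tail of $X$ is, up to the substitutions made, a member of $\TSTNZpce{n-1}$ (resp.\ $\TSTNZpce{n}$); the peeled-off prefix is of the shape required at the front of a member of $\TSTNZpce{n+1}$; the allowed form of the replaceable instruction $u_3$ (resp.\ $u_2$) --- a forward jump of the right stride into, or an $\outbr.\setbr{1}$-type instruction at, a position reached only after some register with content $1$ has been encountered --- is again forced by correctness, using the structure of the $\TSTNZpce{n'}$ tail; and an occurrence, in $u_4\conc\ldots\conc u_k$ (resp.\ $u_3\conc\ldots\conc u_k$), of a register eliminated in the prefix becomes, after elimination, a forward jump $\fjmp{1}$ or $\fjmp{2}$ inside a member of $\TSTNZpce{n'}$: since $\fjmp{1}$ never occurs in a member of $\TSTNZpc{n'}$ and $\fjmp{2}$ occurs only in the specific positions allowed there (and then only for odd $n'$), such an occurrence is either impossible --- so each register of the prefix is read exactly once --- or it is a single $\fjmp{2}$ whose ``un-elimination'' produces precisely the re-read permitted by the extension in the definition of $\TSTNZpce{n}$, for which conditions~(a), (b) and~(c) there are then checked to hold. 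Collecting these facts gives $X\in\TSTNZpce{n+1}$.

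I expect the main obstacle to be exactly this combination of the lifting step with the complete exclusion of near-miss configurations: minimality of $\len(X)$ alone does not rule out all non-conforming shapes of $u_2$, of $u_3$ (resp.\ $u_2$), or of a second occurrence of an already-read register, so many of them must be ruled out semantically, by exhibiting a specific input on which the candidate $X$ returns the wrong value --- typically one that fixes one of the registers read in the prefix to $1$ and every other input register to $0$, forcing a suffix of $X$ to compute the constant function $1$, which contradicts that this same suffix, under the complementary substitution, was shown via the induction hypothesis to compute the non-constant function $\tstnzc{m}$. Careful bookkeeping of the re-indexing of input registers under the (iterated) register elimination, and the separate treatment of the small cases, complete the argument.
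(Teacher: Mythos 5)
Your proposal follows essentially the same route as the paper's proof: induction on $n$, forcing $u_1$ (and, in the hard case, $u_2$) to be read instructions via the shortening/$\chi_j$ arguments against Theorem~\ref{theorem-NZTISp-shortest} and Corollary~\ref{corollary-NZTISp-shortest}, applying the induction hypothesis to the register-eliminated tail, and lifting the $\TSTNZpce{n'}$ structure back through the elimination by observing that a re-read of an eliminated register would become $\fjmp{1}$ (impossible) or a single $\fjmp{2}$ in the one position permitted for odd $n'$, which is exactly the re-read allowed by the extension clause of $\TSTNZpce{n}$. The only differences are organizational (you add $n=2$ as a base case, which in fact handles the small-index boundary of the induction slightly more carefully, and you treat the $\ntst{\inbr{i}.\getbr}$ case uniformly where the paper shortcuts the odd-$n$ subcase by peeling off only $u_1$), and they do not affect correctness.
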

\begin{proof}
We prove this theorem by induction on $n$.
In the proof, we use the notation $\chi_i(X)$, where $i \in \Natpos$,
defined at the beginning of the proof of
Theorem~\ref{theorem-NZTISp-shortest}.

The basis step consists of proving that for all $X \in \ISbr$ with 
$\len(X) = 3$, $X$ computes $\tstnzc{1}$ only if $X \in \TSTNZpce{1}$.
This follows trivially from the observations made at the beginning of 
the proof of Theorem~\ref{theorem-NZTISp-shortest}.

\sloppy
The inductive step goes in the following way. 
Suppose that $X \in \ISbr$, $\len(X) = \len(\TSTNZp{n+1})$, and $X$ 
computes $\tstnzc{n+1}$.
Without loss of generality, we may assume that 
$X = u_1 \conc \ldots \conc u_k$, where
$k = \len(\TSTNZp{n+1})$, $u_1,\ldots,u_k \in \PInstr$, and 
$u_1$ is $\ptst{\inbr{i}.\getbr}$ or $\ntst{\inbr{i}.\getbr}$ for some
$i \in \Natpos$ such that $i \leq n + 1$.
This can be seen as follows:
\begin{plist}
\item
if $u_1$ is $\halt$ or $\fjmp{l}$ with $l = 0$ or $l \geq k$, then 
$X$ cannot compute $\tstnzc{n+1}$;
\item
if $u_1$ is $\fjmp{l}$ with $0 < l < k$, then there is an 
$X' \in \ISbr$ that computes $\tstnzc{n+1}$ such that 
$\len(X') < \len(\TSTNZp{n+1})$ --- which contradicts 
Theorem~\ref{theorem-NZTISp-shortest};
\item
if $u_1$ is $\outbr.\setbr{0}$, $\ptst{\outbr.\setbr{0}}$ or 
$\ntst{\outbr.\setbr{0}}$, then $u_1$ can be replaced by $\fjmp{1}$ or 
$\fjmp{2}$ in $X$ and so there is an $X' \in \ISbr$ that computes 
$\tstnzc{n+1}$ such that $\len(X') < \len(\TSTNZp{n+1})$ --- which 
contradicts Theorem~\ref{theorem-NZTISp-shortest};
\item
if $u_1$ is $\outbr.\setbr{1}$ or $\ptst{\outbr.\setbr{1}}$, then, 
by Corollary~\ref{corollary-NZTISp-shortest}, 
$\len(X) > \len(\TSTNZp{n+1})$ --- which contradicts the assumption that
$\len(X) = \len(\TSTNZp{n+1})$;
\item
if $u_1$ is $\ntst{\outbr.\setbr{1}}$, then $u_1 \conc u_2$ can be 
replaced by $\outbr.\setbr{1}$ in $X$ and so there is an $X' \in \ISbr$ 
that computes $\tstnzc{n+1}$ such that $\len(X') < \len(\TSTNZp{n+1})$ 
--- which contradicts Theorem~\ref{theorem-NZTISp-shortest};
\item
if $u_1$ is $\auxbr{j}.\setbr{0}$, $\ptst{\auxbr{j}.\setbr{0}}$, 
$\ntst{\auxbr{j}.\setbr{0}}$, $\auxbr{j}.\getbr$, 
$\ptst{\auxbr{j}.\getbr}$ or $\ntst{\auxbr{j}.\getbr}$ for some 
$j \in \Natpos$, then $u_1$ can be replaced by $\fjmp{1}$ or $\fjmp{2}$ 
in $X$ and so there is an $X' \in \ISbr$ that computes $\tstnzc{n+1}$ 
such that $\len(X') < \len(\TSTNZp{n+1})$ --- which contradicts 
Theorem~\ref{theorem-NZTISp-shortest};
\item
if $u_1$ is $\auxbr{j}.\setbr{1}$, $\ptst{\auxbr{j}.\setbr{1}}$ or 
$\ntst{\auxbr{j}.\setbr{1}}$ for some $j \in \Natpos$, then 
$\chi_j(u_1)$ can be replaced by $\fjmp{1}$ or $\fjmp{2}$ in $\chi_j(X)$ 
and so, because $\chi_j(X)$ also computes $\tstnzc{n+1}$ and 
$\len(\chi_j(X)) = \len(X)$, there is an $X' \in \ISbr$ such that $X'$ 
computes $\tstnzc{n+1}$ and $\len(X') < \len(\TSTNZp{n+1})$ --- which 
contradicts Theorem~\ref{theorem-NZTISp-shortest};
\item
if $u_1$ is $\inbr{j}.\getbr$ for some $j \in \Natpos$, then $u_1$ can 
be replaced by $\fjmp{1}$ in $X$ and so there is an $X' \in \ISbr$ that 
computes $\tstnzc{n+1}$ such that $\len(X') < \len(\TSTNZp{n+1})$ --- 
which contradicts Theorem~\ref{theorem-NZTISp-shortest};
\item
if $u_1$ is $\ptst{\inbr{j}.\getbr}$ or $\ntst{\inbr{j}.\getbr}$ for 
some $j \in \Natpos$ such that $j > n + 1$, then, because the final 
content of $\outbr$ is independent of the initial content of $\inbr{j}$,
$u_1$ can be replaced by $\fjmp{1}$ and $\fjmp{2}$ in $X$ and so there 
is an $X' \in \ISbr$ that computes $\tstnzc{n+1}$ such that 
$\len(X') < \len(\TSTNZp{n+1})$ ---  which contradicts 
Theorem~\ref{theorem-NZTISp-shortest}.
\end{plist}
So, we distinguish between 
the case that $u_1$ is $\ptst{\inbr{i}.\getbr}$ and 
the case that $u_1$ is $\ntst{\inbr{i}.\getbr}$.
In both cases, we further distinguish between the case that $n$ is even
and the case that $n$ is odd.

The case that $u_1$ is $\ptst{\inbr{i}.\getbr}$ and $n$ is even goes as
follows.
Let $Y = u_3 \conc\nolinebreak \ldots \conc\nolinebreak u_k$, and 
let $Y' = Y[\inbr{i} = 0]$. 
Then $Y'$ computes $\tstnzc{n}$.
Moreover, by Proposition~\ref{prop-NZTISp-length}, 
$\len(Y') = \len(\TSTNZp{n})$.
Hence, by the induction hypothesis, $Y' \in \TSTNZpce{n}$.
From this and the fact that $n$ is even, it follows that $\fjmp{1}$ and 
$\fjmp{2}$ do not occur in $Y'$.
Consequently, $Y$ is simply $Y'$ with, for each $j \in \Natpos$ with 
$i < j \leq n + 1$, all occurrences of the register name $\inbr{j{-}1}$ 
replaced by $\inbr{j}$.
Moreover, because $X$ computes $\tstnzc{n+1}$, $u_2$ is 
$\outbr.\setbr{1}$, $\ptst{\outbr.\setbr{1}}$, $\ntst{\outbr.\setbr{1}}$ 
or a forward jump instruction that leads to an occurrence of such an 
instruction.
Hence, $X \in \TSTNZpce{n+1}$.

The case that $u_1$ is $\ptst{\inbr{i}.\getbr}$ and $n$ is odd goes as
follows.
Let $Y = u_3 \conc \ldots \conc u_k$, and let $Y' = Y[\inbr{i} = 0]$. 
Then $Y'$ computes $\tstnzc{n}$.
Moreover, by Proposition~\ref{prop-NZTISp-length}, 
$\len(Y') = \len(\TSTNZp{n}) - 1$.
This contradicts Theorem~\ref{theorem-NZTISp-shortest}.
Hence, this case cannot occur.

The case that $u_1$ is $\ntst{\inbr{i}.\getbr}$ and $n$ is odd goes as
follows.
Let $Y = u_2 \conc \ldots \conc u_k$, and let $Y' = Y[\inbr{i} = 0]$. 
Then $Y'$ computes $\tstnzc{n}$.
Moreover, by Proposition~\ref{prop-NZTISp-length}, 
$\len(Y') = \len(\TSTNZp{n})$.
Hence, by the induction hypothesis, $Y' \in \TSTNZpce{n}$.
From this and the fact that $n$ is odd, it follows that $\fjmp{1}$ does 
not occur in $Y'$ and $\fjmp{2}$ may occur at most once in $Y'$.
If $\fjmp{2}$ occurs in $Y'$, then it occurs immediately before the 
unique isolated read instruction of $Y'$.
However, because $X$ computes $\tstnzc{n+1}$ and 
$\len(X) = \len(\TSTNZp{n+1})$, $u_2$ must be a positive read 
instruction. 
Therefore, $u_2$ is the unique isolated read instruction of $Y'$.
From this, it follows that $\fjmp{2}$ does not occur in $Y'$.
Consequently, $Y$ is $Y'$ with, for each $j \in \Natpos$ with 
$i < j \leq n + 1$, all occurrences of the register name $\inbr{j{-}1}$ 
replaced by $\inbr{j}$.
Hence, $X \in \TSTNZpce{n+1}$.

The case that $u_1$ is $\ntst{\inbr{i}.\getbr}$ and $n$ is even goes as 
follows. 
Let $Y = u_2 \conc\nolinebreak \ldots \conc\nolinebreak u_k$, and 
let $Y' = Y[\inbr{i} = 0]$.
Then $Y'$ computes $\tstnzc{n}$. 
We cannot derive a contradiction immediately as in the previous three 
cases.
A case distinction on $u_2$ is needed.
All cases other than the case that $u_2$ is $\ptst{\inbr{j}.\getbr}$ 
and the case that $u_2$ is $\ntst{\inbr{j}.\getbr}$, for some 
$j \in \Natpos$ such that $i \neq j$ and $j \leq n + 1$, cannot occur 
because a contradiction can be derived.
This can be seen as follows:
\begin{plist}
\item
if $u_2$ is $\halt$ or $\fjmp{l}$ with $l = 0$ or $l \geq k - 1$, then 
$Y'$ cannot compute $\tstnzc{n}$;
\item
if $u_2$ is $\fjmp{1}$ or $\fjmp{2}$, then, because 
$\ntst{\inbr{i}.\getbr} \conc \fjmp{1}$ can be replaced by 
$\inbr{i}.\getbr$ in $X$ and 
$\ntst{\inbr{i}.\getbr} \conc \fjmp{2}$ can be replaced by 
$\ptst{\inbr{i}.\getbr}$ in $X$, 
there is an $X' \in \ISbr$ that computes 
$\tstnzc{n+1}$ such that $\len(X') < \len(\TSTNZp{n+1})$ --- which 
contradicts Theorem~\ref{theorem-NZTISp-shortest};
\item
if $u_2$ is $\fjmp{l}$ with $2 < l < k - 1$, then 
there is an $Y'' \in \ISbr$ such that $Y''$ computes $\tstnzc{n}$ and, 
by Proposition~\ref{prop-NZTISp-length}, $\len(Y'') < \len(\TSTNZp{n})$ 
--- which contradicts 
Theorem~\ref{theorem-NZTISp-shortest};
\item
if $u_2$ is $\outbr.\setbr{0}$, $\ptst{\outbr.\setbr{0}}$ or 
$\ntst{\outbr.\setbr{0}}$, then $u_2$ can be replaced by $\fjmp{1}$ or 
$\fjmp{2}$ in $X$ and so, because 
$\ntst{\inbr{i}.\getbr} \conc \fjmp{1}$ can be replaced by 
$\inbr{i}.\getbr$ and 
$\ntst{\inbr{i}.\getbr} \conc \fjmp{2}$ can be replaced by 
$\ptst{\inbr{i}.\getbr}$, 
there is an $X' \in \ISbr$ that computes 
$\tstnzc{n+1}$ such that $\len(X') < \len(\TSTNZp{n+1})$ --- which 
contradicts Theorem~\ref{theorem-NZTISp-shortest};  
\item
if $u_2$ is $\outbr.\setbr{1}$ or $\ptst{\outbr.\setbr{1}}$, then we
make a case distinction on the initial content of $\inbr{i}$:
(a)~if $\inbr{i}$ contains $0$, then, because $X$ computes 
$\tstnzc{n+1}$, execution of $Y$ yields the right final content of 
$\outbr$;
(b)~if $\inbr{i}$ contains $1$, then, because $X$ computes 
$\tstnzc{n+1}$, execution of $u_3 \conc \ldots \conc u_k$ always yields 
$1$ as the final content of $\outbr$ and consequently, because there 
exist no instructions to read the content of $\outbr$, $Y$ always yields 
$1$ as the final content of $\outbr$ too;
hence $Y$ computes $\tstnzc{n+1}$ and so there is an $X' \in \ISbr$ that computes $\tstnzc{n+1}$ such that 
$\len(X') < \len(\TSTNZp{n+1})$ --- which contradicts 
Theorem~\ref{theorem-NZTISp-shortest};
\item
if $u_2$ is $\ntst{\outbr.\setbr{1}}$, then $u_2 \conc u_3$ can be 
replaced by $\outbr.\setbr{1}$ in $Y'$ and so, by 
Proposition~\ref{prop-NZTISp-length}, there is a $Y'' \in \ISbr$ of the 
form $\outbr.\setbr{1} \conc Z$ such that $Y''$ computes $\tstnzc{n}$ 
and $\len(Y'') = \len(\TSTNZp{n})$ and consequently, by the induction 
hypothesis, $Y'' \in \TSTNZpce{n}$ --- which contradicts the fact that 
$Y''$ is of the form $\outbr.\setbr{1} \conc Z$;
\item
\sloppy
if $u_2$ is $\auxbr{j'}.\setbr{0}$, $\ptst{\auxbr{j'}.\setbr{0}}$, 
$\ntst{\auxbr{j'}.\setbr{0}}$, $\auxbr{j'}.\getbr$, 
$\ptst{\auxbr{j'}.\getbr}$ or $\ntst{\auxbr{j'}.\getbr}$ for some 
$j' \in \Natpos$, then $u_2$ can be replaced by $\fjmp{1}$ or $\fjmp{2}$ 
in $X$ and so, because 
$\ntst{\inbr{i}.\getbr} \conc \fjmp{1}$ can be replaced by 
$\inbr{i}.\getbr$ and 
$\ntst{\inbr{i}.\getbr} \conc \fjmp{2}$ can be replaced by 
$\ptst{\inbr{i}.\getbr}$, 
there is an $X' \in \ISbr$ that computes $\tstnzc{n+1}$ 
such that $\len(X') < \len(\TSTNZp{n+1})$ --- which contradicts 
Theorem~\ref{theorem-NZTISp-shortest};
\item
if $u_2$ is $\auxbr{j'}.\setbr{1}$ or $\ptst{\auxbr{j'}.\setbr{1}}$ for 
some $j' \in \Natpos$, then $\chi_{j'}(u_2)$ can be replaced by 
$\fjmp{1}$ in $\chi_{j'}(Y')$ and so, by 
Proposition~\ref{prop-NZTISp-length}, there is an $Y'' \in \ISbr$ that 
computes $\tstnzc{n}$ such that $\len(Y'') = \len(\TSTNZp{n})$ and 
consequently, by the induction hypothesis, $Y'' \in \TSTNZpce{n}$;
therefore an instruction of the form $\auxbr{j'}.\getbr$, 
$\ptst{\auxbr{j'}.\getbr}$ or $\ntst{\auxbr{j'}.\getbr}$ does not
occur in $Y$ and so $u_2$ can be replaced by $\fjmp{1}$ in $X$ and, 
because $\ntst{\inbr{i}.\getbr} \conc \fjmp{1}$ can be replaced by
$\ptst{\inbr{i}.\getbr}$, there is an $X' \in \ISbr$ that computes 
$\tstnzc{n+1}$ such that $\len(X') < \len(\TSTNZp{n+1})$ --- which 
contradicts Theorem~\ref{theorem-NZTISp-shortest};
\item
if $u_2$ is $\ntst{\auxbr{j'}.\setbr{1}}$ for some $j' \in \Natpos$, 
then $u_2 \conc u_3$ can be replaced by $\auxbr{j'}.\setbr{1}$ in 
$Y'$ and so, by Proposition~\ref{prop-NZTISp-length}, there is a 
$Y'' \in \ISbr$ of the form $\auxbr{j'}.\setbr{1} \conc Z$ such 
that $Y''$ computes $\tstnzc{n}$ and $\len(Y'') = \len(\TSTNZp{n})$ and 
consequently, by the induction hypothesis, $Y'' \in \TSTNZpce{n}$ --- 
which contradicts the fact that $Y''$ is of the form 
$\auxbr{j'}.\setbr{1} \conc Z$;
\item
if $u_2$ is $\inbr{j'}.\getbr$ for some $j' \in \Natpos$, then 
$\inbr{j'}.\getbr$ can be replaced by $\fjmp{1}$ in $X$ and so, because 
$\ntst{\inbr{i}.\getbr} \conc \fjmp{1}$ can be replaced by 
$\inbr{i}.\getbr$, there is an $X' \in \ISbr$ that computes 
$\tstnzc{n+1}$ such that $\len(X') < \len(\TSTNZp{n+1})$ --- which 
contradicts Theorem~\ref{theorem-NZTISp-shortest};
\item
if $u_2$ is $\ptst{\inbr{j'}.\getbr}$ or $\ntst{\inbr{j'}.\getbr}$ for 
some $j' \in \Natpos$ such that $j' > n + 1$, then, because the final 
content of $\outbr$ is independent of the initial content of 
$\inbr{j'}$, $u_2$ can be replaced by $\fjmp{1}$ and $\fjmp{2}$ in $X$ 
and so, because $\ntst{\inbr{j}.\getbr} \conc \fjmp{1}$ can be replaced 
by $\inbr{j}.\getbr$ and $\ntst{\inbr{j}.\getbr} \conc \fjmp{2}$ can be 
replaced by $\ptst{\inbr{j}.\getbr}$, there is an $X' \in \ISbr$ that 
computes $\tstnzc{n+1}$ such that $\len(X') < \len(\TSTNZp{n+1})$ --- 
which contradicts Theorem~\ref{theorem-NZTISp-shortest};
\item
if $u_2$ is $\ptst{\inbr{i}.\getbr}$ or $\ntst{\inbr{i}.\getbr}$, then, 
because 
$\ntst{\inbr{i}.\getbr} \conc \ptst{\inbr{i}.\getbr}$ can be replaced by 
$\ptst{\inbr{i}.\getbr}$ in $X$ and
$\ntst{\inbr{i}.\getbr} \conc \ntst{\inbr{i}.\getbr}$ can be replaced by 
$\inbr{i}.\getbr$ in $X$,
there is an $X' \in \ISbr$ that computes $\tstnzc{n+1}$ such that 
$\len(X') < \len(\TSTNZp{n+1})$ --- which contradicts 
Theorem~\ref{theorem-NZTISp-shortest}.   
\end{plist}
So, we distinguish between 
the case that $u_2$ is $\ptst{\inbr{j}.\getbr}$ and 
the case that $u_2$ is $\ntst{\inbr{j}.\getbr}$.

The case that $u_2$ is $\ntst{\inbr{j}.\getbr}$ goes as follows.
Let $Y = u_3 \conc \ldots \conc u_k$, and let $Y' = Y[\inbr{j} = 0]$. 
Then, because $X$ computes $\tstnzc{n+1}$, we have:
(i)~if the initial content of $\inbr{i}$ is $1$, execution of $Y$ 
yields $1$ as final content of $\outbr$,
(ii)~if the initial content of $\inbr{i}$ is $0$, execution of $Y'$ 
yields $1$ as final content of $\outbr$ iff the initial content of at 
least one of the input registers $\inbr{1},\ldots,\inbr{n{+}1}$ other 
than $\inbr{j}$ is $1$.
From~(i), it follows that, if the initial content of $\inbr{i}$ is $1$, 
execution of $Y'$ yields $1$ as final content of $\outbr$. 
From this and~(ii), it follows that, $Y'$ computes $\tstnzc{n}$.
Moreover, by Proposition~\ref{prop-NZTISp-length}, 
$\len(Y') = \len(\TSTNZp{n})$.
Hence, by the induction hypothesis, $Y' \in \TSTNZpce{n}$.
Consequently, knowing that $n$ is even, we have that
(a)~$\inbr{j}$ is not read by a primitive instruction occurring in 
$u_3 \conc \ldots \conc u_k$ --- because otherwise there are occurrences 
of $\fjmp{1}$ or $\fjmp{2}$ in $Y'$, which contradicts 
$Y' \in \TSTNZpce{n}$ --- and
(b)~$u_3$ is $\ntst{\inbr{j'}.\getbr}$ and 
$u_4$ is $\ptst{\inbr{j''}.\getbr}$, for $j'$ and $j''$ such that 
$j' \neq j''$.
From this, it follows that execution of $X$ yields $0$ as final content 
of $\outbr$ if the initial content of $\inbr{j}$ is $1$ and the initial 
content of all other $n$ input registers is $0$.
This contradicts the fact that $X$ computes $\tstnzc{n+1}$.
Hence, also this case cannot occur.
 
The case that $u_2$ is $\ptst{\inbr{j}.\getbr}$ goes as follows.
Let $Y = u_4 \conc \ldots \conc u_k$, and 
let $Y' = Y[\inbr{i} = 0][\inbr{j} = 0]$. 
Then $Y'$ computes $\tstnzc{n-1}$.
Moreover, by Corollary~\ref{corollary-NZTISp-length}, 
$\len(Y') = \len(\TSTNZp{n-1})$.
Hence, by the induction hypothesis, $Y' \in \TSTNZpce{n-1}$.
From this and the fact that $n-1$ is odd, it follows that $\fjmp{1}$ 
does not occur in $Y'$ and $\fjmp{2}$ may occur at most once in $Y'$.
If $\fjmp{2}$ does not occur in $Y'$, then $Y = Y'$.
If $\fjmp{2}$ occurs in $Y'$, then it occurs immediately before the 
unique isolated read instruction of $Y'$.
Moreover, if $\fjmp{2}$ occurs in $Y'$, then it replaces an occurrence 
of $\ptst{\inbr{i}.\getbr}$ or $\ptst{\inbr{j}.\getbr}$ in $Y$.
Consequently, if $\fjmp{2}$ occurs in $Y'$, $Y$ is $Y'$ with either, 
for each $j' \in \Natpos$ with $i < j' \leq n + 1$, all occurrences of 
the register name $\inbr{j'{-}1}$ replaced by $\inbr{j'}$ and the single 
occurrence of $\fjmp{2}$ in $Y'$ replaced by $\ptst{\inbr{i}.\getbr}$ 
or, 
for each $j' \in \Natpos$ with $j < j' \leq n + 1$, all occurrences of 
the register name $\inbr{j'{-}1}$ replaced by $\inbr{j'}$ and the single 
occurrence of $\fjmp{2}$ in $Y'$ replaced by  $\ptst{\inbr{j}.\getbr}$.
Moreover, because $X$ computes $\tstnzc{n+1}$ and 
$\len(X) = \len(\TSTNZp{n+1})$, $u_3$ must be $\outbr.\setbr{1}$, 
$\ptst{\outbr.\setbr{1}}$, $\ntst{\outbr.\setbr{1}}$ or a forward jump 
instruction that leads to an occurrence of such an instruction.
Hence, $X \in \TSTNZpce{n+1}$.
\qed
\end{proof}

The following corollary of 
Theorems~\ref{theorem-more-NZTISp-like-correct} 
and~\ref{theorem-correct-NZTISp-like} is used in 
Section~\ref{sect-complexity}.
\begin{corollary}
\label{corollary-correct-NZTISp-like}
For all $n \in \Natpos$, 
for all $X \in \ISbr$ with $\len(X) = \len(\TSTNZp{n})$,
$X$ computes $\tstnzc{n}$ iff $X \in \TSTNZpce{n}$.
\end{corollary}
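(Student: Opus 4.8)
The plan is to obtain the biconditional directly as the conjunction of the two immediately preceding theorems, with the length hypothesis $\len(X) = \len(\TSTNZp{n})$ doing the work only on one side. Since both implications have already been established, there is essentially nothing left to do beyond bookkeeping; the main point is to be explicit about which hypothesis feeds which direction.

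For the implication from $X \in \TSTNZpce{n}$ to ``$X$ computes $\tstnzc{n}$'', I would simply invoke Theorem~\ref{theorem-more-NZTISp-like-correct}, which asserts exactly this for every $X \in \ISbr$. It is worth noting in passing that this direction does not even require the length hypothesis: it was already recorded, right after the definition of $\TSTNZpce{n}$, that $\len(X) = \len(\TSTNZp{n})$ holds for every $X \in \TSTNZpce{n}$, so the hypothesis is automatically satisfied and is redundant here.

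For the converse implication, from ``$X$ computes $\tstnzc{n}$'' (together with $\len(X) = \len(\TSTNZp{n})$) to $X \in \TSTNZpce{n}$, I would invoke Theorem~\ref{theorem-correct-NZTISp-like}, whose statement is precisely this. Here the length hypothesis is genuinely needed, and it is exactly the one assumed in the corollary, so the invocation is immediate.

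Combining the two, for every $n \in \Natpos$ and every $X \in \ISbr$ with $\len(X) = \len(\TSTNZp{n})$ we get that $X$ computes $\tstnzc{n}$ iff $X \in \TSTNZpce{n}$, which is the assertion of the corollary. I do not anticipate any obstacle: all the substantive arguments — the correctness of the members of $\TSTNZpce{n}$ and the exhaustiveness of $\TSTNZpce{n}$ among length-minimal correct implementations — have already been carried out in proving Theorems~\ref{theorem-more-NZTISp-like-correct} and~\ref{theorem-correct-NZTISp-like}, so the proof is a one-line combination of the two.
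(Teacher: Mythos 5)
Your proposal is correct and matches the paper exactly: the paper presents this corollary without a separate proof, explicitly labelling it as a corollary of Theorems~\ref{theorem-more-NZTISp-like-correct} and~\ref{theorem-correct-NZTISp-like}, which supply the two directions of the biconditional just as you describe. Your additional observation that the length hypothesis is only needed for the ``only if'' direction is accurate and consistent with the remark in the paper that every member of $\TSTNZpce{n}$ already has length $\len(\TSTNZp{n})$.
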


The following corollary of Theorem~\ref{theorem-NZTISp-shortest} and
Corollary~\ref{corollary-correct-NZTISp-like} is interesting because it
tells us that the use of auxiliary Boolean registers is lacking in all 
shortest instruction sequences computing $\tstnzc{n}$ 
(cf.~\cite{BM14e}).
\begin{corollary}
\label{corollary-correct-aux}
For all $n \in \Natpos$, for all $X \in \ISbr$ that compute 
$\tstnzc{n}$, $\len(X) > \len(\TSTNZp{n})$ if a register name of the 
form $\auxbr{i}$ appears in a primitive instruction occurring in $X$.
\end{corollary}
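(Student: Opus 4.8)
The plan is to obtain this directly from the two results it is billed as a corollary of. First I would recall that Theorem~\ref{theorem-NZTISp-shortest} already gives $\len(X) \geq \len(\TSTNZp{n})$ for every $X \in \ISbr$ that computes $\tstnzc{n}$, so what remains is only to exclude the boundary case $\len(X) = \len(\TSTNZp{n})$ whenever a register name of the form $\auxbr{i}$ occurs in $X$. I would therefore argue by contradiction: suppose $X$ computes $\tstnzc{n}$, some register name $\auxbr{i}$ appears in a primitive instruction of $X$, and $\len(X) = \len(\TSTNZp{n})$. By Corollary~\ref{corollary-correct-NZTISp-like}, it then follows that $X \in \TSTNZpce{n}$.

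The only step that requires any work — and it is routine inspection rather than genuine argument — is to check that no instruction sequence in $\TSTNZpce{n}$ contains a register name of the form $\auxbr{i}$. For this I would go through the definition of $\TSTNZpc{n}$ and note that every element of it is built solely from read instructions $\ntst{\inbr{\varrho(\cdot)}.\getbr}$ and $\ptst{\inbr{\varrho(\cdot)}.\getbr}$ on input registers, the output instructions $\outbr.\setbr{1}$, $\ptst{\outbr.\setbr{1}}$, $\ntst{\outbr.\setbr{1}}$, forward jump instructions $\fjmp{l}$, and the termination instruction $\halt$ — the ranges prescribed for $\varphi$ and $\varphi'_m$ admit only jump and output instructions, so no auxiliary register name can enter. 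Then I would note that the extension from $\TSTNZpc{n}$ to $\TSTNZpce{n}$ changes nothing in this respect: in the odd case it merely allows the (at most one) occurrence of $\fjmp{2}$ to be replaced by a read instruction $\ptst{\inbr{i}.\getbr}$ on an input register, which again mentions no auxiliary register. Hence $X \in \TSTNZpce{n}$ contradicts the assumption that $\auxbr{i}$ occurs in a primitive instruction of $X$.

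Consequently the assumed equality $\len(X) = \len(\TSTNZp{n})$ is impossible, and combining this with the lower bound $\len(X) \geq \len(\TSTNZp{n})$ from Theorem~\ref{theorem-NZTISp-shortest} yields $\len(X) > \len(\TSTNZp{n})$, as required. The main (and really the only) obstacle is making the syntactic inspection of the definition of $\TSTNZpce{n}$ fully explicit, i.e.\ being precise that neither the prescribed ranges of $\varphi$ and $\varphi'_m$ nor the extra clause governing the odd case can ever introduce an occurrence of $\auxbr{i}$; but there is no real difficulty here.
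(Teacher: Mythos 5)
Your proposal is correct and follows exactly the derivation the paper intends: Theorem~\ref{theorem-NZTISp-shortest} gives the weak inequality, Corollary~\ref{corollary-correct-NZTISp-like} forces any length-minimal correct $X$ into $\TSTNZpce{n}$, and a syntactic inspection of the definitions of $\TSTNZpc{n}$ and $\TSTNZpce{n}$ confirms that no member mentions an auxiliary register name. The paper leaves this unproved as an immediate corollary, and your write-up supplies precisely the missing routine details.
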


\section{The Complexity of the Correctness Problem}
\label{sect-complexity}

In this section, we study the time complexity of several restrictions of 
the problem of deciding whether an arbitrary instruction sequence from 
$\ISbr$ correctly implements the function $\tstnzc{n}$, for $n > 0$.
The restrictions considered, with the exception of one, concern only the 
length of the arbitrary instruction sequence.
The model of computation used for time complexities in this section is 
the random access machine (RAM) as described in~\cite{CR73a}.
This is made explicit in the formulation of theorems and lemmas by the 
phrase \emph{on a RAM}, but left implicit in their proofs.

A RAM consists of a read-only input tape, a write-only output tape, a 
memory consisting of an unbounded number of direct and indirect 
addressable registers that can contain an arbitrary integer, and a 
program.
The program for a RAM is a sequence of instructions, where the 
instructions include input/output instructions, arithmetic instructions,
copy instructions, and jump instructions.
The RAM model of computation described in~\cite{CR73a} differs slightly
from the one described in~\cite[Chapter~1]{AHU74a}.
The main difference is that multiplication and division instructions are
absent in the former model and present in the latter model.

Time complexities on a RAM are usually under one of the following two 
cost criteria: the uniform cost criterion and the logarithmic cost 
criterion (terminology from~\cite[Chapter~1]{AHU74a}).
The time complexities mentioned in this section are time complexities 
under the uniform cost criterion.
It is a well-known fact that, if a problem can be solved in $O(T(n))$ 
time on a RAM (without multiplication and division instructions) under 
the uniform cost criterion, then it can be solved in $O(T^3(n))$ time on 
a multi-tape Turing machine.

Most primitive instructions that may occur in instruction sequences 
from $\ISbr$ can be looked upon as consisting of two parts: a form of
instruction and a natural number.
In the case that the RAM model of computation described in~\cite{CR73a}
is used, it is contributive to the efficiency of algorithms to represent 
each instruction by two integers: one representing a form of instruction
and the other being a natural number if that is needed for the form of 
instruction concerned and $-1$ otherwise. 
Therefore, the time complexities mentioned in this section are based on 
such a representation of primitive instructions.

Firstly, we consider the problem of determining whether an arbitrary 
instruction sequence from $\ISbr$ whose length is $\len(\TSTNZp{n})$ 
correctly implements the function $\tstnzc{n}$, for $n > 0$. 
The following theorem states that this problem can be solved in $O(n^2)$
time on a RAM.
\begin{theorem}
\label{theorem-complexity-shortest}
The problem of deciding whether an $X \in \ISbr$ such that 
$\len(X) = \len(\TSTNZp{n})$ computes $\tstnzc{n}$, for $n \in \Natpos$, 
can be solved in $O(n^2)$ time on a RAM.
\end{theorem}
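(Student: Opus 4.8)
The plan is to reduce the problem to a syntactic membership test. By Corollary~\ref{corollary-correct-NZTISp-like}, for $X \in \ISbr$ with $\len(X) = \len(\TSTNZp{n})$ we have that $X$ computes $\tstnzc{n}$ iff $X \in \TSTNZpce{n}$, so it suffices to decide $X \in \TSTNZpce{n}$ in $O(n^2)$ time. Note that $\len(\TSTNZp{n}) = O(n)$; hence $X$ consists of $O(n)$ primitive instructions, each represented by a fixed number of integers, and a single left-to-right pass over $X$ takes $O(n)$ time. From $\len(X)$ we also recover $n$ and its parity in constant additional time.

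First I would decide whether $X \in \TSTNZpc{n}$. For even $n$ this amounts to checking that $X$ consists of $n/2$ consecutive triples, each comprising a negative read instruction, a positive read instruction, and a $\varphi$-slot, followed by $\halt$; for odd $n$ it amounts to checking the analogous structure around the single isolated positive read instruction of $X$, whose position fixes the parameter $m$. Each constraint on a $\varphi$-slot (or $\varphi'_m$-slot) is local: given the block index, $n$, and, in the odd case, $m$, one checks in constant time that the instruction in the slot is $\halt$, or one of $\outbr.\setbr{1}$, $\ptst{\outbr.\setbr{1}}$, $\ntst{\outbr.\setbr{1}}$, or a forward jump instruction whose offset has the prescribed residue modulo $3$ and lies in the prescribed range (in the odd case also meeting the prescribed relation between the block index, $m$, and the offset). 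It remains to verify that the register names occurring in the (exactly $n$) read instructions realize a bijection $\varrho$ on $\set{k \in \Natpos \where k \leq n}$; this is done by collecting those register indices during the pass and checking, with an auxiliary array of $n$ Boolean flags, that they lie in $\set{1,\ldots,n}$ and are pairwise distinct. Locating the isolated read instruction in the odd case is likewise an $O(n)$ scan. So the whole test for $X \in \TSTNZpc{n}$ runs in $O(n)$ time.

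Next I would treat the cases in the definition of $\TSTNZpce{n}$ that go beyond $\TSTNZpc{n}$. From the per-register read counts gathered during the pass one determines in $O(n)$ time how often each input register is read. If no input register is read more than once, then $X \in \TSTNZpce{n}$ iff $X \in \TSTNZpc{n}$; if at least two input registers are read more than once, or some input register is read more than twice, then $X \notin \TSTNZpce{n}$. Otherwise exactly one input register is read exactly twice, say by the $k$th and the $l$th primitive instruction of $X$ (possibly after swapping the roles of $k$ and $l$, of which there are at most two admissible choices by condition~(b)); form $X'$ by replacing the $l$th instruction with $\fjmp{2}$ and test conditions~(a) and~(c). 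Condition~(c), namely $X' \in \TSTNZpc{n}$, is another instance of the $O(n)$ test above. Condition~(a) is tested by tracing, for each starting position $p \leq k+2$, the chain of forward jump instructions from $p$ --- a strictly increasing sequence of at most $O(n)$ positions --- and checking that it does not reach position $l$; this costs $O(n)$ per starting position, hence $O(n^2)$ altogether. Summing the costs, the procedure runs in $O(n) + O(n^2) = O(n^2)$ time on a RAM.

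I expect the difficulty to be bookkeeping rather than depth: the definition of $\TSTNZpce{n}$ involves many cases --- even versus odd $n$, the several admissible forms of the $\varphi$- and $\varphi'_m$-slots, and the three extension conditions --- and one must check that each constituent test is either a constant-time check per instruction or a single jump-chain trace. The one place where the straightforward implementation does not already give $O(n)$ is condition~(a), whose naive realization by chain tracing costs $O(n^2)$; that is exactly the source of the quadratic bound stated in the theorem.
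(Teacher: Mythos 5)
Your proposal is correct and follows essentially the same route as the paper: reduce via Corollary~\ref{corollary-correct-NZTISp-like} to the membership problem for $\TSTNZpce{n}$, decide membership in $\TSTNZpc{n}$ first, and then handle the extra conditions (a)--(c) for the case of a doubly-read register, with the jump-chain tracing for condition~(a) as the quadratic bottleneck. Your implementation of the $\TSTNZpc{n}$ test is somewhat sharper ($O(n)$ via local slot checks and a flag array, where the paper settles for $O(n^2)$ on the bijection and jump-chain subchecks), but this only tightens subroutine costs and does not change the argument or the stated bound.
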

\begin{proof}
Let $n \in \Natpos$, and 
let $X \in \ISbr$ be such that $\len(X) = \len(\TSTNZp{n})$.
By Corollary~\ref{corollary-correct-NZTISp-like}, $X$ computes 
$\tstnzc{n}$ iff $X \in \TSTNZpce{n}$. 
That is why we alternatively prove that the membership problem for 
$\TSTNZpce{n}$ can be solved in $O(n^2)$ time.

We start with proving that the membership problem for $\TSTNZpc{n}$ can 
be solved in $O(n^2)$ time.
The definition of $\TSTNZpc{n}$ shows that the members of $\TSTNZpc{n}$ 
have a common pattern of primitive instructions which can be described 
by a regular grammar.
This pattern allows $\TSTNZpc{n}$ to be characterized as follows: 
$X \in \TSTNZpc{n}$ iff 
(i)~$X$ has this pattern,
(ii)~each of the input register names $\inbr{1},\ldots,\inbr{n}$ 
appears in one occurrence of a read instruction in $X$, 
(iii)~each occurrence of a forward jump instruction in $X$ leads, 
possibly via other forward jump instructions, to an occurrence of 
$\outbr.\setbr{1}$, $\ptst{\outbr.\setbr{1}}$ or 
$\ntst{\outbr.\setbr{1}}$, and 
(iv)~$\len(X) = \len(\TSTNZp{n})$.
Because this theorem concerns only $X \in \ISbr$ with 
$\len(X) = \len(\TSTNZp{n})$, (iv) does not have to be checked.
Checking (i), (ii), and (iii) in a straightforward way takes $O(n)$ 
steps, $O(n^2)$ steps, and $O(n^2)$ steps, respectively.
Hence, the membership problem for $\TSTNZpc{n}$, for $n \in \Natpos$, 
can be solved in $O(n^2)$ time.

We go on with proving that the membership problem for $\TSTNZpce{n}$ 
can also be solved in $O(n^2)$ time.
The membership problem for $\TSTNZpce{n}$, for $n \in \Natpos$, 
can be solved by first checking whether $X \in \TSTNZpc{n}$ and then, 
if the answer is negative, checking whether exactly one input register 
is read more than once in $X$ and the input register concerned is read 
exactly twice in $X$, determining the $k$ and $l$ such that $k$th and 
$l$th primitive instruction of $X$ are the ones that read the same input
register, and checking whether the conditions~(a), (b), and~(c) from 
the definition of $\TSTNZpce{n}$ are satisfied.
Checking whether exactly one input register is read more than once in 
$X$ and the input register concerned is read exactly twice in $X$ takes 
$O(n^2)$ steps, determining the $k$ and $l$ takes $O(n)$ steps, and 
checking whether conditions~(a), (b), and~(c) from the definition of 
$\TSTNZpce{n}$ are satisfied takes $O(n^2)$ steps, $O(n)$ steps, and 
$O(n^2)$ steps, respectively.
Hence, the membership problem for $\TSTNZpce{n}$, for $n \in \Natpos$, 
can be solved in $O(n^2)$ time.
\qed
\end{proof}

\sloppy
Secondly, we consider the problem of determining whether an arbitrary 
instruction sequence from $\ISbr$ whose length is $\len(\TSTNZp{n})$ 
plus a constant amount $m$ correctly implements the function 
$\tstnzc{n}$, for $n > 0$ and $m > 0$.
The following theorem states that this problem can be solved in 
$O((n + m) \mul 2^n)$ time on a RAM.
\begin{theorem}
\label{theorem-complexity-shortest-const-1}
The problem of deciding whether an $X \in \ISbr$ such that 
$\len(X) = \len(\TSTNZp{n}) + m$ computes $\tstnzc{n}$, for 
$n,m \in \Natpos$, can be solved in $O((n + m) \mul 2^n)$ time on a RAM.
\end{theorem}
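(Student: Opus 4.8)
The plan is to decide the problem by exhaustive simulation: run the given instruction sequence on each of the $2^n$ possible inputs and check that it always terminates with the value prescribed by $\tstnzc{n}$. Let $X \in \ISbr$ with $\len(X) = \len(\TSTNZp{n}) + m$, and write $\ell = \len(X)$; since $\len(\TSTNZp{n})$ equals $3 \mul n/2 + 1$ or $3 \mul (n+1)/2$, we have $\ell = O(n + m)$. First I would make one pass over $X$, collecting the (at most $\ell$) register names that occur in it and, using the indirect addressing of the RAM together with the standard technique for operating on a sparse array in constant time, arranging that the content of every such register can be read and updated in $O(1)$ time. I would also record the largest index of an auxiliary register occurring in $X$; taking this as the number $k$ of auxiliary registers of the environment is without loss of generality, because enlarging $k$ only adds registers that $X$ never touches, so this suffices to fix the environment demanded by the definition of when $X$ \emph{computes} a function.

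Next, for each tuple $(b_1,\ldots,b_n) \in \Bool^n$ in turn, I would set the content of $\inbr{i}$ to $b_i$ for $i \in \{1,\ldots,n\}$, set the content of $\outbr$ and of every auxiliary register occurring in $X$ to $\False$, and then simulate the execution of $X$ step by step, keeping a pointer to the primitive instruction to be executed next. The key point is that the instruction sequences in $\ISbr$ are single-pass and contain only forward jumps, so this pointer strictly increases at each step; hence within at most $\ell$ steps the execution either terminates via $\halt$ or reaches inaction (because there is no next instruction to proceed with, or because of a jump of $0$), and the simulation can tell which. If inaction occurs, or if execution terminates with the content of $\outbr$ equal to $0$ while some $b_i$ is $1$, or equal to $1$ while all $b_i$ are $0$, then $X$ does not compute $\tstnzc{n}$ and the algorithm stops and answers ``no''. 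If the test succeeds for all $2^n$ tuples, the algorithm answers ``yes''; its correctness is then immediate from the definitions of $\tstnzc{n}$ and of when $X$ computes it.

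For the running time, the preprocessing pass costs $O(\ell) = O(n + m)$. Each of the $2^n$ iterations costs $O(n + m)$ to re-initialise the registers and $O(\ell) = O(n + m)$ to run the simulation, the latter because each of the at most $\ell$ steps performs only a constant number of $O(1)$ register accesses and pointer updates. Hence the total running time is $O((n + m) \mul 2^n)$ on a RAM, as required.

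The step I expect to need the most care --- the hard part --- is guaranteeing that each individual simulation runs in $O(n + m)$ time rather than something larger. This rests on two facts: that the single-pass, forward-jump-only structure of instruction sequences in $\ISbr$ bounds the number of execution steps by $\len(X)$ and rules out non-halting loops, and that the contents of registers can be accessed in constant time even though register names may be arbitrarily large integers, for which one relies on the sparse-array initialisation technique available on the RAM model. Getting the quantifier over $k$ in the definition of when $X$ computes a function right is the only other delicate point, and it is handled by the observation above that auxiliary registers not occurring in $X$ are irrelevant.
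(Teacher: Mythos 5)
Your proposal is correct and follows exactly the paper's approach: exhaustive simulation of $X$ on all $2^n$ input combinations, with each trial bounded by $O(n+m)$ steps because the forward-jump-only, single-pass structure makes the instruction pointer strictly increase. You merely spell out details (constant-time register access, handling of auxiliary registers, detection of inaction) that the paper's two-sentence proof leaves implicit.
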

\begin{proof}
This problem can be solved by trying out whether execution of $X$ 
yields the right final content of $\outbr$ for all $2^n$ possible 
combinations of the initial contents of $\inbr{1},\ldots,\inbr{n}$.
For each of these combinations, the trial takes $O(n + m)$ steps.
Hence, the problem of deciding whether an $X \in \ISbr$ such that 
$\len(X) = \len(\TSTNZp{n}) + m$ computes $\tstnzc{n}$, for 
$n,m \in \Natpos$, can be solved in $O(2^n \mul (n + m))$ time.
\qed
\end{proof}
If a problem can be solved in $O(2^n \mul (n + m))$ time, then it can be 
solved in $O(2^{2 \mul n} \mul m)$ time.
This justifies the statement that the problem mentioned in 
Theorem~\ref{theorem-complexity-shortest-const-1} can be solved in time 
exponential in $n$ and linear in $m$ on a RAM.

Thirdly, we consider the problem of determining whether an arbitrary 
instruction sequence from $\ISbr$ whose length is $\len(\TSTNZp{n})$ 
plus an amount that depends linearly on $n$ correctly implements the 
function $\tstnzc{n}$, for $n > 0$. 
The following theorem states that this problem is co-NP-complete.
\begin{theorem}
\label{theorem-complexity-shortest-linear}
Let $q \in \Rat$ be such that $q > 0$ 
and $m \in \Nat$ be such that $m > 3$. 
Then the problem of deciding whether an $X \in \ISbr$ such that 
$\len(X) \leq \len(\TSTNZp{n}) + \ceil{q \mul n} + m$ computes 
$\tstnzc{n}$, for $n \in \Natpos$, is co-NP-complete.
\end{theorem}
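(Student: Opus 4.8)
\emph{Proof strategy.}
The plan is to prove membership in co-NP and co-NP-hardness separately; together these give co-NP-completeness. Fix $q \in \Rat$ with $q > 0$ and $m \in \Nat$ with $m > 3$ as in the statement.

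For membership in co-NP I would argue as follows. For any $X \in \ISbr$ with $\len(X) \leq \len(\TSTNZp{n}) + \ceil{q \mul n} + m$ we have $\len(X) = O(n)$, and, since $X$ contains only forward jump instructions, on execution the instruction position strictly increases at each step, so execution terminates or runs into inaction within $\len(X) = O(n)$ steps. Hence a ``no'' instance is witnessed by a tuple $(b_1,\ldots,b_n) \in \Bool^n$ such that, on execution of $X$ with $\inbr{i}$ containing $b_i$ for $i \leq n$ and with $\outbr$ and the auxiliary registers (as many as the largest auxiliary-register index occurring in $X$) containing $\False$ initially, execution either does not terminate or terminates with $\outbr$ containing a value other than $\tstnzc{n}(b_1,\ldots,b_n)$; such a witness has size $O(n)$ and is verifiable by straightforward simulation in $O(n)$ time. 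So the complement of the problem is in NP.

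For co-NP-hardness I would give a polynomial-time many-one reduction from unsatisfiability of $3$-CNF formulas, which is co-NP-complete. Given a $3$-CNF formula $\phi$, I would first conjoin to it the unit clause consisting of a fresh variable and relabel, obtaining an equisatisfiable formula $\psi$ over variables $y_1,\ldots,y_N$ with clauses $C_1,\ldots,C_M$ of at most three literals each, having the crucial property that \emph{no} satisfying assignment of $\psi$ sets all variables to $\False$. Let $L := \sum_{j = 1}^M \card(C_j)$ and put $n := \max\set{N, \ceil{(2 \mul L + M + 1)/q}}$, so $n$ is linear in the size of $\phi$. The reduction outputs the instruction sequence $X_\psi \in \ISbr$ over input registers $\inbr{1},\ldots,\inbr{n}$ --- of which $\inbr{N+1},\ldots,\inbr{n}$ serve only as padding --- obtained by concatenating: (i) a \emph{clause-checking prefix} in which, for $j = 1,\ldots,M$, there is a block containing for each literal of $C_j$ a read instruction ($\ptst{\inbr{a}.\getbr}$ for a positive literal on $y_a$, $\ntst{\inbr{a}.\getbr}$ for a negative one) followed by a forward jump to the start of the block of $C_{j+1}$, or, for $j = M$, to part~(ii), and then one more forward jump to the start of part~(iii); (ii) the single instruction $\halt$; and (iii) the instruction sequence $\TSTNZp{n}$.

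The intended behaviour, which I would verify by an easy case analysis of the effects of the test and jump instructions, is: the block of $C_j$ passes control to the block of $C_{j+1}$ (resp.\ to part~(ii) for $j = M$) exactly when $C_j$ is satisfied by the contents of $\inbr{1},\ldots,\inbr{N}$, and jumps to part~(iii) exactly when $C_j$ is falsified. Since nothing in the prefix writes $\outbr$, it follows that on input $(b_1,\ldots,b_n)$: if $(b_1,\ldots,b_N)$ falsifies $\psi$, execution reaches part~(iii) with $\outbr$ still $\False$, so by Proposition~\ref{prop-NZTISp-correct} it ends with $\outbr$ equal to $\tstnzc{n}(b_1,\ldots,b_n)$; and if $(b_1,\ldots,b_N)$ satisfies $\psi$, execution reaches the $\halt$ of part~(ii) and ends with $\outbr$ equal to $0$, whereas the correct value is $1$ because, $\psi$ having no all-$\False$ satisfying assignment, some $b_i$ with $i \leq N$ equals $1$. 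Hence $X_\psi$ computes $\tstnzc{n}$ iff $\psi$ is unsatisfiable iff $\phi$ is unsatisfiable. Finally, the prefix consists of $\sum_{j = 1}^M (2 \mul \card(C_j) + 1) = 2 \mul L + M$ instructions, so $\len(X_\psi) = \len(\TSTNZp{n}) + 2 \mul L + M + 1 \leq \len(\TSTNZp{n}) + \ceil{q \mul n} + m$ by the choice of $n$ (using $m > 3$), all jumps are forward, and $X_\psi$ is plainly computable from $\phi$ in polynomial time. The two points I expect to require genuine care are: making the construction go through for \emph{arbitrarily small} positive $q$ --- this is exactly the purpose of the padding input registers $\inbr{N+1},\ldots,\inbr{n}$, which add nothing to the prefix and are harmlessly absorbed into the disjunction computed by $\TSTNZp{n}$, so the available slack $\ceil{q \mul n}$ can always be made to cover the $O(\card(\phi))$ instructions of the prefix by taking $n$ large enough; and ensuring that the only way $X_\psi$ can deviate from $\tstnzc{n}$ is at a \emph{non}-zero input (where $\tstnzc{n}$ returns $1$), which is the reason for adjoining the forcing clause to $\phi$. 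Everything else --- the strictly-increasing-position argument for co-NP membership, the case analysis describing the clause-checking blocks, and the length bookkeeping --- should be routine.
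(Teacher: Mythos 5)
Your proposal is correct. The co-NP membership argument (guess a bad input, simulate the forward-jumps-only execution in linear time) coincides with the paper's, but your hardness reduction takes a genuinely different route. The paper also reduces from unsatisfiability, but of arbitrary propositions: it invokes the polynomial-time translation $\mathit{inseq}_{\mathrm{bf}}$ of~\cite{BM13a} to turn a proposition $P$ into code computing its truth function, prepends a copy of the shortest non-zeroness tester that writes its result into the auxiliary register $\auxbr{1}$, and appends a short postlude that copies $\auxbr{1}$ to $\outbr$ or complements it according to the value of $P$, so that the resulting instruction sequence is correct iff $P$ is unsatisfiable. You instead reduce from $3$-UNSAT, build the CNF evaluator by hand out of test-and-jump blocks that divert to $\TSTNZp{n}$ at the first falsified clause and halt with output $0$ if every clause is satisfied, and use the forcing unit clause to guarantee that any satisfying assignment is a non-zero input, so that halting with $0$ really is an error. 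Both arguments share the essential padding device: inflate $n$ linearly so that the slack $\ceil{q \mul n}$ absorbs the length of the formula evaluator even for arbitrarily small $q$. What the paper's version buys is genericity (arbitrary propositions handled via a black-box translation) at the price of an auxiliary register and a fixed overhead of five instructions, which is the sole reason for the hypothesis $m > 3$; the paper must then sketch an auxiliary-register-free variant to obtain Corollary~\ref{corollary-complexity-shortest-linear}. Your version is self-contained, produces instruction sequences with no auxiliary registers at all, and in fact never uses $m > 3$ --- your parenthetical ``using $m > 3$'' is superfluous, since $2 \mul L + M + 1 \leq \ceil{q \mul n}$ already follows from your choice of $n$ --- so you actually establish the slightly stronger statement for arbitrary $m \in \Nat$; the cost is that the construction is specific to CNF and needs the forcing-clause trick.
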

\begin{proof}
We call this problem $\CORRECTnzt$.
Let $b_1,\ldots,b_n \in \Bool$ be such that execution of $X$ with 
$b_1 \ldots b_n$ as the initial contents of $\inbr{1},\ldots,\inbr{n}$, 
respectively does not yield $\tstnzc{n}(b_1,\ldots,b_n)$ as final 
content of $\outbr$.
Then we can verify that $X$ does not compute $\tstnzc{n}$ by simulating 
the execution of $X$ with $b_1 \ldots b_n$ as the initial contents of 
$\inbr{1},\ldots,\inbr{n}$, respectively.
The simulation of this execution of $X$ takes $O(n)$ steps.
Hence, $\CORRECTnzt$ is in co-NP.
We still have to show that $\CORRECTnzt$ is co-NP-hard.

We show that $\CORRECTnzt$ is co-NP-hard by proving that $\coSAT$, i.e., 
the complement of $\SAT$, is reducible to $\CORRECTnzt$.
The proof is based on the idea that, for each proposition $P$ containing 
$n$ variables, an instruction sequence can be constructed of which the 
first part computes $\tstnzc{n'}$, where $n' \geq n$ and $n'$ depends 
linearly on the length of $P$, the second part computes the truth 
function $\funct{f}{\Bool^n}{\Bool}$ expressed by $P$, and the third 
part negates the result of the first part if the result of the second 
part is $0$. 
The length of a proposition $P$ is the number of occurrences of 
variables and connectives in $P$.
In this proof, we write $\len(P)$ for the length of~$P$.

For the construction outlined above, use is made of a polynomial-time 
computable function from the set of all propositions to $\ISbr$ that is 
introduced in the proof of Proposition~4 from~\cite{BM13a}.  
The function concerned associates a given proposition with an 
instruction sequence that computes the truth function expressed by the 
given proposition and whose length depends linearly on the length of the 
given proposition.
In~\cite{BM13a}, this function is denoted by 
$\mathit{inseq}_\mathrm{bf}$.
In this proof, we denote this function shortly by $\Phi$.

Let $c \in \Natpos$  be such that, for all propositions $P$,
$\len(\Phi(P)) < c \mul \len(P)$.
Let $\Phi^*$ be such that, for all propositions $P$, 
$\Phi(P) = \Phi^*(P) \conc \outbr.\setbr{1} \conc \halt$.
Let $\TSTNZs{n}$ be such that $\TSTNZp{n} = \TSTNZs{n} \conc \halt$, and 
let $\TSTNZss{n}$ be $\TSTNZs{n}$ with each occurrence of $\outbr$ in
$\TSTNZs{n}$ replaced by $\auxbr{1}$.
Let $c' \in \Natpos$ be such that $(c' - 1) \mul q < 1 < c' \mul q$.
Let $\Psi$ be the transformation from the set of all 
propositions to $\ISbr$ such that, for all $n \in \Natpos$, for all 
propositions $P$ with $\len(P) = n$,
$\Psi(P) =
 \TSTNZss{c \mul c' \mul n} \conc \Phi^*(P) \conc \fjmp{4} \conc 
 \ptst{\auxbr{1}.\getbr} \conc \outbr.\setbr{1} \conc \halt \conc
 \ptst{\auxbr{1}.\getbr} \conc \ptst{\outbr.\setbr{0}} \conc 
 \outbr.\setbr{1} \conc \halt$.
Because $\Phi$ is polynomial-time computable, $\Psi$ is  polynomial-time 
computable.

Let $n \in \Natpos$, and let $P$ be a proposition with $\len(P) = n$.
Then the part $\TSTNZss{c \mul c' \mul n}$ of $\Psi(P)$ reads 
$c \mul c' \mul n$ input registers. 
The part $\Phi^*(P)$ of $\Psi(P)$ does not read additional input 
registers because at most $n$ variables can occur in a proposition of 
length $n$.
Hence, $\Psi(P)$ reads $c \mul c' \mul n$ input registers, which is 
linear in~$n$.

Because $1 < c' \mul q$, we have that, for all $n \in \Natpos$,  
$c \mul n \leq q \mul c \mul c' \mul n$.
Using this, we find that, for all $n \in \Natpos$, for all propositions 
$P$ with $\len(P) = n$, 
$\len(\Psi(P)) =
 \len(\TSTNZp{c \mul c' \mul n}) + \len(\Phi(P)) + 5 \leq 
 \len(\TSTNZp{c \mul c' \mul n}) + c \mul n + m \leq
 \len(\TSTNZp{c \mul c' \mul n}) + \ceil{q \mul c \mul c' \mul n} + m$.
Hence, for all $n \in \Natpos$, for all propositions $P$ with 
$\len(P) = n$, $\Psi(P)$ is an instance of $\CORRECTnzt$ that reads 
$c \mul c' \mul n$ input registers.

Let $P$ be a proposition.
Then, because $\Phi$ associates $P$ with an instruction sequence that 
computes the truth function expressed by $P$, we have that
(a)~if $\coSAT(P)$, then execution of $\Psi(P)$ yields the correct final 
content of $\outbr$ for each initial content of the input registers and
(b)~if not $\coSAT(P)$, then execution of $\Psi(P)$ yields the 
complement of the the correct final content of $\outbr$ for some initial 
content of the input registers.
Hence, for all proposition $P$, $\coSAT(P)$ iff 
$\CORRECTnzt(\Psi(P))$.
\qed
\end{proof}
To make the problem mentioned in 
Theorem~\ref{theorem-complexity-shortest-linear} more concrete, we 
mention that the instruction sequence $\TSTNZ{n}$ from 
Section~\ref{sect-ISNZ-natural} is an instance of this problem.

In the proof of Theorem~\ref{theorem-complexity-shortest-linear}, we
could have used, instead of $\Psi$, a less obvious transformation in 
which no auxiliary registers are used at the cost of $\Phi^*(P)$ 
occurring twice in the transformation of proposition $P$.
In that case, the transformation would yield good instruction sequences
in the sense defined in the second next paragraph.

Below we present a theorem stating that the problem of determining 
whether an arbitrary instruction sequence from $\ISbr$ whose length is 
$\len(\TSTNZp{n})$ plus a constant amount $m$ correctly implements the 
function $\tstnzc{n}$, for $n > 0$, can be solved in 
$O(2^{6 \mul m} \mul (n + m) + n \mul (n + m))$ time on a RAM if the 
instruction sequences are restricted to good instruction sequences.

A \emph{good instruction sequence} is defined as an instruction sequence 
that is of the form $X \conc \outbr.\setbr{1} \conc \halt$, where
$X \in \ISbr$ is such that only read instructions and forward jump
instructions $\fjmp{l}$, where $l > 0$, occur in $X$.
A \emph{very good instruction sequence} is a good instruction sequence 
in which no input register name appears in more than one occurrence of a 
read instruction.
A \emph{good instruction sequence for} $\tstnzc{n}$ is a good 
instruction sequence $X$ such that 
$\iregs(X) = \set{i \in \Natpos \where i \leq n}$
and $X$ computes $\tstnzc{n}$.
A \emph{very good instruction sequence for} $\tstnzc{n}$ is a very good 
instruction sequence $X$ such that 
$\iregs(X) = \set{i \in \Natpos \where i \leq n}$
and $X$ computes $\tstnzc{n}$.

Notice that instructions concerning auxiliary registers do not occur in 
good instruction sequences and very good instruction sequences.

\begin{lemma}
\label{lemma-good-inseqs}
Let $n,k \in \Natpos$, and 
let $X$ be a very good instruction sequence for $\tstnzc{n}$ in which 
there are $k$ occurrences of $\fjmp{1}$ or $\fjmp{2}$.
Then $X$ can be transformed into a very good instruction sequence for 
$\tstnzc{n}$ that is at least $k-1$ primitive instructions shorter than 
$X$ and in which $\fjmp{1}$ does not occur and $\fjmp{2}$ occurs at most 
once.
\end{lemma}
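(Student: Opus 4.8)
The plan is to remove occurrences of $\fjmp{1}$ and $\fjmp{2}$ from $X$ one at a time by local surgery, keeping track at each step of how much the length decreases and how much the number of occurrences of $\fjmp{1}$ or $\fjmp{2}$ decreases, so that after all surgery the total length decrease is at least the total decrease of that count, which is at least $k-1$ once the single $\fjmp{2}$ that may have to survive is accounted for.

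The occurrences of $\fjmp{1}$ are the easy part, since an occurrence of $\fjmp{1}$ is a no-op: execution just proceeds to the immediately following primitive instruction. Such an occurrence can therefore be deleted, provided the offset of every forward jump occurring before it whose target lies strictly beyond the deleted position is decreased by $1$ (offsets of jumps whose target is the deleted $\fjmp{1}$ itself, or lies before it, are left unchanged; and no offset $0$ is created, because any offset that gets decreased was at least~$2$). The result $X'$ computes the same function as $X$, is still of the form $X'' \conc \outbr.\setbr{1} \conc \halt$ with $X''$ built from read instructions and forward jumps, and has the same read instructions, so $X'$ is again a very good instruction sequence for $\tstnzc{n}$, now one primitive instruction shorter; and the only new short jumps this can create are $\fjmp{2}$'s coming from decreased $\fjmp{3}$'s, so the number of occurrences of $\fjmp{1}$ or $\fjmp{2}$ drops by at most~$1$. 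Repeating until no $\fjmp{1}$ is left terminates, because the length strictly decreases, and by this accounting it removes at least as many primitive instructions as the net drop of the count it causes.

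It then remains to bring the number of occurrences of $\fjmp{2}$ down to at most one. Here I would argue by induction on $n$, mirroring the proof of Theorem~\ref{theorem-NZTISp-shortest}, with a case analysis on the leading primitive instructions of $X = u_1 \conc \ldots \conc u_L$. If $\len(X) = \len(\TSTNZp{n})$ then $X \in \TSTNZpce{n}$ by Corollary~\ref{corollary-correct-NZTISp-like}, and inspection of the definitions of $\TSTNZpc{n}$ and $\TSTNZpce{n}$ shows that no member of $\TSTNZpce{n}$ contains $\fjmp{1}$ and none contains more than one occurrence of $\fjmp{2}$, so nothing has to be removed; this gives the base cases. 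In the inductive step, if $u_1$ is a forward jump, then the instructions it skips over are unreachable and, since $X$ computes $\tstnzc{n}$ --- a function that depends on all $n$ of its arguments --- none of them is a read instruction, so $u_1$ together with the whole skipped block can be deleted. If $u_1$ is a read instruction on $\inbr{i}$, I would apply register elimination with $\inbr{i}$ set to~$0$ (and, where needed, a second register, as in Theorem~\ref{theorem-NZTISp-shortest}) to a suitable suffix of $X$, obtaining a strictly shorter very good instruction sequence for $\tstnzc{n-1}$ (or $\tstnzc{n-2}$); the induction hypothesis then supplies the required normal form for that suffix, after which it remains to transfer the surgery back to $X$ and, with a sub-case analysis on $u_2$ analogous to the one in Theorem~\ref{theorem-NZTISp-shortest}, to handle the case where the register-eliminated sequence is already of minimal length --- in which case the ``slack'' of $X$ over a minimal sequence is concentrated in its leading instructions, one of which is then a removable $\fjmp{2}$.

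I expect the $\fjmp{2}$ step to be the main obstacle, and within it two points in particular. First, checking that a removable jump produced by the induction hypothesis on the register-eliminated subproblem is still removable once the eliminated register (or registers) is reinstated: the offsets and jump targets must be tracked carefully through the elimination and the register renaming, and one must verify that execution paths on which the eliminated register carries its non-zero value are not disturbed. Second, the book-keeping that ties the successive surgeries together so that the total number of primitive instructions removed is at least the net decrease of the number of occurrences of $\fjmp{1}$ or $\fjmp{2}$, and hence at least $k-1$.
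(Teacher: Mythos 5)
Your treatment of the $\fjmp{1}$ occurrences (delete, decrement the offsets of jumps that span the deleted position, and charge one unit of length per step against the change in the count of short jumps) is sound, and your observation that the case where $u_1$ is a forward jump lets you delete the whole skipped block is also correct. The genuine gap is exactly where you say you expect it: the reduction of the $\fjmp{2}$ occurrences. Your plan is to set $\inbr{i}$ to $0$, pass to a very good instruction sequence for $\tstnzc{n-1}$ (or $\tstnzc{n-2}$), invoke the induction hypothesis there, and then ``transfer the surgery back to $X$''. But register elimination does not shorten the sequence --- it replaces the read instruction by a fresh $\fjmp{1}$ or $\fjmp{2}$ --- so the object to which you apply the induction hypothesis has the \emph{same} length as $X$ and possibly \emph{more} short jumps, and the transformation the induction hypothesis hands you acts on that eliminated sequence, in which the control flow through the position of the eliminated read has been rewired. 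Lifting that transformation back to $X$ requires showing that every deletion performed on the eliminated sequence is still sound on executions in which $\inbr{i}$ contains $1$, and that the per-step inequality (length decrease $\geq$ count decrease) survives the lift; neither is established, and you concede both points. As it stands the proposal proves the lemma only in the cases where $X$ begins with a jump or is already of minimal length.

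For comparison, the paper does not remove the short jumps one at a time at all. It proves a sequence of structural facts about any very good instruction sequence for $\tstnzc{n}$ --- after a read of a register containing $0$ the next read must be reached; no three read instructions occur in a row; consecutive forward jumps can be collapsed; an isolated read must be a positive test; a read pair must be $\ntst{\inbr{i}.\getbr} \conc \ptst{\inbr{j}.\getbr}$ --- and from these derives directly that $X$ can be brought, without increasing its length, into a normal form consisting of read pairs separated by jumps $\fjmp{l}$ with $l \geq 3$, optionally ending in $\fjmp{2}$ followed by an isolated positive read, then $\outbr.\setbr{1} \conc \halt$. The conclusion about $\fjmp{1}$ and $\fjmp{2}$ is then read off the normal form, and the length bound follows because the normal form has length $\len(\TSTNZp{n})$. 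This global normal-form argument is what lets the paper sidestep the transfer-back problem that blocks your inductive route; if you want to complete your approach, you would essentially have to prove those same structural facts anyway in order to control what the eliminated read's replacement jump can look like.
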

\begin{proof}
Obviously, it is sufficient to prove that $X$ can be transformed into a 
very good instruction sequence for $\tstnzc{n}$ that is not longer than 
$X$ and in which $\fjmp{1}$ does not occur and $\fjmp{2}$ occurs at most 
once.

The following properties of $X$ are useful in the proof:
\begin{pplist}
\item[(a)]
$X$ is an instruction sequence in which, for each $j \leq n$ for which
the occurrence of $\ptst{\inbr{j}.\getbr}$ or $\ntst{\inbr{j}.\getbr}$ 
in $X$ is not the last occurrence of a read instruction in $X$, after 
execution of the occurrence of $\ptst{\inbr{j}.\getbr}$ or 
$\ntst{\inbr{j}.\getbr}$ in $X$, execution proceeds such that execution 
of the next occurrence of a read instruction in $X$ will take place if 
the initial content of $\inbr{j}$ is $0$;
\item[(b)]
$X$ is an instruction sequence in which, for each $j \leq n$, either 
$\ptst{\inbr{j}.\getbr}$ or $\ntst{\inbr{j}.\getbr}$ occurs;
\item[(c)]
$X$ is an instruction sequence in which three or more read instructions 
in a row do not occur;
\item[(d)]
$X$ can be transformed into a very good instruction sequence for 
$\tstnzc{n}$ that is not longer than $X$ and in which two or more 
forward jump instructions in a row do not occur;
\item[(e)]
an isolated read instruction of $X$ is of the form 
$\ptst{\inbr{i}.\getbr}$ 
if $n > 1$ and two or more forward jump instructions in a row do not 
occur;
\item[(f)]
a read instruction pair of $X$ is of the form 
$\ntst{\inbr{i}.\getbr} \conc \ptst{\inbr{j}.\getbr}$
if $n > 2$ and two or more forward jump instructions in a row do not 
occur;
\item[(g)]
\begin{plist}
\item[-]
if the number of isolated read instructions in $X$ is odd, then $X$ can 
be transformed into a very good instruction sequence for $\tstnzc{n}$ 
that is not longer than $X$ and has the form
$Y \conc \ptst{\inbr{i}.\getbr} \conc \outbr.\setbr{1} \conc \halt$, 
where no isolated read instructions occur in $Y$, if $n > 1$ and the 
form $\ptst{\inbr{1}.\getbr} \conc \outbr.\setbr{1} \conc \halt$ 
if $n = 1$;
\item[-]
if the number of isolated read instructions in $X$ is even, then $X$ can 
be transformed into a very good instruction sequence for $\tstnzc{n}$ 
that is not longer than $X$ and has the form
$Y \conc \ntst{\inbr{i}.\getbr} \conc \ptst{\inbr{j}.\getbr} \conc
 \outbr.\setbr{1} \conc \halt$, 
where no isolated read instructions occur in $Y$, if $n > 2$ and the
form 
$\ntst{\inbr{i}.\getbr} \conc \ptst{\inbr{j}.\getbr} \conc
 \outbr.\setbr{1} \conc \halt$ if $n = 2$. 
\end{plist}
\end{pplist}

Property~(a) holds because otherwise, on execution of $X$, the final 
content of $\outbr$ is $0$ in the case where the input register to be 
read by the next occurrence of a read instruction contains $1$.

Property~(b) holds because otherwise, on execution of $X$, the final 
content of $\outbr$ is $0$ in the case where, for some $j \leq n$ for 
which neither $\ptst{\inbr{j}.\getbr}$ nor $\ntst{\inbr{j}.\getbr}$ 
occurs in $X$, $\inbr{j}$ contains $1$.

That property~(c) holds can be seen by assuming that $X$ is of the form 
\begin{center}
$u_1 \conc u_2 \conc u_3 \conc X''$ or 
$X' \conc u_1 \conc u_2 \conc u_3 \conc X''$,
\end{center}
where $u_1$, $u_2$, and $u_3$ are read instructions by which 
respectively $\inbr{i}$, $\inbr{i'}$, and $\inbr{i''}$, for some 
$i, i', i'' \leq n$, are read.
Let $R$ be $\iregs(X')$ if $X$ is of the second form and $\emptyset$ 
otherwise.
On execution of $X$, if the initial content of all input registers 
$\inbr{j}$ with $j \in R$ is $0$ and the initial content of $\inbr{i}$ 
is $1$, then, by property~(a), $u_1 \conc u_2 \conc u_3 \conc X''$ is 
eventually executed and so execution of $u_3 \conc X''$ must yield $1$ 
as final content of $\outbr$.
However, on execution of $X$, if the initial content of all input 
registers is $0$, then, by property~(a), $u_3 \conc X''$ is eventually 
executed and must yield upon execution $0$ as final content of $\outbr$.
Hence, the assumption leads to a contradiction.

That property~(d) holds can be seen by assuming that $X$ is of one of 
the following forms:
\begin{center}
(i)~$\fjmp{l} \conc \fjmp{l'} \conc X''$, 
(ii)~$u \conc \fjmp{l} \conc \fjmp{l'} \conc X''$,
(iii)~$X' \conc u \conc \fjmp{l} \conc \fjmp{l'} \conc X''$, 
\end{center}
where $u$ is a read instruction by which $\inbr{i}$, for some 
$i \leq n$, is read.
Let $R$ be $\iregs(X') \union \set{i}$ if $X$ is of form~(iii),
$\set{i}$ if $X$ is of form~(ii), and $\emptyset$ otherwise, and
let $\funct{\alpha}{R}{\Bool}$ be such that $\alpha(j) = 0$ for all 
$j \in R$.
In case~(i), $\fjmp{l}$ and the $l-1$ next instructions can be removed 
from $X$.
In cases~(ii) and~(iii), in the case that $u$ is 
$\ptst{\inbr{i}.\getbr}$, on execution of $X$, if the initial content of 
$\inbr{i}$ and all input registers $\inbr{j}$ with $j \in R$ is $0$, 
then, by property~(a), $\fjmp{l'} \conc X''$ is eventually executed and 
so, by property~(b), $\fjmp{l'} \conc X''_\alpha$ computes $\tstnzc{n'}$ 
for some $n' < n$.
However, the suffix of $X''_\alpha$ obtained by removing its first 
$l'-1$ instructions computes the same function.
Hence, $\fjmp{l'}$ and the $l'-1$ next instructions can be removed from 
$X$ after replacing chains of forward jumps by direct jumps where 
needed because of removed jump instructions.
In cases~(ii) and~(iii), in the case that $u$ is 
$\ntst{\inbr{i}.\getbr}$, it can be similarly shown that $\fjmp{l}$ and 
the $l-1$ next instructions can be removed from $X$ after replacing 
chains of forward jumps by direct jumps where needed.

That property~(e) holds can be seen by assuming that $X$ is of the form 
\begin{center}
$\ntst{\inbr{i}.\getbr} \conc \fjmp{l'} \conc X''$ or  
$X' \conc \fjmp{l} \conc \ntst{\inbr{i}.\getbr} \conc \fjmp{l'}
 \conc X''$,
\end{center}
where $i \leq n$.
Let $R$ be $\iregs(X')$ if $X$ is of the second form and $\emptyset$ 
otherwise.
On execution of $X$, if the initial content of $\inbr{i}$ and all input 
registers $\inbr{j}$ with $j \in R$ is $0$, then, by property~(a), 
$\ntst{\inbr{i}.\getbr} \conc \fjmp{l'} \conc X''$ is eventually 
executed, but, after execution of its first instruction,
execution proceeds such that execution of the next occurrence of a read 
instruction in $X$ will not take place if $n > 1$ and $l' > 1$. 
Hence, the assumption leads to a contradiction with property~(a) if 
$l' > 1$. 
If instead $l' = 1$, then execution of the next occurrence of a read 
instruction in $X$ will take place, but the final content of $\outbr$
will not depend on the content of $\inbr{i}$.
Hence, the assumption leads to a contradiction with the fact that $X$
computes $\tstnzc{n}$ if $l' = 1$.

That property~(f) holds can be seen by assuming that $X$ is of one of 
the following forms:
\begin{center}
\begin{quote}
(i)~$\ptst{\inbr{i}.\getbr} \conc \ptst{\inbr{j}.\getbr} \conc
 \fjmp{l'} \conc X''$,
(ii)~$X' \conc \fjmp{l} \conc \ptst{\inbr{i}.\getbr} \conc
 \ptst{\inbr{j}.\getbr} \conc \fjmp{l'} \conc X''$,
(iii)~$\ptst{\inbr{i}.\getbr} \conc \ntst{\inbr{j}.\getbr} \conc
 \fjmp{l'} \conc X''$, 
(iv)~$X' \conc \fjmp{l} \conc \ptst{\inbr{i}.\getbr} \conc
 \ntst{\inbr{j}.\getbr} \conc \fjmp{l'} \conc X''$, 
(v)~$\ntst{\inbr{i}.\getbr} \conc \ntst{\inbr{j}.\getbr} \conc
 \fjmp{l'} \conc X''$, 
(vi)~$X' \conc \fjmp{l} \conc \ntst{\inbr{i}.\getbr} \conc
 \ntst{\inbr{j}.\getbr} \conc \fjmp{l'} \conc X''$, 
\end{quote}
\end{center}
where $i,j \leq n$.
Let $R$ be $\iregs(X')$ if $X$ is of form (ii), (iv) or~(vi) and 
$\emptyset$ otherwise, and
let $\funct{\alpha}{R}{\Bool}$ be such that $\alpha(j) = 0$ for all 
$j \in R$.
In cases~(i)--(iv), on execution of $X$, if the initial content of 
$\inbr{i}$ and all input registers $\inbr{j}$ with $j \in R$ is $0$, 
then, by property~(a),  
$\ptst{\inbr{i}.\getbr} \conc \ptst{\inbr{j}.\getbr} \conc \fjmp{l'}
 \conc X''$ (cases~(i)--(ii)) or
$\ptst{\inbr{i}.\getbr} \conc \ntst{\inbr{j}.\getbr} \conc \fjmp{l'}
 \conc X''$ (cases~(iii)--(iv)) 
is eventually executed but, after execution of the first instruction of
either instruction sequence, execution does not proceed with the 
execution of the next instruction.
Hence, in cases~(i)--(iv), the assumption leads to a contradiction 
with property~(a).
In cases~(v) and~(vi), on execution of $X$, if the initial content of 
$\inbr{i}$ and all input registers $\inbr{j}$ with $j \in R$ is $0$,  
then, by property~(a), 
$\ntst{\inbr{j}.\getbr} \conc \fjmp{l'} \conc X''$ is eventually 
executed and so, by property~(b), 
$\ntst{\inbr{j}.\getbr} \conc \fjmp{l'} \conc X''_\alpha$ is a very good 
instruction sequence for $\tstnzc{n'}$ for some $n'< n$.
Hence, in cases~(v) and~(vi), the assumption leads to a contradiction 
with property~(e) if $n > 2$ and two or more forward jump instructions 
in a row do not occur.

That property~(g) holds follows directly from properties (c)--(f), the 
fact that $X$ with each chain of two or more forward jump instructions 
in $X$ replaced by a single jump instruction that leads to the same 
instruction is also a very good instruction sequence for $\tstnzc{n}$, 
and the following claim:
if there are no chains of two or more forward jump instructions in $X$,
then
\begin{plist}
\item
if $X$ is of the form
$\ptst{\inbr{i_1}.\getbr} \conc 
 \fjmp{l} \conc \ntst{\inbr{i_2}.\getbr} \conc 
 \ptst{\inbr{i_3}.\getbr} \conc u \conc X''$, 
where $u$ is either a jump instruction or $\outbr.\setbr{1}$, then
$\ntst{\inbr{i_1}.\getbr} \conc 
 \ptst{\inbr{i_2}.\getbr} \conc \fjmp{l{-}1} \conc 
 \ptst{\inbr{i_3}.\getbr} \conc u \conc X''$
is also a very good instruction sequence for $\tstnzc{n}$;
\item
if $X$ is of the form
$X' \conc 
 \fjmp{l_1} \conc \ptst{\inbr{i_1}.\getbr} \conc 
 \fjmp{l_2} \conc \ntst{\inbr{i_2}.\getbr} \conc 
 \ptst{\inbr{i_3}.\getbr} \conc u \conc X''$, 
where $u$ is either a jump instruction or $\outbr.\setbr{1}$, then
$X' \conc 
 \fjmp{l_1} \conc \ntst{\inbr{i_1}.\getbr} \conc 
 \ptst{\inbr{i_2}.\getbr} \conc \fjmp{l_2{-}1} \conc 
 \ptst{\inbr{i_3}.\getbr} \conc u \conc X''$
is also a very good instruction sequence for $\tstnzc{n}$;
\item
if $X$ is of the form
$\ptst{\inbr{i_1}.\getbr} \conc 
 \fjmp{l} \conc \ptst{\inbr{i_2}.\getbr} \conc u \conc X''$, 
where $u$ is either a jump instruction or $\outbr.\setbr{1}$, then
$\ntst{\inbr{i_1}.\getbr} \conc
 \ptst{\inbr{i_2}.\getbr} \conc u \conc X''$
is also a very good instruction sequence for $\tstnzc{n}$;
\item
if $X$ is of the form
$X' \conc 
 \fjmp{l_1} \conc \ptst{\inbr{i_1}.\getbr} \conc 
 \fjmp{l_2} \conc \ptst{\inbr{i_2}.\getbr} \conc u \conc X''$, 
where $u$ is either a jump instruction or $\outbr.\setbr{1}$, then
$X' \conc 
 \fjmp{l_1} \conc \ntst{\inbr{i_1}.\getbr} \conc 
 \ptst{\inbr{i_2}.\getbr} \conc u \conc X''$
is also a very good instruction sequence for $\tstnzc{n}$.
\end{plist}
This claim is easy to check by case distinction on the content of the 
input registers involved, using the fact that, in order to compute 
$\tstnzc{n}$, all occurrences of jump instructions in $X$ immediately
following an occurrence of a read instruction of the form 
$\ptst{\inbr{i}.\getbr}$ must lead to the last but one instruction of
$X$.

Notice the following concerning properties~(a) and~(b): 
for each of the properties~(c)--(g), property~(a) is used to show that 
it holds and, for each of the properties~(d) and~(f), property~(b) is 
used in addition to show that it holds.  

It follows directly from properties~(c)--(g) that:
\begin{plist}
\item
if $n$ is even, then $X$ can be transformed into a very good instruction 
sequence for $\tstnzc{n}$ that is not longer than $X$ and has the form 
$Y \conc \outbr.\setbr{1} \conc \halt$, 
where $Y$ consists of read instruction pairs separated by instructions 
of the form $\fjmp{l}$ with $l \geq 2$;
\item
if $n$ is odd and $n > 1$, then $X$ can be transformed into a very good 
in\-struction sequence for $\tstnzc{n}$ that is not longer than $X$ and 
has the form 
$Y \conc \fjmp{l'} \conc \ptst{\inbr{i}.\getbr} \conc
 \outbr.\setbr{1} \conc \halt$, 
where $l' \geq 2$ and $Y$ consists of read instruction pairs separated 
by instructions of the form $\fjmp{l}$ with $l \geq 2$;
\item
if $n = 1$, then $X$ can be transformed into a very good instruction 
sequence for $\tstnzc{n}$ that is not longer than $X$ and has the form 
$\ptst{\inbr{1}.\getbr} \conc \outbr.\setbr{1} \conc \halt$.
\end{plist}
However, for the instructions of the form $\fjmp{l}$ separating read 
instruction pairs, we have that $l \neq 2$, because the final content 
of $\outbr$ may be independent of the initial content of some input 
registers if $l = 2$.
Moreover, for the instruction of the form $\fjmp{l'}$ immediately 
preceding an isolated read instruction that is the last occurrence of
a read instruction, we have that $l' = 2$, because the final content 
of $\outbr$ may be independent of the initial content of some input 
register if $l' = 1$ and simply wrong if $l' \geq 3$.
Hence, $X$ can be transformed into a very good instruction sequence for 
$\tstnzc{n}$ that is not longer than $X$ and in which $\fjmp{1}$ does 
not occur and $\fjmp{2}$ occurs at most once.
\qed
\end{proof}

\begin{lemma}
\label{lemma-multiple-reads}
Let $n,m \in \Natpos$, and 
let $X$ be a good instruction sequence for $\tstnzc{n}$ with 
$\len(X) = \len(\TSTNZp{n}) + m$. 
Then there are less than $6 \mul m$ input register names that appear in 
more than one occurrence of a read instruction in $X$.
\end{lemma}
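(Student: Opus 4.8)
The plan is to derive the bound from Lemma~\ref{lemma-good-inseqs} by means of register elimination. Write $X = Z \conc \outbr.\setbr{1} \conc \halt$, where the body $Z$ contains only read instructions and forward jump instructions, and let $r$ be the number of read instructions occurring in $X$; then $\len(X) \geq r + 2$. Since $X$ is a good instruction sequence for $\tstnzc{n}$, each of the registers $\inbr{1},\ldots,\inbr{n}$ is read at least once, and by hypothesis $t$ of them are read at least twice, so $r \geq n + t$. The cases $t = 0$ (then $t = 0 < 6m$ because $m \geq 1$) and $t = n$ (then $r \geq 2n$, so $\len(X) \geq 2n + 2$ and $m = \len(X) - \len(\TSTNZp{n}) \geq 2n + 2 - \len(\TSTNZp{n}) \geq (n+1)/2$ by the closed form for $\len(\TSTNZp{n})$, whence $t = n < 3(n+1) \leq 6m$) are immediate, so I would assume $1 \leq t \leq n-1$ from here on.

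Next I would eliminate all $t$ registers that are read more than once in $X$, setting each of them to $0$: let $\alpha$ be the function from the set $D$ of those registers to $\Bool$ with $\alpha(i) = 0$ for all $i \in D$, and put $X' = X_\alpha$; since $\dom(\alpha) = D$ is a proper subset of $\iregs(X) = \set{i \in \Natpos \where i \leq n}$, this is defined. Register elimination leaves the length unchanged, turns every occurrence of a read instruction reading a register in $D$ into $\fjmp{1}$ or $\fjmp{2}$, leaves the tail $\outbr.\setbr{1} \conc \halt$ untouched, and only renames the remaining read instructions; consequently $X'$ is again a good instruction sequence, it computes $\tstnzc{n-t}$ (the restriction of $\tstnzc{n}$ obtained by fixing the $t$ eliminated inputs to $0$), each of the $n-t$ surviving registers is still read exactly once, and $\iregs(X') = \set{i \in \Natpos \where i \leq n-t}$. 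Hence $X'$ is a \emph{very good} instruction sequence for $\tstnzc{n-t}$, and it contains at least $r - (n-t) = r - n + t \geq 2t \geq 2$ occurrences of $\fjmp{1}$ or $\fjmp{2}$, one for each eliminated read occurrence.

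Now I would apply Lemma~\ref{lemma-good-inseqs} to $X'$, with $k$ equal to its number of occurrences of $\fjmp{1}$ or $\fjmp{2}$, so $k \geq r - n + t$. This yields a very good instruction sequence $X''$ for $\tstnzc{n-t}$ with $\len(X'') \leq \len(X') - (k-1) \leq \len(X) - (r - n + t) + 1$, and by Theorem~\ref{theorem-NZTISp-shortest} we have $\len(X'') \geq \len(\TSTNZp{n-t})$. Combining these, using $r \geq n+t$, and using Corollary~\ref{corollary-NZTISp-length} in the form $\len(\TSTNZp{n}) - \len(\TSTNZp{n-t}) \leq (3t+1)/2$, I would obtain $m = \len(X) - \len(\TSTNZp{n}) \geq (\len(\TSTNZp{n-t}) - \len(\TSTNZp{n})) + (r - n + t) - 1 \geq -(3t+1)/2 + 2t - 1 = (t-3)/2$, hence $t \leq 2m + 3$. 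Since $m \geq 1$ gives $2m + 3 < 6m$, the claim $t < 6m$ follows.

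Most of this is routine: that register elimination preserves length, yields a good instruction sequence, and realizes the restriction $\tstnzc{n-t}$ is exactly what is recorded around the definition of $X_\alpha$, and the final arithmetic is elementary given the closed form for $\len(\TSTNZp{n})$ and Corollary~\ref{corollary-NZTISp-length}. The step I expect to need the most care is the verification that $X'$ really meets the hypotheses of Lemma~\ref{lemma-good-inseqs} --- in particular that it is very good \emph{for} $\tstnzc{n-t}$, i.e.\ that $\iregs(X') = \set{i \in \Natpos \where i \leq n-t}$, which hinges on the monotone bijective renaming $\beta$ built into the definition of $X_\alpha$ mapping the $n-t$ surviving registers onto $\set{i \in \Natpos \where i \leq n-t}$ --- so I would isolate that as an explicit intermediate claim before invoking the lemma.
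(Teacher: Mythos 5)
Your proof is correct and follows essentially the same route as the paper's: set the multiply-read input registers to $0$ by register elimination, apply Lemma~\ref{lemma-good-inseqs} to the resulting very good instruction sequence, and compare against the lower bound of Theorem~\ref{theorem-NZTISp-shortest} using Corollary~\ref{corollary-NZTISp-length}. The only difference is bookkeeping: you eliminate all $t$ multiply-read registers and obtain the sharper bound $t \leq 2 \mul m + 3$ directly, whereas the paper eliminates exactly $6 \mul m$ of them and argues by contradiction.
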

\begin{proof}
In the case that $6 \mul m > n$, the result is immediate.
The case that $6 \mul m \leq n$ is proved by contradiction.
Suppose that there are $6 \mul m$ input register names that appear in 
more than one occurrence of a read instruction in $X$. 
Let $R$ be the set of the numbers of these $6 \mul m$ input registers.
Let $\alpha_0$ be the function from $R$ to $\Bool$ defined by 
$\alpha_0(i) = 0$ for all $i \in R$.
The following properties of $X_{\alpha_0}$ are obvious:
(a)~$X_{\alpha_0}$ is a very good instruction sequence for 
$\tstnzc{n - 6 \mul m}$,
(b)~$X_{\alpha_0}$ contains at least $12 \mul m$ occurrences of 
$\fjmp{1}$ or $\fjmp{2}$.
By Lemma~\ref{lemma-good-inseqs}, it follows from these properties that 
$X_{\alpha_0}$ can be transformed into a very good instruction sequence 
for $\tstnzc{n - 6 \mul m}$, say $Y$, that is at least $12 \mul m - 1$ 
primitive instructions shorter than $X_{\alpha_0}$.
So, $\len(Y) \leq \len(\TSTNZp{n}) + m - (12 \mul m - 1)$.
Consider the case that $n$ is even and the case that $n$ is odd.
In both cases, it is easy to calculate that 
$\len(\TSTNZp{n}) + m - (12 \mul m - 1) < \len(\TSTNZp{n - 6 \mul m})$.
In the case that $n$ is even, using that $1 < 2 \mul m$ for 
$m \in \Natpos$, the calculation goes as follows: 
$\len(\TSTNZp{n}) + m - (12 \mul m - 1) = 3 \mul n / 2 - 11 \mul m + 2 <
 3 \mul n / 2 - 9 \mul m + 1 = \len(\TSTNZp{n - 6 \mul m})$.
The calculation for the other case goes similarly.
From $\len(Y) \leq \len(\TSTNZp{n}) + m - (12 \mul m - 1)$ and
$\len(\TSTNZp{n}) + m - (12 \mul m - 1) < \len(\TSTNZp{n - 6 \mul m})$
it follows that $\len(Y) < \len(\TSTNZp{n - 6 \mul m})$.
However, by Theorem~\ref{theorem-NZTISp-shortest}, 
$\len(Y) \geq \len(\TSTNZp{n - 6 \mul m})$. 
Hence, a contradiction.
\qed
\end{proof}

\begin{lemma}
\label{lemma-division-correct}
Let $n \in \Natpos$,
let $X \in \ISbr$ be such that 
$\iregs(X) = \set{i \in \Natpos \where i \leq n}$, and
let $R \subset \iregs(X)$.  
Then $X$ computes $\tstnzc{n}$ if: 
(i)~for the unique function $\alpha_0$ from $R$ to $\Bool$ such that 
$\alpha_0(i) = 0$ for all $i \in R$, $X_{\alpha_0}$ computes 
$\tstnzc{n - \card(R)}$;
(ii)~for each function $\alpha$ from $R$ to $\Bool$ such that 
$\alpha(i) = 1$ for at least one $i \in R$, $X_\alpha$ yields upon 
execution $1$ as final content of $\outbr$ for each combination of 
initial contents of the input registers whose numbers belong to 
$\iregs(X_\alpha)$.
\end{lemma}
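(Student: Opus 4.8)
The plan is to deduce the lemma from the behavioural fact underlying the notation $X_\alpha$, which I would use in the following form. For a function $\alpha$ from a subset $R$ of $\iregs(X)$ to $\Bool$ and for arbitrary values $c_j \in \Bool$ ($j \in \iregs(X) \diff R$), executing $X$ from the standard initial configuration (output register and all auxiliary registers holding $\False$) with $\inbr{i}$ holding $\alpha(i)$ for each $i \in R$ and $\inbr{j}$ holding $c_j$ for each $j \in \iregs(X) \diff R$ produces exactly the same behaviour as executing $X_\alpha$ from the standard initial configuration with $\inbr{\beta(j)}$ holding $c_j$ for each $j \in \iregs(X) \diff R$, where $\beta$ is the order-preserving bijection from $\iregs(X) \diff R$ onto $\set{i \in \Natpos \where i \leq n - \card(R)}$ used in the definition of $X_\alpha$; in particular one of the two executions terminates iff the other does, and then both leave the same final content in $\outbr$. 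Since register elimination alters no instruction mentioning $\outbr$ or an auxiliary register, $X$ and every $X_\alpha$ mention the same auxiliary register names, so I would fix a $k \in \Natpos$ large enough to witness the existential quantifier over auxiliary registers in the definition of ``computes'' as used in condition~(i) and also at least the largest auxiliary-register index occurring in $X$; because auxiliary registers that are initialised to $\False$ and never referenced do not affect execution, this single $k$ can be used in both condition~(i) and condition~(ii), and it is this $k$ that I would exhibit as a witness for ``$X$ computes $\tstnzc{n}$''.

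Next I would take arbitrary $b_1,\ldots,b_n \in \Bool$ and consider the execution of $X$ with $\inbr{i}$ holding $b_i$ for $1 \leq i \leq n$ and the standard initial configuration otherwise; write $\alpha^\ast$ for the restriction of this assignment to $R$, i.e.\ $\alpha^\ast(i) = b_i$ for $i \in R$. If $b_i = 0$ for every $i \in R$, then $\alpha^\ast = \alpha_0$, and by the behavioural fact (with $c_j = b_j$) the execution of $X$ mirrors an execution of $X_{\alpha_0}$ in which each $\inbr{\beta(j)}$ holds $b_j$; by condition~(i), $X_{\alpha_0}$ computes $\tstnzc{n - \card(R)}$, so that execution terminates and leaves $\outbr$ holding $1$ iff $b_j = 1$ for some $j \in \iregs(X) \diff R$, which, since the remaining $b_i$ are all $0$, holds iff $b_i = 1$ for some $i \in \set{1,\ldots,n}$; hence the final content of $\outbr$ is $\tstnzc{n}(b_1,\ldots,b_n)$. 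If instead $b_i = 1$ for some $i \in R$, then $\alpha^\ast$ takes the value $1$ somewhere on $R$, and by the behavioural fact the execution of $X$ mirrors an execution of $X_{\alpha^\ast}$ on an input to the registers in $\iregs(X_{\alpha^\ast}) = \set{i \in \Natpos \where i \leq n - \card(R)}$; by condition~(ii) that execution terminates with $1$ as final content of $\outbr$, and $\tstnzc{n}(b_1,\ldots,b_n) = 1$ because $b_i = 1$ for some $i$. In both cases the execution of $X$ terminates with the correct final content of $\outbr$, so, $b_1,\ldots,b_n$ being arbitrary, $X$ computes $\tstnzc{n}$ with $k$ auxiliary registers.

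I expect the only point that requires care to be the precise statement and brief justification of the behavioural fact in the first paragraph: tracking the renaming of the surviving input registers through the bijection $\beta$, and checking that a single auxiliary-register count can be chosen uniformly for the finitely many functions $\alpha$ that occur. This is exactly the register-elimination reasoning already used in Sections~\ref{sect-ISNZ-shortest} and~\ref{sect-ISNZ-shortest-all}, so no new machinery is needed; once it is available, the lemma follows from the two-line case analysis above on whether $R$ receives the all-zero assignment.
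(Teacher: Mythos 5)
Your argument is correct and is essentially the paper's own proof written out in full: the paper dismisses this lemma with ``This follows directly from the definition of $\tstnzc{n}$,'' and what it leaves implicit is exactly your case split on whether the registers in $R$ all hold $0$ (handled by condition~(i)) or some hold $1$ (handled by condition~(ii) together with the fact that $\tstnzc{n}$ is then forced to be $1$), mediated by the standard semantics of register elimination. Your additional care about the uniform choice of the auxiliary-register count $k$ is a genuine detail the paper glosses over, but it does not change the route.
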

\begin{proof}
This follows directly from the definition of $\tstnzc{n}$.
\qed
\end{proof}

\begin{lemma}
\label{lemma-very-good-always-1}
The problem of deciding whether an $X \in \ISbr$ such that $X$ is a very
good instruction sequence and $\len(X) = n$ yields upon execution $1$ as 
final content of $\outbr$ for each combination of initial contents of 
the input registers whose numbers belong to $\iregs(X)$, for 
$n \in \Natpos$, can be solved in $O(n)$ time on a RAM.
\end{lemma}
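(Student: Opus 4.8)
The plan is to exploit two structural features of a very good instruction sequence $X$ with $\len(X) = n$. Writing $X$ as $u_1 \conc \ldots \conc u_n$, the definition of a very good instruction sequence gives that $u_n$ is $\halt$, $u_{n-1}$ is $\outbr.\setbr{1}$, and $u_1,\ldots,u_{n-2}$ are read instructions or forward jump instructions $\fjmp{l}$ with $l > 0$; in particular, $\outbr.\setbr{1}$ is the only primitive instruction of $X$ that writes $\outbr$, and the initial content of $\outbr$ is $0$. Hence execution of $X$ yields $1$ as the final content of $\outbr$ precisely when $u_{n-1}$ gets executed --- after which $\halt$ is reached immediately --- whereas in every other case execution either reaches $\halt$ via a forward jump with $\outbr$ still equal to $0$ or inaction occurs. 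The second feature is that, since every jump in $X$ is a forward jump, no primitive instruction of $X$ is executed more than once, so an execution of $X$ visits a strictly increasing sequence of positions.

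First I would make this precise by introducing the directed graph $G$ on the set of positions $\set{p \in \Natpos \where p \leq n}$ that has: from each position $p$ with $1 \leq p \leq n-2$ at which $u_p$ is a read instruction, an edge to $p+1$ and an edge to $p+2$; from each position $p$ at which $u_p$ is $\fjmp{l}$, an edge to $p+l$ if $p+l \leq n$ and no outgoing edge otherwise; and an edge from $n-1$ to $n$. By the second feature, every execution of $X$ follows a maximal path of $G$ starting at position $1$. Conversely --- and this is where the \emph{very good} hypothesis is essential --- every maximal path of $G$ starting at $1$ is followed by the execution of $X$ for some combination of initial contents of the input registers: since the input register names occurring in the read instructions of $X$ are pairwise distinct, one may independently fix the content of each input register read along the path so as to force the branch the path takes at the corresponding read instruction, and fill in the contents of the remaining input registers arbitrarily. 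Combining this with the first feature, $X$ yields $1$ as the final content of $\outbr$ for every combination of initial contents of the input registers whose numbers belong to $\iregs(X)$ if and only if every maximal path of $G$ starting at $1$ passes through position $n-1$.

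The algorithm then decides this last property by computing, for $p = n, n-1, \ldots, 1$ in that order, a bit $w(p)$ intended to mean that every maximal path of $G$ starting at $p$ passes through position $n-1$: put $w(n) = 0$ and $w(n-1) = 1$, and, for $1 \leq p \leq n-2$, put $w(p) = w(p+1) \Land w(p+2)$ if $u_p$ is a read instruction, $w(p) = w(p+l)$ if $u_p$ is $\fjmp{l}$ with $p+l \leq n$, and $w(p) = 0$ if $u_p$ is $\fjmp{l}$ with $p+l > n$; the answer to the decision problem is then ``yes'' if and only if $w(1) = 1$. That $w(p)$ has the intended meaning follows by a routine induction on $n - p$. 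Since $X$ is supplied with each of its primitive instructions represented so as to be inspectable in constant time, each of the $n$ steps of this computation --- inspecting one primitive instruction and reading at most two previously computed bits --- takes $O(1)$ time on a RAM, so the whole computation runs in $O(n)$ time.

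The step I expect to be the main obstacle is the correctness argument of the second paragraph, namely that quantifying over all combinations of initial contents of the input registers coincides with quantifying over all maximal paths of $G$ starting at position $1$. The ``forward jumps only'' part of the definition gives the easy direction, that every execution of $X$ follows such a path; the reverse direction, realizability of every such path by some combination of initial contents, genuinely relies on the ``no input register read more than once'' part, which makes the branch choices at distinct read instructions along a path independent. Without it, $G$ would merely over-approximate the set of possible executions and the reduction would break down.
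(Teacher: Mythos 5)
Your proof is correct and takes essentially the same approach as the paper: the paper computes, by a single linear pass, for each suffix $X^k$ of $X$ whether it always yields $1$ as final content of $\outbr$, using exactly the recurrences your $w(p)$ satisfies (with position $p$ corresponding to suffix length $n-p+1$), and the same base cases. Your explicit path-realizability argument via the graph $G$ spells out the point the paper leaves to ``a case distinction on the content of $\inbr{i}$'', namely that the very-good hypothesis is what makes the branch choices at distinct read instructions independent, so the two quantifications coincide.
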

\begin{proof}
Let $X^k$, for $1 \leq k \leq n$, be the suffix of $X$ whose length is 
$k$.
Clearly, for $k > 2$, $X^k$ is a very good instruction sequence.
Moreover, we know that execution of $X^1$ never yields $1$ as final 
content of $\outbr$ and execution of $X^2$ always yields $1$ as final 
content of $\outbr$.
Below, we use the convention that $X^k = \halt$ for all $k \leq 0$.

Let $k > 1$ and $l > 0$.
Then: 
\begin{plist}
\item
if $X^{k+1} = \ptst{\inbr{i}.\getbr} \conc X^k$, then execution of 
$X^{k+1}$ always yields $1$ as final content of $\outbr$ iff both 
execution of $X^k$ and execution of $X^{k-1}$ always do so;
\item
if $X^{k+1} = \ntst{\inbr{i}.\getbr} \conc X^k$, then execution of 
$X^{k+1}$ always yields $1$ as final content of $\outbr$ iff both 
execution of $X^k$ and execution of $X^{k-1}$ always do so;
\item
if $X^{k+1} = \fjmp{l} \conc X^k$, then execution of $X^{k+1}$ always 
yields $1$ as final content of $\outbr$ iff execution of $X^{k+1-l}$ 
always does so.
\end{plist}
The first two implications from left to right are easily proved by a 
case distinction on the content of $\inbr{i}$.
It is trivial to prove the last implication from left to right and the 
three implications from right to left. 

Now, let $S$ be the set of all $k \geq 1$ for which $X^k$ always yields 
$1$ as final content of $\outbr$.
Then, using the above bi-implications, the $k$'s which belong to $S$ can 
be determined in increasing order as follows:
\begin{plist}
\item
$1 \notin S$;
\item
$2 \in S$;
\item
if $X^{k+1} = \ptst{\inbr{i}.\getbr} \conc X^k$, then $k+1 \in S$ iff 
$k \in S$ and $k-1 \in S$;
\item
if $X^{k+1} = \ntst{\inbr{i}.\getbr} \conc X^k$, then $k+1 \in S$ iff 
$k \in S$ and $k-1 \in S$;
\item
if $X^{k+1} = \fjmp{l} \conc X^k$ and $l \leq k$, then $k+1 \in S$ iff 
$k+1-l \in S$;
\item
if $X^{k+1} = \fjmp{l} \conc X^k$ and $l > k$, then $k+1 \notin S$.
\end{plist}
Execution of $X$ always yields $1$ as final content of $\outbr$ iff  
$n$ is the last $k$ for which it can be determined in this way that it 
belongs to $S$. 
Clearly, this can be determined in $O(n)$ time if the set $S$ is 
represented by its characteristic function.
\qed
\end{proof}

\begin{lemma}
\label{lemma-very-good-correct}
The problem of deciding whether an $X \in \ISbr$ such that $X$ is a very
good instruction sequence, $\len(X) = \len(\TSTNZp{n}) + m$, and
$\iregs(X) = \set{i \in \Natpos \where i \leq n}$ computes $\tstnzc{n}$, 
for $n,m \in \Natpos$, can be solved in $O(n \mul (n + m))$ time on a 
RAM.
\end{lemma}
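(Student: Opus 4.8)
The plan is to reduce the question whether $X$ computes $\tstnzc{n}$ to $n+1$ simpler questions that can each be answered in $O(n+m)$ time, by conditioning on the least index of an input register whose initial content is~$1$. Concretely, I would first show that $X$ computes $\tstnzc{n}$ if and only if (I)~execution of $X$ with $0$ as the initial content of each of $\inbr{1},\ldots,\inbr{n}$ terminates with $0$ as the final content of $\outbr$, and (II)~for each $i$ with $1 \leq i \leq n$, execution of $X$ with $0$ as the initial content of each of $\inbr{1},\ldots,\inbr{i-1}$, $1$ as the initial content of $\inbr{i}$, and arbitrary initial contents of $\inbr{i+1},\ldots,\inbr{n}$ terminates with $1$ as the final content of $\outbr$. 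The implication from left to right is immediate because $\tstnzc{n}(0,\ldots,0) = 0$ and $\tstnzc{n}(b_1,\ldots,b_n) = 1$ as soon as some $b_j = 1$; for the implication from right to left, an all-zero input is handled by~(I) and every other input by~(II) with $i$ the least index such that $b_i = 1$. (For $n > 1$ this is in essence Lemma~\ref{lemma-division-correct} with $R = \set{j \in \Natpos \where j \leq n-1}$, after grouping the functions $\alpha$ occurring there according to the least $i$ with $\alpha(i) = 1$ and using that the converse of that lemma is trivial; for $n = 1$ it is just the definition of $\tstnzc{1}$.)

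Next I would spell out the algorithm. Because every primitive instruction of an instruction sequence is executed at most once, a single simulation of an execution of $X$ takes $O(\len(X))$ steps, and $\len(X) = \len(\TSTNZp{n}) + m = O(n+m)$. Condition~(I) is then checked by one such simulation. For condition~(II) with a fixed $i$, I distinguish two cases. If $i = n$, it is checked by a single simulation of an execution of $X$ with $0$ as the initial content of $\inbr{1},\ldots,\inbr{n-1}$ and $1$ as the initial content of $\inbr{n}$. If $i < n$, let $\alpha_i$ be the function from $\set{j \in \Natpos \where j \leq i}$ to $\Bool$ with $\alpha_i(j) = 0$ for $j < i$ and $\alpha_i(i) = 1$, and let $W_i$ be $X_{\alpha_i}$; here $\dom(\alpha_i)$ is a proper subset of $\iregs(X)$, so $W_i$ is defined, $W_i$ can be formed from $X$ in $O(\len(X))$ time, $\len(W_i) = \len(X)$, and, by the definition of register elimination, condition~(II) for~$i$ holds if and only if execution of $W_i$ yields $1$ as the final content of $\outbr$ for every combination of initial contents of the input registers whose numbers belong to $\iregs(W_i)$.

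The heart of the proof --- and the step I expect to need the most care --- is the observation that $W_i$ is again a very good instruction sequence. This is because register elimination replaces the read instructions for $\inbr{1},\ldots,\inbr{i}$ by forward jump instructions $\fjmp{1}$ and $\fjmp{2}$ and renames the register names $\inbr{i+1},\ldots,\inbr{n}$ to $\inbr{1},\ldots,\inbr{n-i}$, leaving the concluding $\outbr.\setbr{1} \conc \halt$ untouched and preserving the property that no input register name appears in more than one read instruction. Consequently Lemma~\ref{lemma-very-good-always-1} applies to $W_i$ and shows that condition~(II) for a fixed $i < n$ can be checked in $O(\len(W_i)) = O(n+m)$ time.

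Putting the pieces together, checking~(I) takes $O(n+m)$ time and checking~(II) takes $n \mul O(n+m) = O(n \mul (n+m))$ time, so the whole problem is solved in $O(n \mul (n+m))$ time on a RAM. Everything besides this reduction is routine bookkeeping; the point of the reduction is precisely to avoid the naive approach of inspecting all $2^n$ combinations of initial contents of $\inbr{1},\ldots,\inbr{n}$, which would be exponentially too slow.
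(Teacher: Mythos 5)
Your proposal is correct and follows essentially the same route as the paper: one simulation on the all-zero input, then $n$ applications of Lemma~\ref{lemma-very-good-always-1} to register-eliminated instruction sequences (which, as you rightly emphasize, remain very good instruction sequences), giving $O(n \mul (n+m))$ in total. The only difference is cosmetic --- you partition the non-zero inputs by the \emph{least} index set to $1$, whereas the paper covers them by checking $X[\inbr{i}=1]$ for each $i$ separately; both yield the same $n+1$ subproblems of cost $O(n+m)$ each.
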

\begin{proof}
Consider the following procedure:
\par
\begin{pplist}
\item[step 1:] 
determine whether execution of $X$ yields $0$ as the final content of 
$\outbr$ if the initial content of each input register whose numbers 
belong to $\iregs(X)$ is $0$; if this is the case, then go on with 
step~2; otherwise $X$ does not compute $\tstnzc{n}$ and we are finished;
\item[step 2:] 
determine, for each $i \in \iregs(X)$, whether execution of 
$X[\inbr{i} = 1]$ yields $1$ as the final content of $\outbr$ for each 
combination of initial contents of the input registers whose numbers 
belong to $\iregs(X) \diff \set{i}$; if 
this is the case, then $X$ computes $\tstnzc{n}$ and we are finished;
otherwise $X$ does not compute $\tstnzc{n}$ and we are finished.
\end{pplist}
Clearly, together steps~1 and~2 cover all combinations of the initial 
contents of the input registers whose numbers belong to $\iregs(X)$.
Step~1 can be done in $O(n + m)$ time.
By Lemma~\ref{lemma-very-good-always-1}, per $i$, step~2 can also be 
done in $O(n + m)$ time.
Because $\card(\iregs(X)) = n$, step~2 as a whole can be done in 
$O(n \mul (n + m))$ time.
Consequently, the whole procedure can be done in $O(n \mul (n + m))$ 
time.
\qed
\end{proof}

\begin{theorem}
\label{theorem-complexity-shortest-const-2}
The problem of deciding whether an $X \in \ISbr$ such that $X$ is a good 
instruction sequence, $\len(X) = \len(\TSTNZp{n}) + m$, and
$\iregs(X) = \set{i \in \Natpos \where i \leq n}$ computes $\tstnzc{n}$, 
for $n,m \in \Natpos$, can be solved in 
$O(2^{6 \mul m} \mul (n + m) + n \mul (n + m))$ time on a RAM.
\end{theorem}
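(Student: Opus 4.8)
The plan is to decide correctness of $X$ by splitting the correctness check along the few input registers that are read more than once, and then to dispatch the two resulting subproblems to the machinery already prepared. First I would dispose of the case $n \leq 6 \mul m$ on its own: here one simply tries out, for all $2^n \leq 2^{6 \mul m}$ combinations of initial contents of $\inbr{1},\ldots,\inbr{n}$, whether execution of $X$ yields $\tstnzc{n}$ of that combination as final content of $\outbr$, each trial taking $O(n + m)$ steps, which already fits within the claimed bound. So from now on I assume $n > 6 \mul m$.

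Scanning $X$ once (in $O(n + m)$ time, using an array indexed by register number to count occurrences of read instructions), I would compute the set $R$ of numbers of input registers that appear in more than one occurrence of a read instruction in $X$. By Lemma~\ref{lemma-multiple-reads}, $\card(R) < 6 \mul m$, hence, since $n > 6 \mul m$, $R$ is a proper subset of $\iregs(X)$, so Lemma~\ref{lemma-division-correct} is applicable with this $R$. That lemma, together with the elementary converse supplied by the register-elimination facts from Section~\ref{sect-PGA} (restricting input registers of a correct instruction sequence yields a correct instruction sequence for the restricted function), reduces the question whether $X$ computes $\tstnzc{n}$ to the conjunction of: (i)~$X_{\alpha_0}$ computes $\tstnzc{n - \card(R)}$, where $\alpha_0$ is the all-zero function on $R$; and (ii)~for each of the $2^{\card(R)} - 1 < 2^{6 \mul m}$ functions $\alpha$ from $R$ to $\Bool$ with $\alpha(i) = 1$ for at least one $i \in R$, $X_\alpha$ yields upon execution $1$ as final content of $\outbr$ for every combination of initial contents of the registers in $\iregs(X_\alpha)$.

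For (i): register elimination preserves length and turns $X$ into a very good instruction sequence $X_{\alpha_0}$ with $\iregs(X_{\alpha_0}) = \set{i \in \Natpos \where i \leq n - \card(R)}$ and $\len(X_{\alpha_0}) = \len(\TSTNZp{n}) + m$; writing this as $\len(\TSTNZp{n - \card(R)}) + m'$, Corollary~\ref{corollary-NZTISp-length} gives $m \leq m' \leq m + 2 \mul \card(R)$, so $m' \in \Natpos$ and $m' = O(m)$, and Lemma~\ref{lemma-very-good-correct} decides (i) in $O((n - \card(R)) \mul (n - \card(R) + m')) = O(n \mul (n + m))$ time. For (ii): every register in $\iregs(X) \diff R$ is read exactly once in $X$, so each $X_\alpha$ is again a very good instruction sequence, of length $\len(X) = O(n + m)$, and Lemma~\ref{lemma-very-good-always-1} decides each of the fewer than $2^{6 \mul m}$ instances of (ii) in $O(n + m)$ time, for a total of $O(2^{6 \mul m} \mul (n + m))$ time. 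Summing the costs of the case split, of computing $R$ and the $X_\alpha$, and of (i) and (ii) yields the stated bound $O(2^{6 \mul m} \mul (n + m) + n \mul (n + m))$.

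The step I expect to be most delicate is the bookkeeping for (i): one must verify that after eliminating all registers in $R$ the result is genuinely a very good instruction sequence whose input registers are exactly $\inbr{1},\ldots,\inbr{n-\card(R)}$ (this relies on the renaming built into the notation $X_\alpha$ and on $\card(R) < n$, which is why the case $n \leq 6 \mul m$ was separated out), and that its length, measured relative to $\len(\TSTNZp{n - \card(R)})$, has a positive excess that is still $O(m)$ --- both of which follow from Corollary~\ref{corollary-NZTISp-length} but must be made explicit so that Lemma~\ref{lemma-very-good-correct} truly applies. Establishing the ``only if'' half of the reduction --- that $X$ computing $\tstnzc{n}$ forces both (i) and (ii), which is what makes the procedure answer ``no'' correctly --- is routine from the behaviour of register elimination already described, but should be stated.
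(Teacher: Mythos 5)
Your procedure is essentially the one in the paper: compute the set $R$ of input registers read more than once, use the all-zero elimination $X_{\alpha_0}$ together with Lemma~\ref{lemma-very-good-correct} for the ``all registers in $R$ contain $0$'' part, enumerate the remaining assignments $\alpha$ and apply Lemma~\ref{lemma-very-good-always-1} to each $X_\alpha$, and glue the answers together with Lemma~\ref{lemma-division-correct} (plus its easy converse via register elimination). The bookkeeping you single out for part~(i) --- that $X_{\alpha_0}$ is very good, has input registers $\inbr{1},\ldots,\inbr{n-\card(R)}$, and has excess length $m'$ with $m \leq m'$ and $m' = O(m)$ by Corollary~\ref{corollary-NZTISp-length} --- is correct and is in fact spelled out more explicitly than in the paper, which disposes of it with ``by their construction''.

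There is, however, one genuine flaw: the assertion ``by Lemma~\ref{lemma-multiple-reads}, $\card(R) < 6 \mul m$'' is circular. That lemma is stated for a good instruction sequence \emph{for} $\tstnzc{n}$, i.e., its hypothesis includes that $X$ computes $\tstnzc{n}$ --- the very property the algorithm is supposed to decide. For an arbitrary input $X$ that happens to be a no-instance, $\card(R)$ can be as large as $n$, in which case your enumeration in (ii) ranges over up to $2^{n} - 1$ assignments and the claimed time bound fails; worse, if $R = \iregs(X)$, then Lemma~\ref{lemma-division-correct} (which requires $R \subset \iregs(X)$) and the register-elimination notation $X_{\alpha_0}$ are not even applicable. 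The repair is exactly the paper's first step: after computing $R$, test whether $\card(R) < 6 \mul m$, and if not, answer ``no'' immediately, which is sound by the contrapositive of Lemma~\ref{lemma-multiple-reads}. With that one test inserted (and keeping your separate brute-force treatment of $n \leq 6 \mul m$, which is fine and sidesteps the $R = \iregs(X)$ edge case), the proposal is correct and matches the paper's argument.
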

\begin{proof}
Consider the following procedure:
\par
\begin{pplist}
\item[step 1:] 
determine the subset $R$ of $\set{k \in \Natpos \where k \leq n}$ such 
that $i \in R$ iff $\inbr{i}$ appears in more than one occurrence of a 
read instruction in $X$; if $\card(R) < 6 \mul m$, then go on with 
step~2; otherwise, by Lemma~\ref{lemma-multiple-reads}, $X$ does not 
compute $\tstnzc{n}$ and we are finished; 
\item[step 2:] 
for the unique function $\alpha_0$ from $R$ to $\Bool$ such that 
$\alpha_0(i) = 0$ for all $i \in R$, determine whether $X_{\alpha_0}$ 
computes $\tstnzc{n - \card(R)}$; if this is the case, then go on with 
step~3; otherwise $X$ does not compute $\tstnzc{n}$ and we are finished;
\item[step 3:] 
for each function $\alpha$ from $R$ to $\Bool$ such that $\alpha(i) = 1$ 
for at least one $i \in R$, determine whether $X_\alpha$ yields upon 
execution $1$ as final content of $\outbr$ for each combination of 
initial contents of the input registers whose numbers belong to 
$\iregs(X_\alpha)$; if this is the case, then, by 
Lemma~\ref{lemma-division-correct}, $X$ computes $\tstnzc{n}$ and we are 
finished; otherwise $X$ does not compute $\tstnzc{n}$ and we are 
finished.
\end{pplist}
Clearly, step~1 can be done in $O(n + m)$ time.
By their construction, $X_{\alpha_0}$ and all $X_\alpha$'s are very good 
instruction sequences.
Hence, by Lemma~\ref{lemma-very-good-correct}, step 2 can be done in 
$O(n \mul (n + m))$ time and, by Lemma~\ref{lemma-very-good-always-1}, 
per $\alpha$, step~3 can be done in $O(n + m)$ time.
Because there may be $2^{6 \mul m} - 1$ $\alpha$'s, step~3 as a whole 
can be done in $O(2^{6 \mul m} \mul (n + m))$ time.
Consequently, the whole procedure can be done in 
$O(2^{6 \mul m} \mul (n + m) + n \mul (n + m))$ time.
\qed
\end{proof}
If a problem can be solved in 
$O(2^{6 \mul m} \mul (n + m) + n \mul (n + m))$ time, then it can be 
solved in $O(2^{7 \mul m} \mul n^2)$ time.
This justifies the statement that the problem mentioned in 
Theorem~\ref{theorem-complexity-shortest-const-2} can be solved in time 
quadratic in $n$ and exponential in $m$ on a RAM.
It is an open question whether
Theorem~\ref{theorem-complexity-shortest-const-2} goes through if the 
restriction to good instruction sequences is dropped.

The following result is a corollary of the proof of 
Theorem~\ref{theorem-complexity-shortest-linear} and the remark about 
that proof made following it.
\begin{corollary}
\label{corollary-complexity-shortest-linear}
Let $q \in \Rat$ be such that $q > 0$ and $m \in \Nat$. 
Then the problem of deciding whether an $X \in \ISbr$ such that 
$X$ is a good instruction sequence and
$\len(X) \leq \len(\TSTNZp{n}) + \ceil{q \mul n} + m$ computes 
$\tstnzc{n}$, for $n \in \Natpos$, is co-NP-complete.
\end{corollary}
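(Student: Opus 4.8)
The plan is to keep the co-NP membership part of the proof of Theorem~\ref{theorem-complexity-shortest-linear} unchanged and to replace the reduction $\Psi$ used there by a reduction $\Psi'$ whose values are \emph{good} instruction sequences, along the lines of the remark made just after that theorem. For membership in co-NP: restricting the instances to good instruction sequences only shrinks the problem, and goodness together with the length bound are checkable in polynomial time, so a tuple $b_1,\ldots,b_n \in \Bool$ witnessing that $X$ does not compute $\tstnzc{n}$ is still a polynomial-size certificate, verifiable by simulating the execution of $X$ on $b_1,\ldots,b_n$ in time linear in $\len(X)$, hence polynomial in $n$. So all the work is in the co-NP-hardness direction, where I would reduce $\coSAT$, the complement of $\SAT$, to the problem at hand by a polynomial-time transformation $\Psi'$ mapping propositions to good instruction sequences.

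For a proposition $P$, write $n$ for its length and $f$ for the truth function expressed by $P$; note that $P$ has at most $n$ variables, so $f$ depends only on $\inbr{1},\ldots,\inbr{n}$. Let $\Phi$ be the formula-to-instruction-sequence transformation of~\cite{BM13a} used in the proof of Theorem~\ref{theorem-complexity-shortest-linear} and let $\Phi^*(P)$ be its body; we may assume $\Phi$ is chosen so that $\Phi^*(P)$ contains only read instructions and forward jump instructions, and we let $c \in \Natpos$ be such that $\len(\Phi^*(P)) < c \mul n$. Fix an even constant $c''$, depending only on $q$, to be chosen below, and set $n' = c'' \mul n$. I would build $\Psi'(P)$ as $X'' \conc \outbr.\setbr{1} \conc \halt$ with $X''$ consisting of three parts built from read instructions and forward jumps only, so that $\Psi'(P)$ is good: Part~1 is $\TSTNZp{n'}$ with every occurrence of $\outbr.\setbr{1}$ replaced by a forward jump to the start of Part~2 and the trailing $\halt$ replaced by a forward jump to the start of Part~3, so Part~1 reads $\inbr{1},\ldots,\inbr{n'}$ and transfers control to Part~2 when some input register holds $1$ and to Part~3 when all of them hold $0$, and has the same length as $\TSTNZp{n'}$; Part~2 and Part~3 are two copies of $\Phi^*(P)$, each wrapped in a constant amount of forward-jump glue so that Part~2 (entered when a $1$ was found) transfers control to the final $\outbr.\setbr{1}$ on inputs where $f$ is $0$ and to the final $\halt$ where $f$ is $1$, while Part~3 (entered when all inputs were $0$) transfers control the other way round. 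Clearly $\Psi'$ is polynomial-time computable.

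The two remaining obligations are correctness and the length bound. For correctness, a case analysis over ``some input register holds $1$'' versus ``all hold $0$'' and ``$f$ is $1$'' versus ``$f$ is $0$'' on the current input shows that the final content of $\outbr$ equals $\tstnzc{n'}$ of that input when $f$ is $0$ there and its complement when $f$ is $1$ there; hence $\Psi'(P)$ computes $\tstnzc{n'}$ iff $f$ is identically $0$, i.e.\ iff $\coSAT(P)$ --- and when $\coSAT(P)$ fails the required wrong input is a satisfying assignment of $P$ padded with zeros (if it has a $1$ among its first $n$ bits) or the all-zeros assignment (otherwise, in which case $f(0,\ldots,0) = 1$). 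For the length bound, $\len(\Psi'(P)) = \len(\TSTNZp{n'}) + 2 \mul \len(\Phi^*(P)) + O(1) \leq \len(\TSTNZp{n'}) + 2 \mul c \mul n + O(1)$, so it suffices to choose $c''$ large enough that $\ceil{q \mul c'' \mul n} \geq 2 \mul c \mul n + O(1)$ for all $n \in \Natpos$; this is possible because $q$ is a fixed positive rational, and it works for every $m \in \Nat$, including $m = 0$, which is why the restriction $m > 3$ of Theorem~\ref{theorem-complexity-shortest-linear} drops away. Then $\Psi'(P)$ is an instance of the problem under consideration and $\coSAT(P)$ iff $\Psi'(P)$ computes $\tstnzc{n'}$, which completes the reduction.

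I expect the main obstacle to be Part~2 and Part~3: evaluating $P$ inside a good instruction sequence --- with read instructions and forward jumps only, no writes, no backward jumps and no auxiliary registers --- while keeping the size linear in the length of $P$, and wiring the two exit targets of the two copies so that the overall behaviour is exactly ``$\outbr$ ends up holding $\tstnzc{n'}$ precisely when $f$ is identically $0$''. Part~1, the length accounting, and the co-NP membership are routine adaptations of the corresponding parts of the proof of Theorem~\ref{theorem-complexity-shortest-linear}.
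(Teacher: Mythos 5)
Your proposal is correct and follows essentially the same route as the paper: the paper obtains this corollary from the proof of Theorem~\ref{theorem-complexity-shortest-linear} together with the remark immediately following it, which describes exactly your construction --- eliminate the auxiliary register of $\Psi$ by branching on the non-zeroness outcome first and then using two copies of $\Phi^*(P)$ with forward-jump glue, so that the reduction produces good instruction sequences. Your further observation that the constant additive overhead can be absorbed into the $\ceil{q \mul n}$ term by enlarging the padding factor, which is why the restriction $m > 3$ of the theorem disappears in the corollary, correctly accounts for the weaker hypothesis on $m$.
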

So, the problem of determining whether an instruction sequence from 
$\ISbr$ whose length is $\len(\TSTNZp{n})$ plus an amount that depends 
linearly on $n$ correctly implements the non-zeroness test function 
$\tstnzc{n}$ remains co-NP-complete if the instruction sequences are 
restricted to good instruction sequences.
It can easily be shown that this problem can be solved in time 
polynomial in $n$ on a RAM if the instruction sequences are restricted 
to very good instruction sequences.
Hence, if no input register name may appear in more than one occurrence 
of a read instruction, then we get a better result. 

\section{Concluding Remarks}
\label{sect-concl}

Within the context of finite instruction sequences that contain only 
instructions to set and get the content of Boolean registers, forward 
jump instructions, and a termination instruction, we have investigated 
under what restrictions on these instruction sequences the correctness 
of an arbitrary instruction sequence as an implementation of the 
restriction to $\Bool^n$ of the function from $\Bool^*$ to $\Bool$ that 
models the non-zeroness test function on natural numbers with respect to 
their binary representations, for $n > 0$, can be efficiently 
determined. 
We expect that results similar to the main results established for this 
function, i.e., Theorems~\ref{theorem-complexity-shortest},
\ref{theorem-complexity-shortest-linear}, 
and~\ref{theorem-complexity-shortest-const-2}, can be established for 
many other functions, but also that finding such results is a 
challenging problem.
To our knowledge, the idea of looking for such results is new.

An important step in establishing the main results has been the 
determination of a shortest instruction sequence that correctly 
implements the function that models the non-zeroness test on natural 
numbers less than $2^n$ (cf.\ Theorem~\ref{theorem-NZTISp-shortest}).
In~\cite{BM13a}, an approach to computational complexity is presented in
which instruction sequence size is used as complexity measure.
The step just mentioned, provides a lower bound (in fact the greatest 
lower bound) for the instruction sequence size complexity of this 
function.
Moreover, it provides answers to concrete questions like ``what is the 
length of the shortest instruction sequence that correctly implements 
the function that models the non-zeroness test on natural numbers less 
than $2^{64}$?''.

The work presented in this paper concerns the question to what extent it 
can be efficiently determined, for $n > 0$, whether the restriction to 
$\Bool^n$ of a given function from $\Bool^*$ to $\Bool$ is correctly 
implemented by an arbitrary instruction sequence from a set by which, 
for all $m > 0$, all functions from $\Bool^m$ to $\Bool$ can be 
computed.
To our knowledge there is no previous work related to programming that 
addresses a question similar to this one.
Because each function from $\Bool^m$ to $\Bool$ can be computed by a 
Boolean circuit as well, there is of course the question to what extent 
it can be efficiently determined, for $n > 0$, whether the restriction 
to $\Bool^n$ of a given function from $\Bool^*$ to $\Bool$ is correctly 
implemented by an arbitrary Boolean circuit.

There are computational problems concerning Boolean circuits that can be 
viewed as correctness problems.
For example, the complement of the satisfiability problem for Boolean 
circuits (see e.g.~\cite{Sip13a}) can be viewed as the problem of 
determining whether an arbitrary Boolean circuit correctly implements 
the restriction to $\Bool^n$ of the function from $\Bool^*$ to $\Bool$ 
whose value is constantly $0$, for $n > 0$.
However, as far as we know, there is no work on Boolean circuits that 
pays attention to limitations on Boolean circuits under which problems 
concerning Boolean circuits that can be viewed as correctness problems 
can be solved in polynomial time. 
In other words, to our knowledge, there is no work on Boolean circuits 
that addresses a question related to the one addressed in this paper.

In~\cite{BM14e}, it was shown that, for the parity function, shortest
correct instruction sequences require the use of auxiliary registers.
This is not the case for the function that models the non-zeroness test
(cf.\ Corollary~\ref{corollary-correct-aux}).
In~\cite{BM14e}, in addition to the commands $\setbr{0}$, $\setbr{1}$, 
and $\getbr$, the command $\negbr$ is used.
This command serves for complementing the content of an auxiliary 
register.
In~\cite{BM15a}, where $\negbr$ is denoted by $\mathrm{c/c}$, it is 
shown that this command is not needed for functional completeness, but 
that its addition gives sometimes rise to shorter instruction sequences.
In the current paper, using $\negbr$ would have only one consequence: 
it would make it possible to drop the restriction to $m > 3$ in 
Theorem~\ref{theorem-complexity-shortest-linear}.

\section*{Acknowledgements}

We thank two anonymous referees for carefully reading a preliminary 
version of this paper, for pointing out several flaws in it, and for 
suggesting improvements of the presentation.

\bibliographystyle{splncs03}
\bibliography{IS}

\end{document}